\newdimen\@InsertBoxMargin
\def\ParShape{%
    \@numlines = 0
    \def\@parshapedata{ }% here we'll collect data for plain \parshape
    \afterassignment\@beginParShape
    \@linesleft
}%
\def\@beginParShape{%
    \ifnum \@linesleft = 0
      \let\@whatnext = \@endParShape
    \else
      \let\@whatnext = \@readnextline
    \fi
    \@whatnext
}%
\def\@endParShape{%
    \global\parshape = \@numlines \@parshapedata
}%
\def\@readnextline#1 #2 #3 {% #1 #2 #3 are: m_i, leftskip_i, rightskip_i
    \ifnum #1 > 0
      \bgroup  % I want to keep changes of \dimen0 and \count0 local
        \dimen0 = \hsize
        \advance \dimen0 by -#2  % \parshape requires left skip and
        \advance \dimen0 by -#3  % _length_of_line_ (not right skip!)
        \count0 = 0
        \loop
          \global\edef\@parshapedata{%
            \@parshapedata    % add to \@parshapedata:
            #2                % left skip
            \space            % a space
            \the\dimen0       % length of line
            \space            % another space
          }%
          \advance \count0 by 1
          \ifnum \count0 < #1
        \repeat
      \egroup
      \advance \@numlines by #1
    \fi
    \advance \@linesleft by -1
    \@beginParShape
}%
\newbox\@boxcontent     % box containing the picture to be inserted
\newdimen\@framewidth   % width of the frame
\newdimen\@wherebottom  % position of frame's bottom
\newif\if@byframe       % true if we are just beside the frame
\def\InsertBoxC#1{%
  \leavevmode
  \vadjust{
    \vskip \@InsertBoxMargin
    \hbox to \hsize{\hss#1\hss}
    \vskip \@InsertBoxMargin
  }%
}%
\def\InsertBoxL#1#2{%
  \@numnormal = #1
  \setbox\@boxcontent = \hbox{#2}%
  \let\@side = 0
  \futurelet \@optionalparameter \@InsertBox
}
\def\InsertBoxR#1#2{%
  \@numnormal = #1
  \setbox\@boxcontent = \hbox{#2}%
  \let\@side = 1
  \futurelet \@optionalparameter \@InsertBox
}%
\def\@InsertBox{%
  \ifx \@optionalparameter [
    \let\@whatnext = \@@InsertBoxCorrection
  \else
    \let\@whatnext = \@@InsertBoxNoCorrection
  \fi
  \@whatnext
}%
\def\@@InsertBoxCorrection[#1]{%
  \ifx \@side 0
    \@@InsertBox{#1}{0}{{\the\@framewidth} 0cm}%
  \else
    \@@InsertBox{#1}{1}{0cm {\the\@framewidth}}%
  \fi
}%
\def\@@InsertBoxNoCorrection{%
  \@@InsertBoxCorrection[0]%
}%
\def\@@InsertBox#1#2#3{%
  \MoveBelowBox
  \@byframetrue
  % \@wherebottom = \pagetotal + (\@numnormal * \baselineskip) +
  %                 (height of \@boxcontent) + (2 * \@InsertBoxMargin)
  \@wherebottom = \baselineskip
  \multiply \@wherebottom by \@numnormal
  \advance \@wherebottom by 2\@InsertBoxMargin
  \advance \@wherebottom by \ht\@boxcontent
  \advance \@wherebottom by \pagetotal
  % I have no idea why, but \InsertBox called at the top of a page
  % calculates space for the box one line too big
  \ifdim \pagetotal = 0cm
    \advance \@wherebottom by -\baselineskip  % ^ reduction
  \fi
  % add the correction
  \advance \@wherebottom by #1\baselineskip
  % \@framewidth = (width of \@boxcontent} + \@InsertboxMargin
  \@framewidth = \wd\@boxcontent
  \advance \@framewidth by \@InsertBoxMargin
  \bgroup  % to keep changes of \dimen0 local
    % check if the box fits in the page
    \ifdim \pagetotal = 0cm
      \dimen0 = \vsize
    \else
      \dimen0 = \pagegoal
    \fi
    \ifdim \@wherebottom > \dimen0
      % print a warning message ...
      \immediate\write16{+--------------------------------------------------------------+}%
      \immediate\write16{| The box will not fit in the page. Please, re-edit your text. |}%
      \immediate\write16{+--------------------------------------------------------------+}%
      % ... and mark this place in document with a black box
      \vrule width \overfullrule
    \fi
  \egroup
  \prevgraf = 0
  % insert the box in the left (if #2 = 0) or in the right (if #2 = 1)
  \vbox to 0cm{%
    \dimen0 = \baselineskip
    \multiply \dimen0 by \@numnormal
    \advance \dimen0 by -\baselineskip
    \setbox0 = \hbox{y}%
    \vskip \dp0
    \vskip \dimen0
    \vskip \@InsertBoxMargin
    \ifnum #2 = 1
      \vtop{\noindent \hbox to \hsize{\hss \box\@boxcontent}}%
    \else
      \vtop{\noindent \box\@boxcontent}%
    \fi
    \vss
  }%
  % I have no idea why, but this is really necessary
  \vglue -\parskip
  \vskip -\baselineskip
  % each following paragraph needs to be formatted properly
  \everypar = {%
    % are we already below the bottom of the box?
    \ifdim \pagetotal < \@wherebottom
      % no...
      \bgroup  % to keep some changes local
        % let's calculate parameters for \ParShape
        \dimen0 = \@wherebottom
        \advance \dimen0 by -\pagetotal
        \divide \dimen0 by \baselineskip
        \count1 = \dimen0
        \advance \count1 by 1
        \advance \count1 by -\@numnormal
        \ifnum #2 = 1
          \ParShape = 3
                      {\the\@numnormal}   0cm   0cm
                      {\the\count1}       0cm   {\the\@framewidth}
                      1                   0cm   0cm
        \else
          \ParShape = 3
                      {\the\@numnormal}   0cm                  0cm
                      {\the\count1}       {\the\@framewidth}   0cm
                      1                   0cm                  0cm
        \fi
      \egroup
    \else
      % yes!
      \@restore@    % it's time to end everything
    \fi
  }%
  % this definition isn't very necessary --- just in case the paragraph
  % following \InsertBoxL or \InsertBoxR has fewer lines that the
  % first argument of the macro
  \def\par{%
      \endgraf
      \global\advance \@numnormal by -\prevgraf
      \ifnum \@numnormal < 0
        \global\@numnormal = 0
      \fi
      \prevgraf = 0
  }%
}%
\def\MoveBelowBox{%
  \par
  \if@byframe
    \global\advance \@wherebottom by -\pagetotal
    \ifdim \@wherebottom > 0cm
      \vskip \@wherebottom
    \fi
    \@restore@
  \fi
}%
\def\@restore@{%
    \global\@wherebottom = 0cm
    \global\@byframefalse
    \global\everypar = {}%
    \global\let \par = \endgraf
    \global\parshape = 1 0cm \hsize
}%
  \let \pageno = \c@page
\DeclareMathAlphabet{\mathdutchcal}{U}{dutchcal}{m}{n}
\definecolor{Darkblue}{rgb}{0,0,0.4}
\definecolor{Brown}{cmyk}{0,0.61,1.,0.60}
\definecolor{Purple}{cmyk}{0.45,0.86,0,0}
\definecolor{Darkgreen}{rgb}{0.133,0.543,0.133}
\newif\ifdraft 
\newtheorem{theorem}{Theorem}
\newtheorem{lemma}{Lemma}
\newtheorem{definition}{Definition}
\newtheorem{claim}{Claim}
\newtheorem{observation}{Observation}
\newtheorem{remark}{Remark}
\let\int\undefined
\newcommand{\polylog}{\mathrm{polylog}}
\newcommand{\poly}{\mathrm{poly}}
\newcommand{\int}{\mathsf{Int}}
\newcommand{\lca}{\mathcal{LCA}}
\newcommand{\diam}{\mathsf{diam}}
\newcommand{\appr}[2]{\left[#2\right]_{#1}}
\newcommand{\inter}{\mathsf{Int}}
\newcommand{\exter}{\mathsf{Ext}}
\newcommand{\defi}{\stackrel{\text{\tiny{def.}}}{=}}
\newcommand{\smallfont}[1]{\scriptscriptstyle{#1}}
\newcommand{\norm}[1]{\Vert #1 \Vert}
\renewcommand{\vec}[1]{\mathbf{#1}}
 \newcommand{\eps}{\epsilon}
\DeclareMathOperator*{\argmin}{arg\,min}
\def\eps{\epsilon}
\newcommand{\initOneLiners}{%
	\setlength{\itemsep}{0pt}
	\setlength{\parsep }{0pt}
	\setlength{\topsep }{0pt}
}
\title{Approximate Distance Oracles for Planar Graphs with Subpolynomial Error Dependency}
\author{Hung Le}
\affil{University of Massachusetts at Amherst}
\date{}
\begin{document}
\maketitle
\begin{abstract}
Thorup [FOCS'01, JACM'04] and Klein [SODA'01] independently showed that there exists a $(1+\eps)$-approximate distance oracle for planar graphs with $O(n (\log n)\eps^{-1})$ space and $O(\eps^{-1})$ query time. While the dependency on $n$ is nearly linear, the space-query product of their oracles depend \emph{quadratically} on $1/\eps$. Many follow-up results either improved the space \emph{or} the query time of the oracles while having the same, sometimes worst, dependency on $1/\eps$. Kawarabayashi, Sommer, and Thorup [SODA'13]  were the first to improve the dependency on $1/\eps$ from quadratic to \emph{nearly linear} (at the cost of $\log^*(n)$ factors). It is plausible to conjecture that the linear dependency on $1/\eps$ is optimal: for many known distance-related problems in planar graphs,  it was proved that the dependency on $1/\eps$ is at least linear. 

In this work, we disprove this conjecture by reducing the dependency of the space-query product on $1/\eps$ from linear all the way down to \emph{subpolynomial} $(1/\eps)^{o(1)}$. More precisely, we construct an oracle with $O(n\log(n)(\eps^{-o(1)} + \log^*n))$  space and $\log^{2+o(1)}(1/\eps)$ query time. Our construction is the culmination of several different ideas developed over the past two decades.

\end{abstract}
\pagebreak
{\small \setcounter{tocdepth}{2} \tableofcontents}
\newpage
\pagenumbering{arabic}

\section{Introduction}

Computing distances is one of the most fundamental primitives in graph algorithms. Approximate distance oracle is a data structure invented specifically for this purpose. A \emph{$t$-approximate distance oracle} of an edge-weighted and undirected\footnote{All oracles in this paper are for undirected and edge-weighted graphs unless mentioned otherwise.} graph $G = (V,E,w)$ is a data structure that given any two vertices $u$ and $v$, return an approximate distance $\hat{d}(u,v)$ such that $d_G(u,v)\leq \hat{d}(u,v)\leq t\cdot d_G(u,v)$. The breakthrough result of Thorup and Zwick~\cite{TZ05} gives a $(2k-1)$-approximate distance oracle for undirected $n$-vertex graphs with $O(kn^{1+1/k})$ space and $O(k)$ query time for any $k\geq 1$. However, reducing the distance error to smaller than a factor of $3$ requires $\Omega(n^2)$~\cite{TZ05} space for dense graphs. In many practical applications, it is desirable to have the distance error as close to $1$ as possible. Constructing an approximate distance oracle with such an error guarantee requires exploiting specific structures of input graphs. Planarity is a natural structure that has been extensively studied for decades; it is used to model road networks on which querying distances is a central problem. 

Thorup~\cite{Thorup04} and Klein~\cite{Klein02} independently constructed a $(1+\eps)$-approximate distance oracle with $O(n\log(n)\eps^{-1})$ space and $O(\eps^{-1})$ query time for any $\eps \in (0,1)$; the construction time of their oracle is $O(n\log^3(n)/\eps^2)$ (see Theorem 3.19 in~\cite{Thorup04}). Their results triggered many follow-up papers over the past two decades; we can generally divide them into two directions. One direction assumes that $\eps$ is a fixed constant and aims to improve the dependency on $n$~\cite{WulffNilsen16,LW21}, which culminated in a distance oracle with $O(n)$ space and $O(1)$ query time by  Le and Wulff-Nilsen~\cite{LW21}. However, the dependency on $1/\eps$ of their oracle's  space-query product is $O(\eps^{-4})$. 

Another direction  focuses on reducing the quadratic dependency on $1/\eps$ of the space-query product. In practical applications such as logistics and planning, a reduction of $1\%$ distance error could lead to a huge economic saving. In such scenarios, $\eps$ is very small and potentially depends on $n$. In an extreme regime, such as $\eps \in [1/\sqrt{n}, 1/n]$, the quadratic dependency on $1/\eps$ implies a quadratic dependency on $n$, making the oracle trivial. Even in a moderately small regime, for example, $\eps = 1/\poly(\log n)$, the dependency on $1/\eps$ remains a dominating factor. Therefore, it is of both theoretical and practical interest to reduce the quadratic dependency on $1/\eps$.  

Kawarabayashi, Klein, and Sommer~\cite{KKS11} argued that ``a very low space requirement is essential'',  and gave a $(1+\eps)$-approximate oracle with truly linear space $O(n)$ and query time $O(\eps^{-2}\log(n))$. While the space is information-theoretically optimal, the query time is blown up by a factor $O(\eps^{-1}\log n)$, making the space-query product worst than that of Thorup~\cite{Thorup04} and Klein~\cite{Klein02}.   The first real improvement was achieved by Kawarabayashi, Sommer, and Thorup~\cite{KST13} who constructed a $(1+\eps)$-approximate oracle with $O(n\log n\log\log(1/\eps)\log^* n)$ space and $O(\eps^{-1}\log^2(1/\eps)\log\log(1/\eps) \log^*n)$ query time. Ignoring $\log^*(n)$ and $\polylog(1/\eps)$ factors, the space bound is $O(n\log n)$ while the query time is $O(1/\eps)$, giving a quadratic improvement in $1/\eps$ dependency in the space-query product.   And yet a decade has passed, and the improvement of  Kawarabayashi, Sommer, and Thorup~\cite{KST13} remains state-of-the-art. 

Recent works instead focus on improving the dependency on $1/\eps$ of the \emph{query time} at the cost of a larger space bound. Gu and Xu~\cite{GX19} constructed an oracle with $O(1)$ query time and $O(n\log(n)(\log(n/\eps) + 2^{O(1/\eps)}))$ space; their space bound is exponential in $1/\eps$. Chan and Skrepetos~\cite{CS19}, using the Voronoi diam technique of Cabello~\cite{Cabello18}, improved the space of the oracle of Gu and Xu~\cite{GX19} to a (large) polynomial in $1/\eps$ at the cost of a slightly worst query time $O(\log(1/\eps))$. However, the construction of Chan and Skrepetos~\cite{CS19} is randomized. Li and Parter~\cite{LP19} devised the VC-dimension technique to  reduce the space of the oracle by  Gu and Xu~\cite{GX19}; the space of their oracle is not explicitly computed but remains polynomial in $1/\eps$, and the query time, while is not explicitly mentioned, is $O(1)$. See \Cref{tab:approx-oracle} for a summary. 

\renewcommand{\arraystretch}{1.3}
\begin{table}[]
	\vspace{-30pt}
	\begin{center}
		\begin{tabular}{|l|l|l|}
			\hline
			Space                                       & Query time             & Reference                                                  \\ \hline
			$O(n\log(n)/\eps)$ & $O(1/\eps)$ & \cite{Thorup04,Klein02}\\ \hline
			$O(n)$ & $O(\log(n)/\eps^2)$ & \cite{KKS11}\\ \hline
			$O(n\log n\log\log(1/\eps)\log^* n)$ & $O((1/\eps)\log^2(1/\eps)\log\log(1/\eps) \log^*n)$ & \cite{KST13}\\ \hline
			$O(n\log(n)(\log(n/\eps) + 2^{O(1/\eps)}))$ & $O(1)$ & \cite{GX19}\\\hline
		 \shortstack{$O(n\log^2(n)\eps^{-1} +n\log(n)\eps^{-(4+\delta)})$\\  any fixed $\delta > 0$}	  & $O(\log(1/\eps))$ & \cite{CS19}\\\hline
			\shortstack{$O(n\log^2(n)(1/\eps)^c)^{++}$ \\ for some constant $c\geq 6$} & $O(1)^{++}$ & \cite{LP19}\\\hline
			$O(n\log(n)((1/\eps)^{o(1)} + \log^*n))$ & $\log^{2+o(1)}(1/\eps)$ & \Cref{thm:main}\\\hline
			$O(n\log(n)(\log^{2+o(1)}(1/\eps) + \log^*n))$ & $(1/\eps)^{o(1)}$ & \Cref{thm:main}\\\hline
		\end{tabular}
	\end{center}
	\caption{Space and query time of known $(1+\epsilon)$-approximate distance oracles for \emph{undirected} planar graphs. The bounds marked $^{++}$ are our own estimation following the description in \cite{LP19}; these bounds are not explicitly computed in \cite{LP19}. The last two rows are our results.}
	\label{tab:approx-oracle}
\end{table}
\renewcommand{\arraystretch}{1}

On the other hand, it is plausible to conjecture that the linear dependency on $1/\eps$ of the space-query product is optimal. For some distance-based problems in planar graphs where one seeks to have structures that preserve distances approximately with an error parameter $\eps$, such as light $(1+\eps)$-spanners~\cite{ADDJS93} or treewidth embedding with low (additive) distortion~\cite{FKS19,CFKL20,FL21B} or approximate planar emulators~\cite{CKT22}, it was proved that the dependency on $1/\eps$ of the output's quality (such as lightness or treewidth or the number of edges) is $\Omega(1/\eps)$. Another related problem is $(1+\eps)$-approximate distance labeling scheme where the best-known scheme~\cite{Thorup04} has labels of size $O(\log(n)/\eps)$; again, the dependency on $1/\eps$ is linear. For all problems seeking some kinds of $(1+\eps)$ approximation in planar graphs that we are aware of, none of them has a sublinear the dependency on $1/\eps$ (though there exist such problems for trees~\cite{GKKPP01,FGNW17}, which form a very restricted subclass of planar graphs).  Furthermore, the constructions of all known approximate oracles rely on the same fundamental building block using shortest path separators pioneered by Thorup~\cite{Thorup04} and Klein~\cite{Klein02}. More precisely, for each shortest path in the separator, one marks  $1/\eps$ vertices on the  shortest path to serve as \emph{portals} for computing approximate distances. This $1/\eps$ factor creeps into the space and/or the query time,  which makes the linear dependency seem unavoidable.

In this work, we break the long-standing linear dependency on $1/\eps$ in space-query product of approximate distance oracles for the first time. Indeed, we improve the dependency of the space-query product on $1/\eps$ from linear all the way down to \emph{subpolynomial} $(1/\eps)^{o(1)}$.

\begin{restatable}{theorem}{Main}\label{thm:main}Let $\eps \in (0,1)$ be positive parameter and $G = (V,E,w)$ be an undirected, edge-weighted planar graphs with $n$ vertices. We can construct in $O(n\poly(\log(n),1/\eps))$ time a $(1+\eps)$-approximate distance oracle that has:
\begin{itemize}
		\item[(1)]  $O(n\log(n)((1/\eps)^{o(1)} + \log^*n))$ space  and  $\log^{2+o(1)}(1/\eps)$ query time or 
	\item[(2)] $O(n\log(n)(\log^{2+o(1)}(1/\eps) + \log^*n))$ space  and  $(1/\eps)^{o(1)}$ query time.	
\end{itemize}
\end{restatable}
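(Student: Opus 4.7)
My starting point would be the classical Thorup--Klein recursive shortest-path separator decomposition, which gives a recursion tree of depth $O(\log n)$ where each piece is split by $O(1)$ shortest paths. For any pair $u,v$, some ancestor piece contains a separator path $P$ that an approximate $u$--$v$ shortest path crosses; the query reduces to finding, for each of the $O(\log n)$ pieces on the root-to-leaf paths of $u$ and $v$, a pair of ``portals'' on $P$ that approximate the crossing. The linear $1/\eps$ factor arises because the standard construction marks $\Theta(1/\eps)$ portals on $P$ per vertex per level and stores their approximate distances. The plan is to replace this uniform portal set by a carefully nested hierarchy so that each vertex explicitly stores only subpolynomially many portals, while a recursive sub-oracle quickly locates the right portal.

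I would organize this nesting in $k = \omega(1)$ levels (e.g.\ $k = \sqrt{\log(1/\eps)}$), with scale parameters $\eps_0 > \eps_1 > \cdots > \eps_k = \eps$ such that $\eps_{i}/\eps_{i+1}$ is only $(1/\eps)^{1/k} = (1/\eps)^{o(1)}$, and with $\eps_0$ coarse enough (say $\eps_0 = \Theta(1)$) that the coarse portal structure costs only $O(n \log n)$ space. At level $i$, each vertex $u$ stores its distance to $\Theta(1/\eps_{i}\cdot\eps_{i-1}) = (1/\eps)^{o(1)}$ ``refinement'' portals clustered around its optimal level-$(i-1)$ portal. To answer a query, one first finds an $(1+\eps_0)$-approximate portal using the coarsest structure, then at each subsequent level refines by scanning the $(1/\eps)^{o(1)}$ nearby portals and updating the current best portal pair. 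After $k$ refinement steps the portal pair is accurate to $(1+\eps_k) = (1+\eps)$, and the total per-vertex storage and query time are each $k \cdot (1/\eps)^{o(1)} = (1/\eps)^{o(1)}$ per level of the separator hierarchy. Combining with the $O(\log n)$ separator levels and the Kawarabayashi--Sommer--Thorup $\log^*(n)$ trick (to amortize the $\log n$ factor for the portal tables themselves) gives the first trade-off in the theorem. To obtain the second trade-off one swaps roles: build $\Theta(\log^{2+o(1)}(1/\eps))$-sized explicit portal tables (sufficient to do binary-style refinement over the $\log(1/\eps)$ scale levels in a $\log$-sized batch per level) and let the query pay the remaining $(1/\eps)^{o(1)}$ factor.

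Concretely the steps I would carry out, in order, are: (i) set up the Thorup--Klein separator decomposition with $r$-divisions and attach the $\log^*(n)$-space external-portal machinery of Kawarabayashi--Sommer--Thorup so that the $O(\log n)$ per-vertex storage is not multiplied by $1/\eps$; (ii) for each shortest separator path $P$ and each vertex $u$ assigned to the piece, define the level-$i$ portal set $\Pi_i(u,P)$ as the $(1/\eps)^{o(1)}$ vertices of an $\eps_i|P|$-net on $P$ lying within distance $O(\eps_{i-1} d_G(u,P))$ of the level-$(i-1)$ best portal; prove by a standard net argument that $\Pi_i(u,P)$ contains a $(1+\eps_i)$-approximate portal; (iii) bound the storage of all $\Pi_i$'s across $i,u,P$ by $(1/\eps)^{o(1)} n \log n$; (iv) describe the refinement query: at each separator path and level, scan the $(1/\eps)^{o(1)}$ new portals for $u$ and $v$ and compute the minimum sum, using the previously found portal to localize the search --- the query time telescopes as $\log^{2+o(1)}(1/\eps)$ after using a van~Emde~Boas-style or fractional cascading trick to combine the $k$ levels with $\log$ overhead per step; (v) finally, verify the error bound $\prod_i (1+\eps_i) \le 1 + O(\eps)$ by choosing the $\eps_i$ geometrically.

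The main obstacle will be step~(ii), the \emph{refinement} guarantee: that the level-$i$ portal net, restricted to a neighborhood of the level-$(i-1)$ approximate portal, still contains a point that is $(1+\eps_i)$-approximate. This requires a structural claim that replacing the true optimal portal by a nearby one shifts the crossing point along $P$ by only $O(\eps_{i-1} d_G(u,v))$, which uses convexity of distance-to-vertex functions along a shortest path (a Thorup-style monotonicity argument) together with a careful triangle-inequality chain. A secondary difficulty is making the $\log^*n$ amortization of Kawarabayashi--Sommer--Thorup play nicely with the multi-scale refinement --- their construction treats portal storage as a single monolithic table, and I will need to distribute it across the $k$ refinement levels without an extra $k$ or $\log(1/\eps)$ multiplicative factor. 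If these two pieces go through, the arithmetic of the scale parameters and of the $k$-fold recursion then directly yields the two trade-offs in \Cref{thm:main}.
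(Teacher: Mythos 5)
Your proposal takes a genuinely different route from the paper, and unfortunately the central step does not hold up. The paper does not try to make local, per-vertex portalization subpolynomial; instead it abandons local portalization entirely. Its mechanism is: (i) discretize distance differences along a separator path into $\delta$-approximate \emph{patterns}; (ii) use the Li--Parter VC-dimension bound (\Cref{thm:LP19}, \Cref{lm:pattern-bound}) to show there are only $\poly(1/\eps)$ distinct patterns per separator, so one can store a $\poly(1/\eps)$-size global lookup table of pattern-to-pattern distances; (iii) each vertex stores just an $O(1)$-word pointer to its pattern; (iv) patterns are composed up the separator tree (\Cref{lm:pattern-comp}) and matched to the nearest stored pattern to avoid any per-level cost at query time; and (v) leaf regions of size $\poly(\log n, 1/\eps)$ are handled by the Long--Pettie exact oracle. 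Your scheme, by contrast, keeps per-vertex portal sets and tries to shrink them via multi-scale refinement. That is an interesting direction but it is not what the paper does, and it has a gap.

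The gap is precisely the step you flag as the ``main obstacle,'' the refinement guarantee in (ii), and I do not believe it can be repaired as stated. You define $\Pi_i(u,P)$ as an $\eps_i|P|$-net restricted to a neighborhood of the ``level-$(i-1)$ best portal.'' But the level-$(i-1)$ best portal for $u$ w.r.t.\ $P$ depends on the query partner $v$: for different $v$'s the minimizer of $x\mapsto d(u,x)+d(x,v)$ on $P$ can lie in completely different parts of $P$. So either $\Pi_i(u,P)$ depends on $v$, in which case it cannot be precomputed and stored, or you anchor it on something $v$-independent such as the foot point $\pi_u$ of $u$ on $P$, in which case the fine net near $\pi_u$ misses the optimal crossing for those $v$ whose crossing is far from $\pi_u$. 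Thorup's portal selection places portals at \emph{exponentially growing} distances from $\pi_u$ exactly to cover those far crossings; a fine net concentrated in a small window cannot replace it. The ``convexity/monotonicity'' you appeal to is not available: $x\mapsto d(u,x)+d(x,v)$ restricted to a shortest-path separator is only $2$-Lipschitz and can have near-optimal points spread across the whole path, so a $(1+\eps_{i-1})$-approximate portal does not localize the $(1+\eps_i)$-optimal one. A second, independent issue is your query cost accounting: you write that the storage and time are ``per level of the separator hierarchy,'' which would give an $O(\log n)$ factor in the query time; the KST $\log^*$ trick controls space (and in KST even query carries a $\log^* n$), whereas \Cref{thm:main} has query time $\log^{2+o(1)}(1/\eps)$ with no dependence on $n$. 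Achieving that requires the LCA-plus-pattern-composition machinery, which lets a leaf-level pointer be translated to an ancestor-level pattern in $O(1)$ time via precomputed induced-pattern tables (\Cref{lm:query-approximiate}); your proposal does not address how a leaf-stored portal set propagates to the LCA level at all.
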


Given the aforementioned lower bounds for related problems,  we find the result in \Cref{thm:main} rather surprising. It opens the real possibility that the dependency on $1/\eps$ could be sublinear or even subpolynomial for distance-related problems where there is no currently-known linear lower bound, such as computing $(1+\eps)$-approximate diameter~\cite{WY16,CS19} or $(1+\eps)$-approximate distance labelings~\cite{Thorup04}. For approximate distance labelings, by a reduction from exact distance labeling~\cite{GPPR04,Thorup04}, the lower bound dependency on $1/\eps$ one can show is $\Omega((1/\eps)^{1/3})$. Furthermore, our technique described below might also be used to progress on these problems.
 
 We remark that we do not attempt to minimize the $\poly(1/\eps,\log(n))$ factor in the construction time in \Cref{thm:main}.   In the following section, we review previous techniques and give an overview of our construction.

\subsection{Previous and Our Techniques}

A conceptual contribution of our work is to view approximate distance oracle constructions through the lens of \emph{local portalization} vs \emph{global portalization}. This view explains  why constructions developed over the past two decades fail to break the linear, sometimes quadratic, dependency on $1/\eps$ in the space-query product. Through this view, we  identify key strengths and weaknesses of each construction, and the challenges in overcoming the linear $1/\eps$ barrier. We then design a framework that could exploit the strengths of all of them through which we obtain a sublinear bound $(1/\eps)^{o(1)}$ in the space-query product. First, we give a more detailed exposition of existing constructions.

All approximate distance oracles, including ours, use \emph{balanced shortest path separators}. The influential results of Lipton and Tarjan~\cite{LT79,LT80} showed that a triangulated planar graph of $n$ vertices has a separator $C$, called a balanced shortest path separator, which is a Jordan curve consisting of two shortest paths and one edge connecting the two endpoints of the paths, such that there are at most $\frac{2n}{3}$ vertices in the interior and exterior of $C$, denoted by $\inter(C)$ and $\exter(C)$, respectively.

A key idea in the construction of Thorup~\cite{Thorup04} and Klein~\cite{Klein02} is to use \emph{portals} along each shortest path of a shortest path separator. They showed that for every vertex $v$ in one side of $C$, say the interior, one can find a set $P_v$ of $1/\eps$ vertices, called portals, for $v$ such that for every vertex $u \in \exter(C)$, there exists a portal $p\in P_v$ where $d_G(v,p) + d_G(u,p) \leq (1+\eps) d_G(u,v)$. This means that for every vertex $v\in V$, we only need to store $O(1/\eps)$ distances to vertices in $P_v$ such that the distance between any two vertices $u,v$ in two different sides of $C$ can be approximated within $(1+\eps)$ factor by computing $\min_{p\in P_v, q\in P_u}\{d_G(v,p) + d_G(u,q)\}$ in time $O(|P_v| + |P_u|) = O(1/\eps)$. The distances between vertices in the same side of $C$ can be handled recursively, at the cost of a $\log(n)$ factor in the space bound since the depth of the recursion is $O(\log(n))$. 

We view the portalization of Thorup~\cite{Thorup04} and Klein~\cite{Klein02} as a \emph{local portalization} scheme in the sense that  each vertex needs its own set of portals. Evading $O(1/\eps^2)$ factor in the space-query product requires  breaking the locality. The follow-up construction of  Kawarabayashi, Klein, and Sommer used $r$-division~\cite{Federickson87} on top of the constructions by Thorup~\cite{Thorup04} and Klein~\cite{Klein02}; their goal is to have an oracle with $O(n)$ space and $O(\eps^{-2}\log^2 n)$ query time. Their construction also followed local portalization  and, hence, the space-query product remained the same.

Kawarabayashi, Sommer, and Thorup~\cite{KST13} (KST) improved the quadratic bound $O(1/\eps^2)$ in the space-query product by breaking the locality of portals entirely.  Specifically, up to a factor of $\polylog(1/\eps,\log^*n)$, they improved the space to $n\log(n)$ while keeping the same query time $1/\eps$.  Their approach  reduced constructing oracles with multiplicative stretch $(1+\eps)$ to constructing oracles with \emph{additive stretch} $+\eps D$ where $D$ is the diameter of the graph. (We say that an oracle has an additive stretch $+\Delta$ if for every two vertices $u$ and $v$, the distance returned by the oracle $\hat{d}(u,v)$ satisfies: $d_G(u,v)\leq \hat{d}(u,v)\leq d_G(u,v)  + \Delta$.) The reduction introduces a small loss of an $O(\log n)$ factor in the space and an $O(1)$ factor in query time. A key advantage of additive stretch over multiplicative stretch is that the portals become global:  for every shortest path (of length at most $D$) in a shortest path separator, we can place a set $P$ of $O(1/\eps)$ portals (independent of the vertices) such that for any two vertices $u,v$ in different sides of $C$, $\min_{p \in P}\{d_G(u,p)+d_G(v,p)\} \leq d_G(u,v) + \eps D$. Thus, vertices in the graph now share the same set of portals, which is the source of the space improvement. To answer a query, they need to iterate over $P$, which takes $O(|P| = O(1/\eps))$ time.

Subsequent constructions~\cite{CS19,GX19,LP19} improved the query time of the KST oracle to either $O(1)$~\cite{GX19,LP19}  or $O(\log(1/\eps))$~\cite{CS19} at the cost of a  large dependency on $1/\eps$ in the space.   Gu and Xu~\cite{GX19} employed the distance encoding argument of Weimann and Yuster~\cite{WY16} that has a factor $2^{O(1/\eps)}$ in the space.  Li and Parter~\cite{LP19} reduced the factor $2^{O(1/\eps)}$ to $O(\poly(1/\eps))$ using their VC-dimension technique.  Chan and Skrepetos~\cite{CS19} employed the Voronoi diagram technique of Cabello~\cite{Cabello18}; their construction broke away from the global portalization however: each vertex in the graph must store its own Voronoi diagram defined by its distances to the portals. As a result of this locality, the space of their oracle has a factor $(1/\eps)^4$. Thus, the Voronoi diagram technique, though a cornerstone of \emph{exact} distance oracle constructions (see \Cref{subsec:related-work}), does not seem to be the right approach for breaking the linear factor $1/\eps$ in the space-query product achieved by KST oracle.

Viewing KST oracle through our global lens of portalization is particularly illuminating for breaking the linear $1/\eps$ bound. Specifically, we show that, for breaking the $O(1/\eps)$ bound, it suffices to  store $\poly(1/\eps)$ machine words \emph{globally} per shortest path separator. More precisely, each time we apply the shortest path separator to separate a graph, we could store a global data structure of up to $\poly(1/\eps)$ words of space. Every vertex in the graph has a pointer to (a portion of) the data structure; the pointer only costs $O(\log(1/\eps))$ bits of space.  In retrospect, the KST  construction can be seen as storing only $O(1/\eps)$ words of space globally, one word for each portal of the paths in the separator. The key difference of our construction over KST's is that, in KST, each vertex needs to compute distances to the (shared) portals during the query stage, then computing distances between two vertices requires looping over the portals that takes $O(1/\eps)$ time. Our key idea to  remove the $1/\eps$ factor completely in the query time is the following. We precompute a small set of approximate distances in the global data structure. Then, given two vertices and their pointers to  the global data structure, we can look up their approximate distance in $O(1)$ time. 

In our construction, each vertex holds a pointer to an \emph{approximate distance pattern} stored in the global data structure in the graph. Our approximate distance pattern is an approximate version of the \emph{exact distance pattern} introduced by Fredslund-Hansen, Mozes, and Wulff-Nilsen~\cite{FMW20} in their construction of \emph{exact distance oracles} for \emph{unweighted} planar graphs. (Other than using distance patterns, their construction is different from ours and other approximate distance oracle constructions, and it only works for unweighted graphs.) An approximate distance pattern encodes the (approximate) distances from a vertex to the portals on a shortest path of the shortest path separator. We pre-compute all approximate distance patterns and the distance between any two  approximate distance patterns, and store this information in the global data structure. The idea is that, given  access to two approximate distance pointers of two vertices $u$ and $v$, we can access their pre-computed distance in $O(1)$ time in the global data structure. Our global data structure has only $\poly(1/\eps)$ space and, hence, the number of patterns must also be $\poly(1/\eps)$; for this, we employ the VC-dimension argument of Li and Parter~\cite{LP19}.

 A problem with using a global data structure of size $\poly(1/\eps)$ for each subgraph of $G$  arising in the construction is the space bound. Specifically, if we recursively decompose $G$ using balanced shortest path separators until each subgraph has $O(1)$ size, we end up with a separation tree, denoted by $\mathcal{T}$, with $O(n)$ nodes. As each node of $\mathcal{T}$ is associated with a global data structure of size $\poly(1/\eps)$ (for the subgraph of that node), the total size of the data structure is $O(n\poly(1/\eps))$. A possible solution to this problem is the following simple idea used by many oracle constructions~\cite{KKS11,KST13,CS19}: stop separating a subgraph once it has size $O(1/\eps^c)$ for some sufficiently big $c$. (For each subgraph of size $O(1/\eps^c)$, the standard approach is to use an exact distance oracle~\cite{KKS11,KST13}; we will come back to this issue later.)  The number of nodes of the separation tree is $O(n/\eps^{c})$, and hence the total size of all data structures associated with its nodes is  $O((n/\eps^{c}) \poly(1/\eps)) = O(n)$ for an appropriate choice of $c$. Then for each vertex $v$, we store a pointer to its approximate pattern w.r.t. the boundary portals of the leaf subgraph in $\mathcal{T}$ containing $v$.

 Yet a new problem arises: for each vertex $v$ in a leaf node $\alpha$ of $\mathcal{T}$, we need to compute the approximate pattern  of $v$   w.r.t. the portals of an ancestor  node, say $\beta$. Following Fredslund-Hansen, Mozes, and Wulff-Nilsen~\cite{FMW20}, we can compute a pattern \emph{ induced by} the approximate pattern of $v$ for portals of $\alpha$ (see \Cref{def:patter-induced} for a precise definition). The issue is that, since distances are approximate, the induced pattern might not be in the set of approximate patterns stored at $\beta$; therefore, we can no longer look up the distance stored at the global data structure of $\beta$. (In the exact distance setting of~\cite{FMW20}, this issue does not happen since the induced pattern  of an exact pattern is another exact pattern.)  Our key idea to overcome this problem is the following. We show that the induced pattern is close (in $\ell^{\infty}$ norm) to an approximate distance pattern stored at $\beta$. Then, for each induced pattern $\mathbf{p}$ between $\alpha$ and $\beta$, we store a \emph{pointer} to the approximate pattern in $\beta$ close to $\mathbf{p}$. Therefore, once the induced pattern is computed, we follow the pointer to the closest approximate pattern stored at $\beta$.

 We now go back to the subgraphs of size $O(1/\eps^c)$ associated with leaves of  $\mathcal{T}$. The standard approach is to use an exact distance oracle for each subgraph~\cite{KKS11,KST13}. For breaking the linear bound $1/\eps$ in the space-query product, it suffices to use the oracle of Charalampopoulos et al.~\cite{CGMW19}. Here we use the recent exact distance oracle by Long and Pettie~\cite{LP20} instead to obtain a better dependency on $1/\eps$. Specifically, we apply the Long-Pettie oracle (for $n$-vertex graphs) in two different regimes: (a) $n^{1+o(1)}$ space and $\log(n)^{2+o(1)}$ query time and (b) $n\log^{2+o(1)}(n)$ space and $n^{o(1)}$ query time. Two regimes lead to two different oracles with additive stretch $+\eps D$: regime (a) gives an oracle with $O(n(1/\eps)^{o(1)}\log(n))$ space and $O(\log^{2+o(1)}(1/\eps))$  query time and regime (b) gives an oracle with $O(n\log^{2+o(1)}(1/\eps)\log(n))$ space and $O((1/\eps)^{o(1)})$  query time. 
 
 Finally, with some additional ideas on top of our framework, we remove the $\log(n)$ factor in the space of the additive oracle by recursion at the cost of an additive $\log^*(n)$ term in the space. We note that, unlike KST, our oracle does not have the factor $\log^*(n)$ in the query time.

 \begin{restatable}{theorem}{AdditiveOracle}\label{thm:additiveOracle} Let $\eps > 0, D > 0$ be a positive parameter and $G = (V,E,w)$ be an undirected $n$-vertex planar graph of diameter $D$. There is an approximate distance oracle of additive stretch $+\eps D$ with construction time $O(n\poly(\log n,\eps))$ and: 
 	\begin{itemize}
 		\item[(1)]  $O(n((1/\eps)^{o(1)} + \log^*(n)))$ space  and  $\log^{2+o(1)}(1/\eps)$ query time or 
 		\item[(2)] $O(n(\log^{2+o(1)}(1/\eps) + \log^*(n)))$ space  and  $(1/\eps)^{o(1)}$ query time.
 	\end{itemize}
 \end{restatable}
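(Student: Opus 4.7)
The plan is to follow the framework outlined in the introduction, organized around a hierarchical separator decomposition $\mt$ together with a global approximate-pattern data structure at each node. I would first build $\mt$ by repeatedly applying the Lipton--Tarjan balanced shortest-path separator, stopping when a subgraph drops to size $O((1/\eps)^c)$ for a constant $c$ chosen so that the total space charged to the $O(n/(1/\eps)^c)$ leaves stays linear. Each internal node $\alpha$ carries $O(1/\eps)$ uniformly spaced portals $\Pi_\alpha$ on the two shortest paths of its separator $C_\alpha$; since $G$ has diameter $D$, a standard triangle-inequality argument gives $\min_{p\in\Pi_\alpha}(d_G(u,p)+d_G(v,p))\le d_G(u,v)+\eps D$ for any pair whose shortest path crosses $C_\alpha$. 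At each leaf I plug in the Long--Pettie exact oracle in one of the two regimes recalled in the introduction, yielding either the $((1/\eps)^{o(1)},\log^{2+o(1)}(1/\eps))$ or the $(\log^{2+o(1)}(1/\eps),(1/\eps)^{o(1)})$ trade-off between leaf space and leaf query time, which determines which of (1) and (2) the overall oracle realizes.

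At each internal $\alpha$ I install the global pattern machinery. The \emph{approximate distance pattern} of a vertex $v$ is its distance vector to $\Pi_\alpha$, each coordinate rounded to a multiple of $\eps D/\polylog(1/\eps)$ and truncated to $[0,D]$. Using the Li--Parter VC-dimension argument applied to the shortest-path metric restricted to $\Pi_\alpha$, the number of distinct such patterns at $\alpha$ is $\poly(1/\eps)$. For every ordered pair of stored patterns $(\vec{p},\vec{q})$ I then precompute $\min_i(\vec{p}[i]+\vec{q}[i])$ and store the result in a two-dimensional lookup table at $\alpha$ of size $\poly(1/\eps)$. Each vertex holds an $O(\log(1/\eps))$-bit pointer to its pattern at every ancestor of its leaf; a query $(u,v)$ iterates over the ancestors of the leaves containing $u$ and $v$, reads one table entry per level in $O(1)$ time, and returns the minimum (together with the exact answer from the Long--Pettie oracle if $u,v$ share a leaf).

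The main obstacle is propagating patterns across levels of $\mt$. Computing $v$'s pattern at a deep ancestor $\beta$ during a query would destroy the time bound, so, following Fredslund-Hansen--Mozes--Wulff-Nilsen~\cite{FMW20}, I would use the \emph{induced} pattern obtained from $v$'s pattern at its leaf $\alpha$ by a min-plus combination with precomputed portal-to-portal distances between $\Pi_\alpha$ and $\Pi_\beta$. In the exact setting of~\cite{FMW20} the induced pattern is itself a stored pattern, but here it need not be: the point requiring the most care is to prove that the induced pattern is within $O(\eps D/\polylog(1/\eps))$ in $\ell^\infty$ of some stored approximate pattern at $\beta$, so that a precomputed snap-to pointer from every inducible pattern to its nearest stored representative lets the query do an $O(1)$ table lookup. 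The delicate accounting is to choose the per-level rounding granularity so that the additive errors accumulated along the $O(\log n)$-deep path from $\alpha$ to the root telescope to at most $\eps D$; this is where the $\log^{2+o(1)}(1/\eps)$ or $(1/\eps)^{o(1)}$ overheads in the two trade-offs arise.

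Finally, the construction as described stores $O(\log n)$ pointers per vertex, contributing an unwanted $\log n$ factor in space. To replace it by the additive $O(\log^*(n))$ promised by the theorem, I would iterate the construction: after the primary oracle is built, the residual structure whose distances still need to be served, namely the portal quotient graphs at each level together with the leaves of $\mt$, has total size a constant factor smaller than $n$, so recursing on it $O(\log^* n)$ times contributes only $O(n\log^* n)$ to the space. Each recursive layer is consulted in $O(1)$ time during a query, hence the query bound is inherited unchanged from the Long--Pettie leaf oracle combined with the single-level pattern lookup, and the two trade-offs (1) and (2) of the theorem follow.
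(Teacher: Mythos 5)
Your overall architecture (balanced shortest-path separator decomposition, global $\poly(1/\eps)$-size pattern tables via Li--Parter VC-dimension, snapping induced patterns to stored ones, Long--Pettie at the leaves) matches the paper's proof. However, two steps in your account would fail as stated.

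First, the query procedure. You propose to ``iterate over the ancestors of the leaves containing $u$ and $v$, read one table entry per level in $O(1)$ time, and return the minimum.'' The decomposition tree has depth $\Theta(\log n)$, so this loop gives $\Theta(\log n)$ query time, which contradicts the claimed bounds of $\log^{2+o(1)}(1/\eps)$ or $(1/\eps)^{o(1)}$ (neither of which carries a $\log n$ term). The paper avoids this entirely: given $u,v$ in distinct leaf regions $R_u,R_v$, it computes $A=\lca_{\mathcal{T}}(R_u,R_v)$ in $O(1)$ time using a preprocessed LCA structure, then looks up \emph{only} the pattern-pair table stored at $A$, using $O(1)$-time snap pointers from the leaf's pattern to an approximate pattern at the child of $A$ (Lemma~\ref{lm:query-approximiate}). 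No iteration over the $\Theta(\log n)$ ancestors ever happens at query time. Taking a minimum over all levels is sound but unnecessary, and it is the source of the overhead your sketch cannot afford.

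Second, the $\log^*(n)$ space term. You argue that the ``residual structure ... has total size a constant factor smaller than $n$, so recursing on it $O(\log^* n)$ times contributes only $O(n\log^* n)$.'' A constant-factor shrink per round terminates in $\Theta(\log n)$ rounds, not $O(\log^* n)$; after only $\log^* n$ rounds of constant-factor reduction the residual is still $\Omega(n)$. The paper's bootstrapping is genuinely different: the first-level decomposition stops at leaf regions with $\Theta(\log(n)/\eps^c)$ marked vertices (not $(1/\eps)^c$), it recursively re-runs the \emph{entire} pattern construction inside each leaf region with stopping size $\Theta(\log\log(n)/\eps^c)$, then $\Theta(\log\log\log(n)/\eps^c)$, and so on, giving an \emph{exponential} per-level size drop and hence depth $O(\log^* n)$; only when the region size hits $\Theta(1/\eps^c)$ does it invoke Long--Pettie. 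Moreover, the $\log n$ factor from storing data for each of the $O(\log n)$ ancestors is not eliminated by packing per-vertex pointers as you suggest, but by amortization: the $O(\log n)$ composition tables are stored once per \emph{leaf region} of size $\Theta(\log(n)/\eps^c)$, and the $O(n\eps^c/\log n)$ leaf regions times $O(\log n)$ ancestors times $\poly(1/\eps)$ table size gives $O(n)$. Your stopping size $(1/\eps)^c$ at the top level makes this amortization impossible (there is no $\log n$ in the leaf size to cancel the $\log n$ ancestors), and your per-vertex storage of an ancestor pointer at every level has no such amortization available at all.

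A smaller point: the composition you need to justify (Lemma~\ref{lm:pattern-comp}) is applied in a single jump from a leaf region to each of its ancestors, in the ``outside'' graph $R^{\mathrm{out}}$, so the error does not accumulate along an $O(\log n)$-deep chain; your remark about errors ``telescoping along the $O(\log n)$-deep path'' suggests a chained composition, which would indeed force a $\polylog(n)$-finer rounding and blow up the pattern count by $\poly(\log n)$ factors. Phrased as a one-shot composition, the error is $O(k\delta)=O(\eps^2 D)$ for $k=O(1/\eps)$ and $\delta=\eps^3 D$, with no $\log n$ dependence.
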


\subsection{Related Work}\label{subsec:related-work}

A closely related data structure in planar graphs is \emph{exact} distance oracles. (All exact oracles mentioned in the following work for planar \emph{digraphs}; thus, they are naturally applicable to planar undirected graphs.) There is a very long line of work on constructing exact distance oracles,  starting from the seminal papers of Lipton and Tarjan~\cite{LT79,LT80} who constructed an exact oracle with $O(n^{3/2})$ space and $O(\sqrt{n})$ query time. Subsequent results~\cite{ACCDSZ96,Djidjev96,CX00,FR01,Cabello10,Wulff-Nilsen10,MS12} improved the result of Lipton and Tarjan in two ways: designing new space-query time trade-offs~\cite{ACCDSZ96,Djidjev96,CX00,Wulff-Nilsen10} or obtaining a truly subquadratic space-query product~\cite{Djidjev96,CX00,Cabello10,FR01,MS12}. However, none of these oracles has a \emph{truly subquadratic} space and \emph{polylogarithmic query time} until the work of Cohen-Addad, Dahlgaard, and WulffNilsen~\cite{CDW17}. Specifically, they constructed an oracle with $O(n^{5/3})$ space and $O(\log n)$ query time. The result of Cohen-Addad, Dahlgaard, and Wulff-Nilsen  is the major turning point for exact distance oracles: they were the first to use the Voronoi diagram technique of Cabello~\cite{Cabello18}.  Follow-up results~\cite{GMWW18,CGMW19,LP20}, all based on the Voronoi diagram technique, significantly improved the space-query time trade-off of Cohen-Addad, Dahlgaard, and Wulff-Nilsen~\cite{CDW17}, culminating in the oracles by Long and Pettie~\cite{LP20} that have (i) $O(n^{1+o(1)})$ space and $O(\log^{2+o(1)} n)$ query time or (ii) $O(n\log^{2+o(1)})$ space and $O(n^{o(1)})$ query time. (For various other trade-offs, see Table 1 in~\cite{LP20} for details.) The $n^{o(1)}$ factor, while sublinear, is $2^{\Omega(\sqrt{\log n})}$.  Therefore, though we have witnessed tremendous progress on exact distance oracles, the dependency on $n$ of spaces/query time of exact oracles remains far from that of approximate oracles.  

The exact oracle construction of Fredslund-Hansen, Mozes, and Wulff-Nilsen~\cite{FMW20} is fundamentally different from the constructions mentioned above: the main tool is the VC-dimension technique of Li and Parter~\cite{LP19}. Their main goal is to get an oracle with a constant query time; the space bound is $O(n^{5/3 + \delta})$ for any fixed constant $\delta > 0$. The space-query product of their oracle is not competitive with the Voronoi-diagram-based oracles~\cite{GMWW18,CGMW19,LP20}. Furthermore, their construction only works for undirected and unweighted planar graphs.

\section{Preliminaries}

Given a graph $G$, we denoted by $V(G)$ the vertex set of $G$ and $E(G)$ the edge set of $G$. We denote  an edge-weighted graph $G$ with a vertex set $V$, edge set $E$, and  non-negative edge-weight function $w: E\rightarrow \mathbb{R}^+$ by $G = (V,E,w)$. For any two vertices $u,v\in V$, we denote by $d_G(u,v)$ the distance between $u$ and $v$ in $G$. We denote by $SP(u,v,G)$ a shortest path from $u$ to $v$ in $G$. For a given path $P$ containing two vertices $x$ and $y$, we denote by $P[x,y]$ the $x$-to-$y$ subpath of $P$.

 Let $G = (V,E,w)$ be an edge-weighted planar graph equipped with a planar embedding, called a \emph{plane graph}. A region $R$ of $G = (V,E,w)$ is a subgraph of $G$.  A hole of $R$ is a face of $R$ that is not a face of $G$. The boundary of $R$, denoted by $\partial R$, is the set of vertices of $R$ that are on the boundaries of the holes of $R$. Vertices in $V(R)\setminus \partial R$ are called \emph{interior vertices}. A vertex $u\in V(G)\setminus V(R)$ is inside a hole $h$ if $u$ is embedded inside the face $h$ of $R$.

Next we define the notion of \emph{crossing} between two simple paths, say $P$ and $Q$, drawn on the plane.   We say that a path $X$ is a proper subpath of $P$ if $X$ does not contain any endpoint of $P$. Assume that there exists a maximal subpath $X\subseteq P\cap Q$ that is proper. We orient $P$ and $Q$ such that their orientations agree on $X$. Let $B_{\eps}$ be a topological disk containing all points on the plane of distance at most  infinitesimal $\eps > 0$ from points on $X$.  $P$ partitions   $B_{\eps}$ in two regions called the left side and the right side of $P$. We say that $Q$ crosses $P$ if the edge  entering $X$ and the edge leaving $X$ of $Q$ contain points in different sides of $P$ (see Figure~\ref{fig:crossing}(a)).   This crossing definition generalizes naturally to the case where $Q$ is a cycle instead of a path; in this case, any subpath of $Q$ is proper.

\begin{wrapfigure}{r}{0.3\textwidth}
	\vspace{-30pt}
	\begin{center}
		\includegraphics[width=0.28\textwidth]{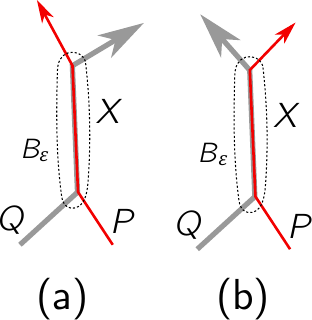}
	\end{center}
	\vspace{-15pt}
	\caption{(a) $P$ and $Q$ cross. (b) $P$ and $Q$ are non-crossing.}
	\label{fig:crossing}
	\vspace{-20pt}
	
\end{wrapfigure}

\paragraph{Distance preserving minors.} Let $P$ be a set of terminals in a  graph $G = (V,E,w)$. Let $K$ be the graph obtained by adding all pairwise shortest paths in $G$ between terminals in $P$, and contracting degree-2 vertices that are not in  $P$; we assume that the shortest paths are chosen in such a way that the intersection of any two shortest paths is either empty or connected. The weight of an edge in $K$ is the shortest distance between its endpoints in $G$. $K$ is a minor of $G$, and is called a \emph{distance preserving minor} for $P$~\cite{KNZ14}. If $G$ is planar, then $K$ is also planar. 

\begin{lemma}[Theorem 2.1~\cite{KNZ14}]\label{lm:dist-minor-size} Let $P$ be a set of $k$ terminal in a graph $G = (V,E,w)$, then its distance preserving minor $K$ has size $O(k^4)$.
\end{lemma}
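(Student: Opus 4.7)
The plan is to bound $|V(K)|$ by first counting $|E(K)|$ and then invoking the fact that every non-terminal vertex of $K$ has degree at least three. Concretely, let $H := \bigcup_{i,j \in P} SP(i,j,G)$ be the union of the consistent shortest paths chosen in the construction, so that $K$ arises from $H$ by contracting every non-terminal degree-two vertex. After this contraction every non-terminal $v \in V(K)$ satisfies $\deg_K(v) \geq 3$, so by the handshake inequality it suffices to show $|E(K)| = O(k^4)$; indeed then $|V(K)| \leq |P| + \tfrac{2}{3}|E(K)| = O(k^4)$.

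The key step is to argue that each individual path $SP(i,j,G)$ contributes at most $O(k^2)$ edges to $K$. Fix $(i,j)$ and orient the path from $i$ to $j$. For every other pair $(k,l) \neq (i,j)$, the consistency assumption forces $SP(i,j,G) \cap SP(k,l,G)$ to be either empty or a single subpath of $SP(i,j,G)$, contributing at most two \emph{events}: its entry and exit positions on $SP(i,j,G)$. Summing over the $O(k^2)$ other pairs yields at most $O(k^2)$ events along $SP(i,j,G)$. Between two consecutive events the set of other paths sharing this open stretch is constant, so every interior vertex of the stretch has exactly two incident edges in $H$, gets contracted away, and the whole stretch collapses to at most one edge of $K$. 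Summing over all $\binom{k}{2}$ pairs, and noting that every edge of $K$ lies on at least one $SP(i,j,G)$, then gives $|E(K)| = O(k^4)$, and hence $|V(K)| = O(k^4)$ as claimed.

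The main obstacle I anticipate is establishing the consistency assumption itself: simultaneously choosing one shortest path per terminal pair so that any two of them intersect in a connected subpath. This is the structural ingredient of~\cite{KNZ14}, typically handled by a lexicographic tie-breaking rule on edges; once it is granted, the edge-counting above is essentially bookkeeping.
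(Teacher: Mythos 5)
The paper states this lemma purely as a citation to \cite{KNZ14} and gives no in-text proof, so there is nothing internal to compare against; your reconstruction is correct and is the standard branching-vertex count underlying Theorem~2.1 of \cite{KNZ14}. Note also that the consistency of the chosen shortest paths, which you flag as the main potential obstacle, is already built into the paper's definition of a distance preserving minor (``we assume that the shortest paths are chosen in such a way that the intersection of any two shortest paths is either empty or connected''), so you may take it as given; the remaining count --- at most two branch events per other pair (including the degenerate single-vertex intersection), hence $O(k^2)$ branch vertices and $O(k^2)$ edges of $K$ per path, $O(k^4)$ edges overall, and the degree-at-least-$3$ handshake step to bound $|V(K)|$ --- is sound.
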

Our construction uses \Cref{lm:dist-minor-size} for the case where $G$ is planar and $P$ is on the outer face of $G$.

\paragraph{Exact distance oracles.~} Long and Pettie~\cite{LP20} constructed an exact distance oracle for planar digraphs. In our paper, we use their results for undirected graphs.

\begin{theorem}[Theorem 1.1~\cite{LP20}]\label{thm:LongPettie} Let $G = (V,E,w)$ be any given  planar digraph with $n$ vertices. We can construct an exact distance oracle in time $n^{3/2+o(1)}$ that has:
	\begin{enumerate}
		\item[(1)]  $n^{1 + o(1)}$ space and $\log^{2+o(1)}n$ query time or
		\item[(2)] $n\log^{2+o(1)}$  space  and $n^{o(1)}$ query time.
	\end{enumerate}
\end{theorem}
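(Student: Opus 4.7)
The statement is quoted verbatim as Theorem~1.1 from Long and Pettie~\cite{LP20}, so within the present paper the ``proof'' is simply to invoke their construction as a black box. That said, if I were to reconstruct the argument, I would follow the Voronoi-diagram paradigm initiated by Cabello~\cite{Cabello18} and developed in the chain~\cite{CDW17,GMWW18,CGMW19,LP20}. The high-level plan combines an $r$-division with additively weighted Voronoi diagrams on pieces. First, apply Frederickson's $r$-division to decompose $G$ into $O(n/r)$ pieces of $O(r)$ vertices each, with $O(\sqrt r)$ boundary vertices on each piece. For a source $u$ outside a piece $R$ and target $v \in R$, any $u$-to-$v$ shortest path enters $R$ through some $b \in \partial R$, so $d_G(u,v) = \min_{b \in \partial R}\bigl(d_G(u,b) + d_R(b,v)\bigr)$. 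Treating $d_G(u,b)$ as an additive weight on site $b$ in the dual of $R$, the minimizing $b$ is identified by point location in the additively weighted Voronoi diagram of $\partial R$, and that cell's witness then gives $d_G(u,v)$.

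Next, I would avoid materializing one Voronoi diagram for every (source, piece) pair by sharing structure: as $u$ moves along a shortest-path separator in an enclosing region, the Voronoi diagrams of a fixed piece change in a controlled way, so the family of diagrams arising from all $u$ on a separator can be encoded compactly via Klein's multiple-source shortest path tree together with the abstract Voronoi-diagram construction of~\cite{CDW17}. On top of that encoding one builds a point-location structure whose depth dictates query time. Iterating the scheme across $O(\log n / \log r)$ recursive levels of $r$-divisions, and balancing the space contributed per level against the per-level query cost, yields the two extreme trade-offs in the statement: regime~(1) chooses parameters so that all Voronoi point-location structures fit in $n^{1+o(1)}$ space at the cost of $\log^{2+o(1)}n$ queries, while regime~(2) pays an extra $\log^{2+o(1)}n$ factor in space to drive the query time down to $n^{o(1)}$. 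The preprocessing time $n^{3/2+o(1)}$ comes from solving all-pairs boundary-to-boundary distances across pieces.

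The main obstacle — and the technical core of~\cite{LP20} — is the point-location data structure for additively weighted Voronoi diagrams in planar dual pieces. The earlier papers~\cite{CDW17,GMWW18,CGMW19} achieved $O(\log n)$ query time for point location, which is why their space-query products carry explicit $\log n$ factors; Long and Pettie instead build a hierarchical point-location structure whose slowdown per level is $n^{o(1)}$ rather than $O(\log n)$, and this subpolynomial blow-up is precisely what lets the space-query product collapse to $n^{1+o(1)}$. Reproducing this hierarchy in detail would require essentially all of~\cite{LP20}, so for the purposes of the present paper I would simply cite Theorem~1.1 of~\cite{LP20} and remark that its statement for directed graphs subsumes the undirected case, because an undirected edge-weighted planar graph can be viewed as a planar digraph by replacing each edge with two oppositely oriented arcs of the same weight.
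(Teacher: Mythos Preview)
Your proposal is correct. The paper does not prove this theorem at all; it simply states it as a black-box citation of Theorem~1.1 from Long and Pettie~\cite{LP20}, exactly as you note in your first sentence, and your additional sketch of the Voronoi-diagram machinery is accurate background but goes well beyond what the paper itself provides.
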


\paragraph{VC dimension.~} Let $U$ be a ground set, and $\mathcal{F}$ be a family of subsets of $U$. We say that $\mathcal{F}$ shatters a set $X\subseteq U$ if for every subset $Y\subseteq X$, there exists a set $Z\in \mathcal{F}$ such that $Z\cap X = Y$. We say that $\mathcal{F}$ has VC-dimension $k$ if the largest set shattered by $\mathcal{F}$ has size $k$. The famous Sauer–Shelah lemma~\cite{Sauer72,Shelah72} bounds the size of $\mathcal{F}$ when its VC-dimension is at most $k$.

\begin{lemma}[Sauer–Shelah Lemma]\label{lm:SS}  Let $\mathcal{F}$ be a family of subsets of a ground set with $n$ elements. If  VC-dimension of $\mathcal{F}$ is at most $k$, then $|\mathcal{F}| = O(n^{k})$.
\end{lemma}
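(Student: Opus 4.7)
The plan is to prove the sharper bound $|\mathcal{F}| \le \sum_{i=0}^{k}\binom{n}{i}$ by double induction on $n$ and $k$; since this sum is $O(n^k)$ when $k$ is fixed (each of the $k+1$ terms is $O(n^k)$), the stated bound follows. For base cases, $k=0$ forces $|\mathcal{F}|\le 1$ because otherwise two distinct sets would differ on some element $x$ and together shatter $\{x\}$, while $n=0$ trivially gives $|\mathcal{F}|\le 1$.

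For the inductive step, fix any $x\in U$ and split $\mathcal{F}$ into two auxiliary families over the smaller ground set $U' = U\setminus\{x\}$. The first is the trace $\mathcal{F}_1 = \{\, S\setminus\{x\} : S\in\mathcal{F}\,\}$, and the second is $\mathcal{F}_2 = \{\, S\subseteq U' : S\in\mathcal{F}\ \text{and}\ S\cup\{x\}\in\mathcal{F}\,\}$, i.e.\ those ``twin'' subsets of $U'$ which appear in $\mathcal{F}$ both with and without $x$. A direct double count shows $|\mathcal{F}| = |\mathcal{F}_1| + |\mathcal{F}_2|$: each $T\in\mathcal{F}_1$ arises from one or two sets in $\mathcal{F}$, and the number of $T$'s arising from two is exactly $|\mathcal{F}_2|$.

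The key observation, which I expect to be the main (still mild) obstacle, is that the VC-dimensions pass correctly to the smaller ground set. Clearly $\mathcal{F}_1$ has VC-dimension at most $k$ on $U'$. For $\mathcal{F}_2$, if $X\subseteq U'$ were shattered by $\mathcal{F}_2$, then for every $Y\subseteq X$ there would exist $S\in\mathcal{F}_2$ with $S\cap X = Y$, and by the definition of $\mathcal{F}_2$ both $S$ and $S\cup\{x\}$ would lie in $\mathcal{F}$; but then $\mathcal{F}$ shatters $X\cup\{x\}$, forcing $|X|+1\le k$, so $\mathcal{F}_2$ has VC-dimension at most $k-1$. Applying the inductive hypothesis separately to each family and invoking Pascal's identity gives
\[
|\mathcal{F}| \;\le\; \sum_{i=0}^{k}\binom{n-1}{i} + \sum_{i=0}^{k-1}\binom{n-1}{i} \;=\; \sum_{i=0}^{k}\binom{n}{i} \;=\; O(n^k),
\]
completing the proof. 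Once the additivity $|\mathcal{F}| = |\mathcal{F}_1| + |\mathcal{F}_2|$ and the simultaneous drop of $n$ and $k$ for $\mathcal{F}_2$ are set up correctly, everything else is formal bookkeeping.
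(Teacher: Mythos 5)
Your proof is correct and is the standard ``trace decomposition'' proof of the Sauer--Shelah lemma (inducting on $n$ and $k$ simultaneously via the families $\mathcal{F}_1$ and $\mathcal{F}_2$ over $U\setminus\{x\}$, with the additive count $|\mathcal{F}|=|\mathcal{F}_1|+|\mathcal{F}_2|$ and Pascal's identity). Note, however, that the paper does not prove \Cref{lm:SS} at all: it simply cites the classical references \cite{Sauer72,Shelah72}, so there is no in-paper argument to compare against; the one mild thing you could add for completeness is the one-line justification that $\mathcal{F}_1$ has VC-dimension at most $k$, namely that any set shattered by $\mathcal{F}_1$ inside $U'$ is also shattered by $\mathcal{F}$, but this is the routine direction and your proof is otherwise complete.
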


We use $[k]$ to denote the set $\{1,2\ldots, k\}$. If $\vec{x}$ is a $k$-dimensional vector, we denote by $\vec{x}[i:j]$ for given $i\leq j\leq k$ the $(j-i+1)$-dimensional vector such that $s$-th entry of $\vec{x}[i:j]$ is $\vec{x}[i+s-1]$ for any $s\in [j-i+1]$. We call vector $\vec{x}[i:j]$ the \emph{$(i,j)$-restriction} of $\vec{x}$.

Let $\mathbf{x},\mathbf{y}\in \mathbb{R}^k$ be  two $k$-dimensional vectors.  If $\lVert \mathbf{x},\mathbf{y}\rVert_{\infty} \leq \delta$, we write $\mathbf{x}\approx_{\smallfont{\delta}} \mathbf{y}$.  (The same notation applies to scalars since we can view them as $1$-dimensional vectors.) Observe by the triangle inequality that:

\begin{observation}\label{obs:apprx} If $\mathbf{x}\approx_{\delta_1} \mathbf{y}$ and $\mathbf{y}\approx_{\delta_2} \mathbf{z}$, then $\mathbf{x}\approx_{\delta_1+\delta_2} \mathbf{z}$.
\end{observation}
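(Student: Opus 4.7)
The plan is to derive the inequality directly from the triangle inequality for the $\ell^\infty$ norm. By the definition introduced just above the observation, the hypotheses $\mathbf{x}\approx_{\delta_1}\mathbf{y}$ and $\mathbf{y}\approx_{\delta_2}\mathbf{z}$ translate to $\lVert \mathbf{x}-\mathbf{y}\rVert_\infty \leq \delta_1$ and $\lVert \mathbf{y}-\mathbf{z}\rVert_\infty \leq \delta_2$, respectively. Writing $\mathbf{x}-\mathbf{z}=(\mathbf{x}-\mathbf{y})+(\mathbf{y}-\mathbf{z})$, the triangle inequality applied to $\lVert\cdot\rVert_\infty$ yields
$$\lVert \mathbf{x}-\mathbf{z}\rVert_\infty \;\leq\; \lVert \mathbf{x}-\mathbf{y}\rVert_\infty + \lVert \mathbf{y}-\mathbf{z}\rVert_\infty \;\leq\; \delta_1+\delta_2,$$
which is precisely the assertion $\mathbf{x}\approx_{\delta_1+\delta_2}\mathbf{z}$.

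If one prefers not to quote the norm-level triangle inequality, the same bound can be checked coordinatewise: for each index $i$, the scalar triangle inequality gives $|\mathbf{x}[i]-\mathbf{z}[i]| \leq |\mathbf{x}[i]-\mathbf{y}[i]|+|\mathbf{y}[i]-\mathbf{z}[i]| \leq \delta_1+\delta_2$, and taking the maximum over $i$ recovers the $\ell^\infty$ bound. There is no real obstacle in the argument; the observation merely packages the triangle inequality in the paper's notation so it can be invoked cleanly later, when approximate distance patterns will be composed through several levels of the separation tree and small $\ell^\infty$ errors need to accumulate additively.
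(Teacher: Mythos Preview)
Your proof is correct and matches the paper's approach exactly: the paper simply introduces the observation with the phrase ``Observe by the triangle inequality that,'' which is precisely the argument you spelled out. There is nothing to add.
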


We can show directly from the definition that:
\begin{claim}\label{clm:min-apprx} If $\mathbf{x}_1\approx_{\smallfont{\delta_1}} \mathbf{x}_2$ and $\mathbf{y}_1\approx_{\smallfont{\delta_2}} \mathbf{y}_2$, then:
	\begin{equation*}
		\min_{i\in [k]}\{\mathbf{x}_1[i] + \mathbf{y}_1[i]\} \approx_{\smallfont{\delta_1 + \delta_2}}\min_{i\in [k]}\{\mathbf{x}_2[i] + \mathbf{y}_2[i]\}~.
	\end{equation*}
	where $k$ is the dimension of these vectors.
\end{claim}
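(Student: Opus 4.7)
The plan is to prove the two inequalities $\min_i\{\mathbf{x}_1[i]+\mathbf{y}_1[i]\} \le \min_i\{\mathbf{x}_2[i]+\mathbf{y}_2[i]\} + (\delta_1+\delta_2)$ and the symmetric one, which together give the $\ell^\infty$-closeness claimed. The underlying fact we need is that coordinate-wise sums are close: from $\mathbf{x}_1\approx_{\delta_1}\mathbf{x}_2$ and $\mathbf{y}_1\approx_{\delta_2}\mathbf{y}_2$, the triangle inequality applied to $|(\mathbf{x}_1[i]+\mathbf{y}_1[i])-(\mathbf{x}_2[i]+\mathbf{y}_2[i])| \le |\mathbf{x}_1[i]-\mathbf{x}_2[i]|+|\mathbf{y}_1[i]-\mathbf{y}_2[i]|$ yields
\begin{equation*}
  \mathbf{x}_1[i]+\mathbf{y}_1[i] \;\le\; \mathbf{x}_2[i]+\mathbf{y}_2[i]+\delta_1+\delta_2 \qquad \text{for every } i\in[k],
\end{equation*}
and the analogous reverse inequality.

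With this in hand, I would finish by a short swap-the-minimizer argument. Let $i^\star \in \arg\min_{i\in[k]}\{\mathbf{x}_2[i]+\mathbf{y}_2[i]\}$. Then
\begin{equation*}
  \min_{i\in[k]}\{\mathbf{x}_1[i]+\mathbf{y}_1[i]\} \;\le\; \mathbf{x}_1[i^\star]+\mathbf{y}_1[i^\star] \;\le\; \mathbf{x}_2[i^\star]+\mathbf{y}_2[i^\star]+\delta_1+\delta_2 \;=\; \min_{i\in[k]}\{\mathbf{x}_2[i]+\mathbf{y}_2[i]\}+\delta_1+\delta_2.
\end{equation*}
Repeating the same step with the roles of $(\mathbf{x}_1,\mathbf{y}_1)$ and $(\mathbf{x}_2,\mathbf{y}_2)$ exchanged (using the symmetric coordinate-wise bound) yields the matching inequality, and combining the two gives precisely $\min_i\{\mathbf{x}_1[i]+\mathbf{y}_1[i]\}\approx_{\delta_1+\delta_2}\min_i\{\mathbf{x}_2[i]+\mathbf{y}_2[i]\}$.

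There is no real obstacle here: the proof is essentially a two-line application of the triangle inequality plus the standard observation that the minimum is $1$-Lipschitz in $\ell^\infty$. The only care needed is to pick the minimizer on the correct side when bounding each direction; beyond that, the statement follows immediately from the definition of $\approx_\delta$ as $\lVert\cdot\rVert_\infty$-closeness and from \Cref{obs:apprx} applied coordinate-wise.
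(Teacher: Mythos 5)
Your proof is correct and follows exactly the same route as the paper: establish the coordinate-wise bound $|(\mathbf{x}_1[i]+\mathbf{y}_1[i])-(\mathbf{x}_2[i]+\mathbf{y}_2[i])|\le\delta_1+\delta_2$ via the triangle inequality, then evaluate the left-hand minimum at the right-hand side's minimizer to get one direction, and symmetrize for the other. No gaps; this is precisely the argument in the paper.
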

\begin{proof} Observe by definition that $|\mathbf{x}_1[i] - \mathbf{x}_2[i]| \leq \delta_1$ and $|\mathbf{y}_1[i] - \mathbf{y}_2[i]| \leq \delta_2$ for every $i\in [k]$. It follows from the triangle inequality that  $ \mathbf{x}_1[i] +  \mathbf{y}_1[i]\approx_{\smallfont{\delta_1 + \delta_2}}\mathbf{x}_2[i] + \mathbf{y}_2[i]$ for every $i\in [k]$. Let $i^{*} = \argmin_{i\in [k]}\{\mathbf{x}_2[i] + \mathbf{y}_2[i]\}$. Then:
	\begin{equation*}
		\begin{split}
			\min_{i\in [k]}\{\mathbf{x}_1[i] + \mathbf{y}_1[i]\} &\leq  \mathbf{x}_1[i^*] +  \mathbf{y}_1[i^*] \\
			&\leq \mathbf{x}_2[i^*] + \mathbf{y}_2[i^*] + (\delta_1+\delta_2) \\&= \min_{i\in [k]}\{\mathbf{x}_2[i] + \mathbf{y}_2[i]\} + (\delta_1+\delta_2) ~.
		\end{split}
	\end{equation*}
	By the same argument, $\min_{i\in [k]}\{\mathbf{x}_2[i] + \mathbf{y}_2[i]\} \leq \min_{i\in [k]}\{\mathbf{x}_1[i] + \mathbf{y}_1[i]\} + (\delta_1+\delta_2)$; this implies the claim.
\end{proof}

\section{Distance Oracles with Additive Stretch}

In this section, we construct an oracle with additive stretch for planar graphs as claimed in \Cref{thm:additiveOracle}. For a simpler presentation of the ideas, we first prove a weaker version of \Cref{thm:additiveOracle}, which is \Cref{thm:additiveOracleEasy} below, where we allow a $\log^{o(1)}(n)$ factor in the space and query time. We then present a full proof of \Cref{thm:additiveOracle} in \Cref{subsec:additiveStrong}. 

\begin{theorem}\label{thm:additiveOracleEasy} Let $\eps > 0, D > 0$ be a positive parameters and $G = (V,E,w)$ be an undirected $n$-vertex planar graph of diameter $D$. There is an approximate distance oracle with   additive stretch $+\eps D$ that has $O(n\poly(\log n,\eps))$ construction time  and:
	\begin{itemize}
		\item[(1)]  $n(1/\eps)^{o(1)}\log^{o(1)}(n)$ space  and  $\log^2(1/\eps) + (\log\log(n))^{2+o(1)}$ query time or 
		\item[(2)] $n(\log^{2+o(1)}(\log n) + \log^{2+o(1)}(1/\eps))$ space  and  $\log^{o(1)}(n)(1/\eps)^{o(1)}$ query time.
	\end{itemize}
\end{theorem}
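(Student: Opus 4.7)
The plan is to recursively decompose $G$ along balanced shortest path separators (Lipton--Tarjan) until every leaf subgraph has size $n^\star = \poly(1/\eps,\log n)$, producing a separator tree $\mathcal{T}$ of depth $O(\log n)$. At each internal node $\beta$ the separator $S_\beta$ consists of $O(1)$ shortest paths of total length at most $D$; on $S_\beta$ I place a uniformly spaced set $P_\beta$ of $O(1/\eps)$ \emph{portals}, which gives the global-portalization guarantee that for every $u,v$ on opposite sides of $S_\beta$, some $p\in P_\beta$ satisfies $d_G(u,p)+d_G(v,p)\le d_G(u,v)+\eps D$. At each leaf, same-leaf queries are answered by applying the exact oracle of \Cref{thm:LongPettie} to the leaf subgraph; regime (1) of \Cref{thm:LongPettie} yields option (1) of \Cref{thm:additiveOracleEasy} and regime (2) yields option (2), with the $n^{o(1)}$ and $\log^{2+o(1)} n$ factors becoming the $\log^{o(1)}(n)(1/\eps)^{o(1)}$ and $(\log\log n)^{2+o(1)}+\log^{2+o(1)}(1/\eps)$ terms after plugging in $n^\star$.

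For cross-separator queries my goal is that each vertex $v$ stores, at every ancestor $\beta$, only an $O(\log(1/\eps))$-bit pointer instead of the full $|P_\beta|$-vector of distances. I assign to $v$ its approximate distance pattern $\mathbf{p}^\beta_v$, obtained by rounding each $d_G(v,p)$ with $p\in P_\beta$ to multiples of $\Theta(\eps D_\beta)$, where $D_\beta$ is the scale associated with $\beta$. The key combinatorial ingredient, following Li--Parter, is a VC-dimension bound on the set system obtained by viewing each pattern as a concatenation of $O(\log(1/\eps))$ binary threshold indicators; by \Cref{lm:SS} this yields at most $\poly(1/\eps)$ distinct patterns at $\beta$ across all vertices of $G$, independent of $n$. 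At each $\beta$ I precompute once a $\poly(1/\eps)\times\poly(1/\eps)$ table recording, for every ordered pair $(\mathbf{p},\mathbf{q})$ of stored patterns, the min-plus value $\min_{p\in P_\beta}\mathbf{p}[p]+\mathbf{q}[p]$; by \Cref{clm:min-apprx} this approximates the true portal-via distance to within $O(\eps D_\beta)$. A cross-$\beta$ query thus reduces to two pointer dereferences and one table lookup in $O(1)$ time.

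The remaining issue is how $v$ obtains a pointer at every one of its $O(\log n)$ ancestors without storing $\Omega(\log n)$ distances per vertex. Following the induced-pattern idea of Fredslund-Hansen--Mozes--Wulff-Nilsen, I store $v$'s pattern explicitly only at its leaf $\alpha$ (computed via the Long--Pettie oracle there), and for each parent--child pair $(\beta,\gamma)$ of $\mathcal{T}$ I precompute a map $\Phi_{\gamma\to\beta}$ that sends each of the $\poly(1/\eps)$ patterns at $\gamma$ to the pattern at $\beta$ obtained by min-plus composition with the precomputed patterns of $\gamma$'s own boundary portals relative to $P_\beta$. In the approximate setting this composed pattern need not coincide with any stored pattern at $\beta$, but \Cref{obs:apprx} and \Cref{clm:min-apprx} show it is $O(\eps D_\beta)$-close in $\ell^\infty$ to some stored one, so I implement $\Phi_{\gamma\to\beta}$ by precomputing, for every pattern at $\gamma$, a pointer to its nearest stored pattern at $\beta$. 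A query then walks from the leaves containing $u$ and $v$ up to their lowest common ancestor via $O(\log n)$ applications of these $\Phi$ maps, followed by a single table lookup at the LCA (and at most one Long--Pettie query when $u,v$ share a leaf).

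The main obstacle is controlling error propagation together with space. A naive summation of the $O(\eps D_\beta)$ error at each of the $O(\log n)$ hops would yield an unacceptable $O(\eps D\log n)$ stretch, so I plan to work throughout at a rescaled $\eps'=\eps/\Theta(\log n)$; this inflates portal counts and per-node table sizes by $\log n$ factors, which is exactly the slack accommodated by the $\log^{o(1)} n$ terms in \Cref{thm:additiveOracleEasy}. A secondary subtlety is that the VC-dimension argument must be carried out for approximate rather than exact patterns and must hold uniformly across all $O(\log n)$ levels; I plan to handle this by stratifying the patterns by their rounding threshold and invoking \Cref{lm:SS} level by level. Removing the $\log^{o(1)} n$ slack to obtain the sharper \Cref{thm:additiveOracle} is then a separate recursive bootstrap, carried out in \Cref{subsec:additiveStrong}, and is the reason I establish the weaker version first.
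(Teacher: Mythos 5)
Your proposal captures the high-level architecture of the paper's construction---recursive shortest-path-separator decomposition, portals, VC-dimension-bounded approximate patterns, precomputed $\poly(1/\eps)\times\poly(1/\eps)$ min-plus tables, and induced-pattern composition following Fredslund-Hansen--Mozes--Wulff-Nilsen---but it diverges from the paper in a way that breaks the claimed query time. You chain parent--child maps $\Phi_{\gamma\to\beta}$ and ``walk from the leaves containing $u$ and $v$ up to their lowest common ancestor via $O(\log n)$ applications of these $\Phi$ maps.'' That is $\Theta(\log n)$ query time for cross-leaf queries, which exceeds the theorem's $\log^2(1/\eps) + (\log\log n)^{2+o(1)}$ bound whenever $\eps$ is not extremely small. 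The paper achieves $O(1)$ time for the cross-leaf part by storing, at each \emph{leaf} $R$, a direct table indexed by $(\text{ancestor }A, \text{pattern})$: tables $T_{41a}$/$T_{41b}$ in Step 4.1 map a stored pattern of $R$ to the nearest stored pattern at $A$ in a \emph{single} lookup. Crucially, the induced pattern $\widehat{\vec p}\in\widehat{\vec P}_{R,A}$ is computed once, directly in the graph $R^{\mathrm{out}}=(E(A^+)\setminus E(R^+))\cup\partial R^+$ (\Cref{def:patter-induced}, \Cref{lm:pattern-comp}), not by composing through intermediate levels; the error is therefore $O(k\delta)$ per \Cref{lm:pattern-comp} plus one more $k\delta$ for snapping to the nearest stored pattern, with no $\log n$ accumulation. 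Your rescaling $\eps'=\eps/\Theta(\log n)$ is a patch for an error-compounding problem the paper never creates, and it does not rescue the query time.

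A secondary but substantive issue is the pattern definition. You propose ``rounding each $d_G(v,p)$ to multiples of $\Theta(\eps D_\beta)$'' and then invoking a VC-dimension bound on ``binary threshold indicators.'' The Li--Parter bound (\Cref{thm:LP19}) and its use in \Cref{lm:pattern-bound} apply specifically to the family of sets $A_i^\Delta=\{v:d_G(v,\sigma_{i+1})-d_G(v,\sigma_i)\le\Delta\}$, i.e.\ to \emph{consecutive differences} of distances to a cyclically ordered face sequence; the paper's $\delta$-approximate pattern (\Cref{def:appx-pattern}) rounds precisely these differences, and the triangle inequality bounds them by $\pm\eps D$, which is what makes the pattern count $O((kg)^3)=\poly(1/\eps)$ independent of $n$. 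Rounding the raw distances $d_G(v,p)$ does not place you in a set system with bounded VC dimension, and with $O(1/\eps)$ portals and $\Theta(1/\eps)$ rounding classes per portal one could a priori have $(1/\eps)^{\Theta(1/\eps)}$ patterns. You also need to carry a scalar anchor $\appr{\delta}{d(\cdot,\sigma_1)}$ alongside the difference pattern to reconstruct absolute distances (this is the extended set $\vec D^+_R=S_R\times\vec P_R$ in Step 2), a detail your description omits. Finally, patterns are computed during preprocessing by running shortest-path trees from the $O(1/\eps)$ portals in $R^+$, not ``via the Long--Pettie oracle there''---that oracle serves only same-leaf queries.
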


This section is organized as follows. In \Cref{subsec:approxPattern} we introduce the notion of approximate patterns. In \Cref{subsec:approxDecoding}, we show how to compute an approximate distance of two vertices given their approximate patterns to portals on a shortest path separator. In  \Cref{subsec:composition}, we study the composition of two approximate distance patterns, and show that the composition is close to another approximate distance pattern in $\ell^{\infty}$ norm.  In \Cref{subsec:easyOracle} we prove \Cref{thm:additiveOracleEasy}  and in \Cref{subsec:additiveStrong}, we extend the proof of \Cref{thm:additiveOracleEasy} to \Cref{thm:additiveOracle}.

\subsection{Approximate Patterns~}\label{subsec:approxPattern}

Let $G = (V,E,w)$ be a planar graph and $ \sigma$ be a sequence of $k$ vertices of $G$; the $i$-th vertex of $\sigma$ is denoted by $\sigma_i$. For a real number $\Delta\in \mathbb{R}$, which could be negative, positive or zero, and an index $i \in [k-1]$, we define:
\begin{equation}\label{eq:distance-index}
	A_i^{\Delta} = \{v\in V(G)| d_G(v,\sigma_{i+1})-d_G(v,\sigma_i)\leq \Delta\}~.
\end{equation} 
We call the pair $(i,\Delta)$ a \emph{distance index} and $A_i^{\Delta}$ a vertex set associated with the distance index $(i,\Delta)$. See \Cref{fig:VC-dim-example}.  The following theorem is proved by Li and Parter~\cite{LP19}. 

\begin{theorem}[Theorem 3.7, Li-Parter~\cite{LP19}]\label{thm:LP19} Let $G = (V,E,w)$ be a planar graph and $\sigma$ be a sequence of $k$ vertices in clockwise order on a face $f$  of $G$. Let $M$ be a set of real numbers. For each vertex $v\in G$, let $X_v = \{(i,\Delta): i\in [k-1], \Delta \in M, v\in A^{\Delta}_{i}\}$ be the set of distance indices whose associated vertex sets contain $v$.  Let $\mathcal{F} = \{X_v\}_{v\in V(G)}$ be a family of sets of distance indices. Then $\mathcal{F}$ has VC-dimension at most $3$.
\end{theorem}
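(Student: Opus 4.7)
My plan is to argue by contradiction: assume some $4$-element set $S = \{(i_1,\Delta_1),\ldots,(i_4,\Delta_4)\}$ of distance indices is shattered, so that there exist sixteen witness vertices $\{v_Y\}_{Y\subseteq S}$ realizing all subsets of $S$, and derive a planarity-violating configuration. The first step is a bisector reformulation of membership. For each $(i,\Delta)$, augment $G$ by a virtual vertex $s$ placed strictly inside the face $f$ and connected to $\sigma_i$ with weight $0$ and to $\sigma_{i+1}$ with weight $\Delta$; then $v \in A_i^{\Delta}$ iff the shortest $s$-to-$v$ path in the augmented graph visits $\sigma_{i+1}$ rather than $\sigma_i$. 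The associated bisector $B_i^{\Delta}$ is the boundary between the two ``sides'' of this Voronoi competition, and because all augmentations happen inside $f$ the whole construction remains planar. Consequently, $X_v$ records, for each $(i,\Delta) \in [k-1]\times M$, on which side of the planar object $B_i^{\Delta}$ the vertex $v$ lies.

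The second step is to extract a non-interleaving structure from the fact that $\sigma_1,\ldots,\sigma_k$ appear in cyclic order on $\partial f$. Each bisector $B_{i_a}^{\Delta_a}$ separates two consecutive vertices on $\partial f$, so any two bisectors in $S$ correspond either to non-interleaving pairs on $\partial f$ or to pairs sharing one endpoint $\sigma_j$. Using a canonical SSSP-tree representation (with symbolic perturbation to enforce unique shortest paths), I would draw each bisector as a simple arc in the planar embedding with both endpoints on $\partial f$. The four arcs then form a planar subdivision of the disk bounded by $\partial f$, and each vertex $v$ of $G$ lives in a single cell of this subdivision; the cell determines the incidence pattern of $v$ with all four bisectors. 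An Euler-characteristic calculation on this planar subdivision, combined with the non-interleaving/endpoint-sharing constraints from the cyclic order on $\partial f$, should bound the number of distinct cells, and hence the number of realizable incidence patterns, by a quantity strictly less than $2^4=16$, giving the contradiction.

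The main obstacle is the canonical arc representation of bisectors in weighted planar graphs: a priori the vertex set $\{v : d_G(v,\sigma_{i+1}) - d_G(v,\sigma_i) = \Delta\}$ need not be a connected curve, so one must define $B_i^{\Delta}$ combinatorially, for example, as the boundary in the planar embedding between the $\sigma_i$-component and the $\sigma_{i+1}$-component of an additive-weighted SSSP tree rooted at the virtual source $s$, and prove it can be traced as a single simple arc between two points of $\partial f$. A second delicate point is that the theorem asks for a bound of $3$ rather than $2$, so the cell count under the non-interleaving constraint must be tight enough to yield precisely the threshold $2^4-1$ and not something looser that would (incorrectly) rule out $3$-shattering as well. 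Accounting for bisector arcs that share endpoints exactly when their index pairs share a $\sigma_j$, and verifying that this sharing enlarges the cell count in exactly the right way, is where the argument will need the most care.
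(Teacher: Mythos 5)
The paper does not prove this statement: it is cited verbatim from Li and Parter \cite{LP19} (see the remark immediately after the theorem explaining that the original is stated only for $\Delta\in\{-1,0\}$ but extends to arbitrary real sets). So what follows compares your proposal against what the cited proof must accomplish, not against an in-paper argument.

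Your first step is sound. The virtual-source reformulation is the standard way to turn ``$d(v,\sigma_{i+1})-d(v,\sigma_i)\le\Delta$'' into a two-site additive-weighted Voronoi competition (your weight assignment has the sign flipped---you want $s\sigma_i$ to carry weight $\Delta$ and $s\sigma_{i+1}$ weight $0$, or equivalently negate $\Delta$---but this is cosmetic). And the step from ``two sites on a face'' to ``the bisector is a single simple arc in the disk complementary to $f$'' is a known consequence of connectedness of additive-weighted Voronoi cells when all sites lie on one face; it needs a citation or a short proof but is not a gap.

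The genuine gap is in the cell-count step, and you have also mislocated where the danger lies. An arrangement of four arcs in a disk can have far more than $16$ cells; the number grows with the number of pairwise crossings, and nothing in your proposal bounds those crossings. The ``non-interleaving'' observation you invoke concerns the cyclic positions of the consecutive pairs $(\sigma_{i_a},\sigma_{i_a+1})$ on $\partial f$; it says nothing about where the \emph{far} endpoint of each bisector arc lands, and even arcs whose four endpoints are non-interleaving on the circle can cross an arbitrary even number of times. So the Euler-characteristic count, as stated, does not produce a bound below $16$. Your stated worry---that the bound might be so small it wrongly excludes $3$-shattering---is the opposite of the actual risk: without a crossing bound the count is not small enough to exclude $4$-shattering. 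What is missing is a lemma of the form ``two such bisectors cross $O(1)$ times (ideally at most once),'' which itself requires a planarity/Monge-type argument about how competing shortest paths from a fixed interior vertex to four boundary sites can interleave. That lemma is the real content of the Li--Parter proof; your writeup does not supply it and the proposal as written therefore does not close.

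One further small point: even assuming pairwise-once crossings, the Euler computation gives at most $11$ interior cells for four arcs (eight boundary endpoints, six crossings), which does rule out $16$ patterns while permitting $8$; so the threshold is not as delicate as you feared. The delicacy is entirely in establishing the crossing bound, and also in handling bisectors with $i_1=i_2$ (where nesting, not crossing, is what you must argue) and the $i_2=i_1+1$ shared-endpoint case, which changes the endpoint count and hence the Euler computation.
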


\begin{wrapfigure}{r}{0.4\textwidth}
	\vspace{-20pt}
	\begin{center}
		\includegraphics[width=0.38\textwidth]{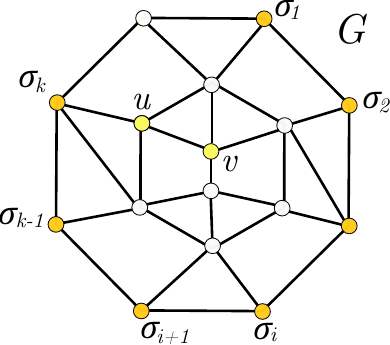}
	\end{center}
	\caption{$G$ is unweighted and $\sigma$ is on the outer face. Both $u$ and $v$ are in $A^{0}_i$.}
	\label{fig:VC-dim-example}
\end{wrapfigure}

\begin{remark}\label{rm:LP-understanding} An intuitive interpretation of \Cref{thm:LP19} is the following. The (typically finite) set $M$ tells us the difference between the distances from a vertex $v$ to two consecutive vertices in the sequence $\sigma$:  if $d_G(v,\sigma_{i+1})-d_G(v,\sigma_i) \leq \Delta$ for some $\Delta \in M$, then $(i,\Delta) \in X_v$. For each $i$, let $\Delta^*_{i}\in M$ be the smallest such that  $(i,\Delta^*_{i}) \in X_v$. Then, given $d_G(v,\sigma_1)$  and $X_v$, we can inductively recover an \emph{upper bound} on $d_G(v,\sigma_{i+1})$ for any given $i\in [1,k-1]$ by unrolling the recursion  $d_G(v,\sigma_{i+1})\leq d_G(v,\sigma_i) + \Delta^*_{i}$. That is, we get $d_G(v,\sigma_{i+1}) \leq d_G(v,\sigma_1) + \sum_{j=1}^{i}\Delta^*_{j}$.  Depending on the choice of $M$, this upper bound could be an exact or approximate estimation of the distance  $d_G(v,\sigma_{i+1})$. Thus, $X_v$ and $M$ encode the distance information from $v$ to vertices in $\sigma$; the notion of approximate distance encoding below formalizes this intuition. From this point of view, the family $\mathcal{F}$ contains the approximate distance encodings of all vertices of $G$ to vertices in $\sigma$. By \Cref{lm:SS}, \Cref{thm:LP19} implies that  there are only a polynomial number (in $k$ and $|M|$) of approximate distance encodings; the number of encodings does not depend on $n$!  
\end{remark}

We note that Theorem 3.7 in~\cite{LP19} is only stated for $\Delta\in \{-1,0\}$; however, as noted by Li and Parter~\cite{LP19} in the proof of Theorem 3.7, it holds for any set of real numbers. They used the general version to approximate weighted diameters of planar graphs. 

Our construction relies on the notion of approximate patterns.  For a given positive real number $\delta$, we define  $\appr{\delta}{a}$ to be the closest integer multiple of $\delta$ that is at least $a$. Specifically,
\begin{equation}\label{eq:delta-approx}
	\appr{\delta}{a} = \lceil \frac{a}{\delta} \rceil \cdot \delta \qquad \forall a \in \mathbb{R}~.
\end{equation}

Next, we define approximate pattern and approximate distance decoding. Our approximate pattern is  the approximate version of (exact) patterns  introduced  by Fredslund-Hansen, Mozes, and Wulff-Nilsen~\cite{FMW20}. 

\begin{definition}[Approximate Pattern and Distance Encoding]\label{def:appx-pattern} Let  $\sigma $ be a sequence of vertices in a graph $G$. Let $u$ be a vertex in $G$.  A \emph{$\delta$-approximate pattern} of $u$ w.r.t. $\sigma$ in $G$ for some parameter $\delta > 0 $ is a $(k-1)$-dimensional  vector  $\vec{p}$ such that $\vec{p}[i] = \appr{\delta}{d_G(u,\sigma_{i+1})-d_G(u,\sigma_i)}$ for every $i\in [k-1]$. \\	 
	A \emph{$\delta$-approximate distance encoding} of $u$ w.r.t.  $\sigma$ in $G$ is a $k$-dimensional vector $\vec{d}$ such that $\vec{d}[1] =  \appr{\delta}{d_{G}(u,\sigma_1)}$ and $\vec{d}[2:k] = \vec{p}$. That is, $\vec{d}[i] =  \vec{p}[i-1] = \appr{\delta}{d_G(u,\sigma_i)-d_G(u,\sigma_{i-1})}$ for all $2\leq i\leq k$.
\end{definition}

Given the distance encoding $\vec{d}$ of a vertex $u$, we can \emph{decode} $\vec{d}$ to get a $k$-dimensional distance vector $\vec{a}$ from $u$ to vertices in $\sigma$ by computing $\vec{a}[i] = \sum_{j=1}^{i}\vec{d}[j]$ for every $i\in [k]$. (In \Cref{lm:approx-dist} below, we show that  $\vec{a}[i]$ is close to $d_G(u,\sigma_i)$.) We can generalize the decoding procedure for any $k$-dimensional vector $\vec{x}$ even if it does not correspond to a distance encoding of any vertex in the graph.  We will use this generalization in our distance oracle construction.

\begin{definition}[Distance Decoding]\label{def:decode} Let $\vec{x}$ be a  $k$-dimensional vector. The distance decoding of $\vec{x}$  is a $k$-dimensional vector, denoted by $\vec{x}^{\leq}$, such that $\vec{x}^{\leq}[i] = \sum_{j=1}^{i}\vec{x}[j]$.
\end{definition}

Next, we show that the distance decoding of a $\delta$-approximate distance encoding of a vertex $u$ in the graph gives approximate distances from $u$ to vertices in $\sigma$. 

\begin{lemma}\label{lm:approx-dist} Let $\vec{d}$  be a $\delta$-approximate distance encoding of $u$ w.r.t. a sequence $\sigma$ of $k$ vertices in graph $G$. For any $i\in [k]$, we define $\tilde{d}_G(u,\sigma_i)  = \vec{d}^{\leq}[i]$. Then,  it holds that
	\begin{equation}\label{eq:approx-dist}
		d_G(u,\sigma_i) \leq \tilde{d}_G(u,\sigma_i) \leq d_G(u,\sigma_i) + i\cdot\delta.~
	\end{equation}
\end{lemma}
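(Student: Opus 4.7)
The plan is to reduce the lemma to two elementary ingredients: a per-coordinate bound on the rounding operator $\appr{\delta}{\cdot}$ from \eqref{eq:delta-approx}, and a telescoping identity for the decoded distances.

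First I would establish the basic rounding bound: for every $a\in\mathbb{R}$, $a\leq \appr{\delta}{a}\leq a+\delta$. This follows directly from \eqref{eq:delta-approx} since $\lceil a/\delta\rceil\in [a/\delta,\,a/\delta+1)$. The bound is valid for negative $a$ as well, which matters here because the consecutive differences $d_G(u,\sigma_j)-d_G(u,\sigma_{j-1})$ appearing in the encoding are signed.

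Applying this per-coordinate bound to \Cref{def:appx-pattern} gives $d_G(u,\sigma_1)\leq \vec{d}[1]\leq d_G(u,\sigma_1)+\delta$ and, for each $2\leq j\leq k$, $d_G(u,\sigma_j)-d_G(u,\sigma_{j-1})\leq \vec{d}[j]\leq d_G(u,\sigma_j)-d_G(u,\sigma_{j-1})+\delta$. Summing these inequalities for $j=1,\ldots,i$ and invoking \Cref{def:decode} for $\vec{d}^{\leq}[i]$, the lower bounds collapse by telescoping to the exact value $d_G(u,\sigma_i)$, proving the left inequality of \eqref{eq:approx-dist}. The upper bounds contribute the same telescoping sum plus at most $i$ rounding errors of size $\delta$, giving the right inequality.

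There is no real obstacle: the lemma is a one-line telescoping argument on top of the basic rounding bound, and the only point worth flagging is the applicability of $a\leq\appr{\delta}{a}\leq a+\delta$ to signed values of $a$, which ensures that the decoding interacts cleanly with the differences encoded in $\vec{d}[2],\ldots,\vec{d}[k]$.
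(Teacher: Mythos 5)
Your proof is correct and follows essentially the same route as the paper's: apply the per-coordinate rounding bound $a\leq\appr{\delta}{a}\leq a+\delta$ to each entry of the encoding, sum over $j=1,\ldots,i$, and let the telescoping collapse the differences to $d_G(u,\sigma_i)$ while the rounding errors accumulate to at most $i\delta$. Your explicit note that the rounding bound holds for signed $a$ is a small but legitimate point that the paper leaves implicit.
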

\begin{proof} Let $\vec{p}$ be the $\delta$-approximate pattern of $u$ w.r.t. $\sigma$.  Observe by the definition of  $\appr{\delta}{\cdot}$ and  \Cref{def:appx-pattern} that for any $j\in [k-1]$:
	\begin{equation}\label{eq:expand}
		d_G(u,\sigma_{j+1})-d_G(u,\sigma_{j})\leq \vec{p}[j] \leq d_G(u,\sigma_{j+1})-d_G(u,\sigma_{j}) + \delta
	\end{equation}
	By summing both sides of \Cref{eq:expand} when $j = 1,\ldots, i-1$, it follows that:
	\begin{equation*}
		d_G(u,\sigma_i)-d_G(u,\sigma_1)\leq\sum_{j=1}^{i-1} \vec{p}[j] \leq d_G(u,\sigma_i)-d_G(u,\sigma_1) + (i-1)\delta~.
	\end{equation*}
	Thus, we have:
	\begin{equation}\label{eq:semi-final}
		d_G(u,\sigma_i)\leq d_G(u,\sigma_1) + \sum_{j=1}^{i-1}\vec{p}[j] \leq d_G(u,\sigma_i)+ (i-1)\delta~.
	\end{equation}
	By definition of  $\appr{\delta}{\cdot}$, $ d_G(u,\sigma_1) \leq \appr{\delta}{ d_G(u,\sigma_1)}\leq  d_G(u,\sigma_1) + \delta$. Thus, \Cref{eq:semi-final} implies:
	\begin{equation}\label{eq:final}
		d_G(u,\sigma_i)\leq \appr{\delta}{ d_G(u,\sigma_1)} + \sum_{j=1}^{i-1}\vec{p}[j] \leq d_G(u,\sigma_i)+ i\delta~.
	\end{equation}
 By \Cref{def:appx-pattern}, $\appr{\delta}{ d_G(u,\sigma_1)} + \sum_{j=1}^{i-1}\vec{p}[j] = \sum_{j=1}^{i} \vec{d}[j] = \vec{d}^{\leq}[i]$. The lemma now follows from \Cref{eq:final}.
\end{proof}

We note that by the definition of $\tilde{d}_G$ in \Cref{lm:approx-dist}, 
\begin{remark}\label{remrk:dusigma1} $\tilde{d}_G(u,\sigma_1) = \appr{\delta}{d_{G}(u,\sigma_1)}$.
\end{remark}

The following lemma bounds the number of patterns when $|d_G(u,\sigma_{i+1}) - d(u,\sigma_i)|$ is not much larger than $\delta$. 

\begin{lemma}\label{lm:pattern-bound} Let $G = (V,E,w)$ be a planar graph and $\sigma$ be a sequence of $k$ vertices in clockwise order on a face $f$  of $G$. Let $g$ be a non-negative integer such that:
	\begin{equation}\label{eq:usiPlus-usi}
		-g\delta \leq d_G(u,\sigma_{i+1}) - d(u,\sigma_i) \leq g\delta \qquad \forall i \in [k-1]
	\end{equation}
For every vertex $u\in V$, let $\vec{p}_u$ be the $\delta$-approximate pattern of $u$ w.r.t. $\sigma$. Let $\vec{P} = \{\vec{p}_u: u\in V\}$ be the set of all $\delta$-approximate patterns w.r.t. $\sigma$. Then $|\vec{P}| = O((kg)^3)$.
\end{lemma}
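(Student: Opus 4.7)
The plan is to apply Theorem 3.7 (Li--Parter VC-dimension) with a carefully chosen finite set $M$ of real offsets, and then to show that each $\delta$-approximate pattern $\vec{p}_u$ is determined by the set $X_u$ produced by Theorem 3.7. Since Sauer--Shelah (Lemma 2) bounds the number of distinct $X_u$, this will bound $|\vec{P}|$.

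First I would choose
\begin{equation*}
M \;=\; \{-g\delta,\,-(g-1)\delta,\,\ldots,\,-\delta,\,0,\,\delta,\,\ldots,\,(g-1)\delta,\,g\delta\},
\end{equation*}
so $|M| = 2g+1$. The hypothesis \eqref{eq:usiPlus-usi} guarantees that for every $u \in V$ and every $i \in [k-1]$, the rounded value $\vec{p}_u[i] = \appr{\delta}{d_G(u,\sigma_{i+1})-d_G(u,\sigma_i)}$ is an integer multiple of $\delta$ lying in $[-g\delta,\,g\delta]$, hence $\vec{p}_u[i] \in M$. Next I would invoke Theorem 3.7 with this $M$: since $\sigma$ lies in clockwise order on a face of $G$, the family $\mathcal{F} = \{X_v\}_{v \in V}$ of sets of distance indices has VC-dimension at most $3$. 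The ground set is $[k-1]\times M$, which has cardinality $(k-1)(2g+1) = O(kg)$, so Lemma 2 (Sauer--Shelah) gives $|\mathcal{F}| = O((kg)^3)$.

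The key step is to show that the map $u \mapsto \vec{p}_u$ factors through $u \mapsto X_u$, i.e.\ that $X_u$ alone determines $\vec{p}_u$. For each index $i \in [k-1]$, by the choice of $M$ and the definition of $\appr{\delta}{\cdot}$ in \eqref{eq:delta-approx}, the value $\vec{p}_u[i]$ is precisely the smallest element of $M$ that is $\geq d_G(u,\sigma_{i+1})-d_G(u,\sigma_i)$; equivalently,
\begin{equation*}
\vec{p}_u[i] \;=\; \min\{\Delta \in M : d_G(u,\sigma_{i+1})-d_G(u,\sigma_i) \leq \Delta\} \;=\; \min\{\Delta \in M : (i,\Delta) \in X_u\},
\end{equation*}
where the last equality uses the defining condition \eqref{eq:distance-index} for $A_i^{\Delta}$. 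Thus $\vec{p}_u$ is recoverable from $X_u$, which implies $|\vec{P}| \leq |\mathcal{F}| = O((kg)^3)$, as desired.

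The only subtle step is verifying that the specific choice of $M$ (a grid of multiples of $\delta$ spanning the full range permitted by \eqref{eq:usiPlus-usi}) is rich enough that the ceiling operation $\appr{\delta}{\cdot}$ always lands inside $M$; without the hypothesis \eqref{eq:usiPlus-usi} bounding the magnitude of $d_G(u,\sigma_{i+1})-d_G(u,\sigma_i)$, $M$ would need to be infinite and the Sauer--Shelah bound would not apply. Everything else is a direct composition of Theorem 3.7 with Lemma 2.
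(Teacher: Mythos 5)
Your proof is correct and takes essentially the same approach as the paper: the same choice of $M = \{-g\delta,\ldots,g\delta\}$, the same invocation of \Cref{thm:LP19} plus \Cref{lm:SS}, and the same core observation linking $X_u$ to $\vec{p}_u$. The only difference is one of phrasing: you recover $\vec{p}_u$ from $X_u$ directly via the formula $\vec{p}_u[i] = \min\{\Delta \in M : (i,\Delta)\in X_u\}$, whereas the paper argues the (logically equivalent) contrapositive, that $\vec{p}_u\neq\vec{p}_v$ forces $X_u\neq X_v$ by exhibiting a distinguishing distance index; your formulation is arguably the cleaner of the two.
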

\begin{proof} Let $M = \{-g\delta, (-g+1)\delta, \ldots, -\delta, 0, \delta, \ldots, g\delta\}$ be a set of $(2g+1)$ real numbers. Recall that $X_u = \{(i,\Delta)\}$ is the set of distance indices associated with a vertex $u$, where $\Delta \in M$. By \Cref{thm:LP19}, $\mathcal{F}= \{X_v\}_{v\in V(G)}$ has VC-dimension at most $3$. Since the ground set $\{(i,\Delta)\}_{i\in [k-1], \Delta \in M}$ has size at most $k(2g+1)$, by \Cref{lm:SS}, $|\mathcal{F}| = O((k(2g+1))^3) = O((kg)^3)$.
	
	We show below that the map $\varphi$ that maps $\vec{p}_u$ to $X_u$ is a bijection from $\vec{P}$ to $\mathcal{F}$, which would imply the claimed bound on  $|\vec{P}|$. By definition, $\varphi$ is surjective. Next, we show that $\varphi$ is injective.
	
	Let $u\not=v$ be two vertices such that $\vec{p}_u\not=\vec{p}_v$. Then there exists some $i\in [k-1]$ such that $\appr{\delta}{d_G(u,\sigma_{i+1}) - d_G(u,\sigma_i)} \not= \appr{\delta}{d_G(v,\sigma_{i+1}) - d_G(v,\sigma_i)}$. Let $\Delta_u = \appr{\delta}{d_G(u,\sigma_{i+1}) - d_G(u,\sigma_i)}$ and $\Delta_v = \appr{\delta}{d_G(v,\sigma_{i+1}) - d_G(v,\sigma_i)}$. Since $\Delta_u\not=\Delta_v$, w.l.o.g, we assume that $\Delta_u < \Delta_v$. By the definition of  $\appr{\delta}{\cdot}$, $d_G(u,\sigma_{i+1}) - d_G(u,\sigma_i) \leq \Delta_u$. Thus, $u\in A^{\Delta_u}_i$, and that $(i,\Delta_u) \in X_u$. 
	
	Similarly, $v\in A^{\Delta_v}_i$. Furthermore, by the definition of $\appr{\delta}{\cdot}$ that  $\Delta_v$ is the least multiple of $\delta$ that is at least $d_G(v,\sigma_{i+1}) - d_G(v,\sigma_i)$. Thus,  there is no other $\Delta'\in M$ such that $\Delta' < \Delta_v$ and $v\in A^{\Delta'}_i$. Since $\Delta_u < \Delta_v$, $v\not\in  A^{\Delta_u}_i$. That is, $(i,\Delta_u)\not\in X_v$. It follows that $X_u\not=X_v$. Therefore,  $\varphi$ is injective.
\end{proof}

We conclude this section by defining the distance between an approximate pattern and a vertex. Recall that $\tilde{d}_G(u,\sigma_i)$ is the approximate distance from $u$ to vertex $\sigma_i$ computed from the  $\delta$-approximate distance encoding of $u$ w.r.t. $\sigma$ (see \Cref{lm:approx-dist}). 

\begin{definition}[Pattern-Vertex Distance]\label{def:patter-ver-dist} Let $v$ be a vertex and $\vec{p}$ be an approximate pattern (of some vertex) w.r.t. a sequence $\sigma$ in $G= (V,E,w)$. Then, the distance between  $\vec{p}$ and $v$ is defined as $\tilde{d}_G(v,\vec{p}) = \min_{1\leq i\leq k}\{\tilde{d}_G(v,\sigma_i) + \sum_{j=1}^{i-1}\vec{p}[j]\}$. 
\end{definition}

\subsection{Computing Distances from Approximate Distance Encodings}\label{subsec:approxDecoding}

A basic building block in our distance oracle is to query the distance between two vertices separated by a cycle $C$ given their approximate patterns to a sequence of vertices on the cycle.  In the work of  Fredslund-Hansen, Mozes, and Wulff-Nilsen~\cite{FMW20}, (exact) patterns are defined w.r.t. \emph{all vertices} on $C$. Using this fact, they can show that, to query the distance between a vertex $u$ outside a cycle $C$ to a vertex $v$, it suffices to compute the distance from $u$ to a fixed vertex $\sigma_1 \in C$, and compute the distance from $v$ to the pattern of $u$ w.r.t. $C$. Cycle $C$ does not have to have any special structure for the distance query to work, and this follows from the fact that their exact patterns encode exact distances. However, this property no longer holds in our setting, as pattern are approximate and defined w.r.t. a \emph{subset} of vertices on $C$.

We introduce a property, call the \emph{single-crossing property}, and we show that if the separating cycle $C$ is single-crossing, one can retrieve the approximate distance between $u$ and $v$ in different sides of $C$ based on their approximate patterns to the boundary. 

\begin{definition}[Single-Crossing Property] Let $C$ be a simple cycle in a plane graph $G = (V,E,w)$. We say that $C$ is single-crossing if for any two vertices $u,v$ such that $u \in \overline{\inter(C)}$ and $v\in \overline{\exter(C)}$, then there exists a shortest path from $u$ to $v$ crossing $C$ at most once. 
\end{definition}

Let $\sigma$ be a sequence of vertices ordered clockwise on $C$. We say that $\sigma$ is a \emph{$\tau$-cover} of $C$ for some $\tau \geq 0$ if for every $u\in C$, $d_C(\sigma_i,u)\leq \tau$ where $i \in [k]$ is the index such that $u\in C[\sigma_i,\sigma_{i+1}]$; here $\sigma_{k+1} = \sigma_1$.  In this section, we show the following. 

\begin{figure}[!htb]
	\includegraphics[width=.9\textwidth]{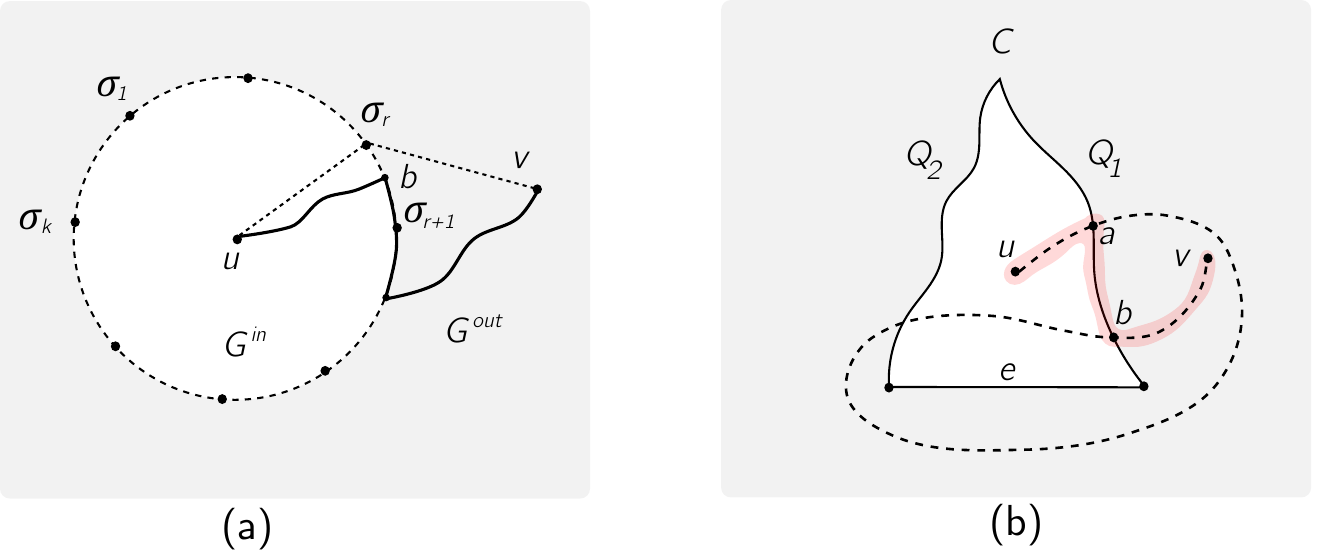}
	\caption{(a) The solid path from $u$ to $v$ is $SP(u,v,G)$; this path crosses the shortest path separator (dashed cycle) once. (b) The dashed path is a shortest path from $u$ to $v$ that crosses the path $Q_1$ in a shortest path separator at least twice. The red-highlighted path is a new shortest path from $u$ to $v$ that crosses $Q_1$ only once.}
	\label{fig:distEncode}
\end{figure}

\begin{lemma}\label{lm:separating-cycle}Let $\delta > 0, \tau \geq 0$ be parameters. Let  $C$ be single-crossing simple cycle of a plane graph $G = (V,E,w)$, and  $\sigma$ be a sequence of $k$ vertices ordered clockwise on $C$ that $\tau$-covers $C$. Let $G^{in}$ ($G^{out}$)  be the subgraph of $G$ induced by vertices inside (outside) or on $C$. Let $u\in V(G^{in})$ and $v\in V(G^{out})$ be any two vertices.  Let $\vec{p}_{u}$ and  $\vec{d}_{u}$ ($\vec{p}_{v}$ and  $\vec{d}_{v}$) be the $\delta$-approximate pattern  and $\delta$-approximate distance encoding, respectively, of $u$ ($v$)  w.r.t. $\sigma$ in $G^{in}$ ($G^{out}$). Let
	\begin{equation}\label{eq:dtilde-uv}
		\tilde{d}_G(u,v) \defi  \min_{1\leq i\leq k}\left\{\tilde{d}_{G^{in}}(u,\sigma_i) + \tilde{d}_{G^{out}}(v,\sigma_i)\right\} 
	\end{equation}
	where $\tilde{d}_{G^{in}}(u,\sigma_i) = \vec{d}^{\leq}_{u}[i]$ and $\tilde{d}_{G^{out}}(v,\sigma_i) = \vec{d}^{\leq}_{v}[i]$. Then, the followings hold:
	\begin{itemize}
		\item[(1)] $d_G(u,v)\leq \tilde{d}_G(u,v)\leq d_G(u,v) + 2k\delta + 2\tau$.
		\item[(2)] $\tilde{d}_G(u,v) = \tilde{d}_{G^{out}}(\vec{p}_{u}, v) + \tilde{d}_{G^{in}}(u,\sigma_1)$.
	\end{itemize}
\end{lemma}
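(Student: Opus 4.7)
The plan is to prove part (2) first, since it is a purely algebraic unpacking of definitions, and then use it as a convenient scaffold for bounding $\tilde{d}_G(u,v)$ in part (1).

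For part (2), I would expand the inner term $\tilde{d}_{G^{in}}(u,\sigma_i) = \vec{d}_u^{\leq}[i]$ appearing in \eqref{eq:dtilde-uv}. By \Cref{def:decode} and \Cref{def:appx-pattern}, we have
\begin{equation*}
\tilde{d}_{G^{in}}(u,\sigma_i) \;=\; \vec{d}_u[1] + \sum_{j=2}^{i}\vec{d}_u[j] \;=\; \appr{\delta}{d_{G^{in}}(u,\sigma_1)} + \sum_{j=1}^{i-1}\vec{p}_u[j],
\end{equation*}
and by \Cref{remrk:dusigma1} the first term equals $\tilde{d}_{G^{in}}(u,\sigma_1)$. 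Substituting into \eqref{eq:dtilde-uv} and pulling $\tilde{d}_{G^{in}}(u,\sigma_1)$ (which is independent of $i$) out of the $\min$, we get
\begin{equation*}
\tilde{d}_G(u,v) \;=\; \tilde{d}_{G^{in}}(u,\sigma_1) + \min_{1\le i\le k}\Bigl\{\tilde{d}_{G^{out}}(v,\sigma_i) + \sum_{j=1}^{i-1}\vec{p}_u[j]\Bigr\},
\end{equation*}
which by \Cref{def:patter-ver-dist} is exactly $\tilde{d}_{G^{in}}(u,\sigma_1) + \tilde{d}_{G^{out}}(\vec{p}_u,v)$.

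For the lower bound in part (1), I would just observe that $\tilde{d}_{G^{in}}(u,\sigma_i) \ge d_{G^{in}}(u,\sigma_i) \ge d_G(u,\sigma_i)$ by \Cref{lm:approx-dist} and the fact that $G^{in}$ is a subgraph of $G$, and analogously for $v$. Summing these and applying the triangle inequality to any $\sigma_i$ gives $\tilde{d}_{G^{in}}(u,\sigma_i) + \tilde{d}_{G^{out}}(v,\sigma_i) \ge d_G(u,v)$, which survives the minimum.

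For the upper bound, the key is to use the single-crossing property to exhibit a good index $i$. Since $u\in V(G^{in})$ and $v\in V(G^{out})$ lie on strictly opposite sides of $C$, any $u$-to-$v$ path must meet $V(C)$; pick a shortest path $P$ from $u$ to $v$ in $G$ that crosses $C$ at most once (so exactly once here). Let $w\in V(P)\cap V(C)$ be the crossing vertex; then $P[u,w]$ uses only vertices in $V(G^{in})$ and $P[w,v]$ only vertices in $V(G^{out})$, giving $d_G(u,v) = d_{G^{in}}(u,w) + d_{G^{out}}(w,v)$. Because $\sigma$ is a $\tau$-cover of $C$, there is an index $i\in[k]$ with $d_C(\sigma_i,w)\le \tau$, hence $d_{G^{in}}(u,\sigma_i) \le d_{G^{in}}(u,w) + \tau$ and $d_{G^{out}}(v,\sigma_i) \le d_{G^{out}}(v,w) + \tau$. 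Applying \Cref{lm:approx-dist} to each side (at index $\le k$) adds at most $k\delta$ each. Adding the two approximate distances and substituting yields
\begin{equation*}
\tilde{d}_G(u,v) \;\le\; \tilde{d}_{G^{in}}(u,\sigma_i) + \tilde{d}_{G^{out}}(v,\sigma_i) \;\le\; d_G(u,v) + 2\tau + 2k\delta,
\end{equation*}
as required.

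The one genuinely subtle step is the topological claim that, when $P$ crosses $C$ at most once, we may split $P$ at a vertex $w\in V(C)$ so that the two halves lie in the induced subgraphs $G^{in}$ and $G^{out}$, respectively; I would justify this by noting that a second transition between the two sides would force a second crossing. Everything else is a careful bookkeeping of the two $+\tau$ terms coming from the $\tau$-cover and the two $+k\delta$ terms from \Cref{lm:approx-dist}, which combine to give the claimed additive slack $2k\delta + 2\tau$.
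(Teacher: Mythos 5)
Your proposal is correct and follows essentially the same route as the paper's proof: the lower bound via \Cref{lm:approx-dist} and the triangle inequality, the upper bound by using the single-crossing property to split a shortest $u$-to-$v$ path at a vertex $b\in C$ and then charging to the nearest $\sigma_i$ via the $\tau$-cover (the paper writes this as $d_{G^{in}}(u,\sigma_r)+d_{G^{out}}(\sigma_r,v)\le d_G(u,v)+2d_C[\sigma_r,b]$, while you split it into two $+\tau$ terms, which is the same arithmetic), and Item (2) by expanding $\vec d_u^{\le}[i]$ and pulling the $i$-independent term out of the min. The paper asserts the path-splitting step without further justification, so your level of rigor matches; no gaps.
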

\begin{proof}  See \Cref{fig:distEncode}(a) for an illustration. Note that $G$ is undirected. By \Cref{lm:approx-dist}, we have that $\tilde{d}_{G^{in}}(u,\sigma_i) \geq d_{G^{in}}(u,\sigma_i)$ and that $\tilde{d}_{G^{out}}(v,\sigma_i)\geq d_{G^{out}}(v,\sigma_i)$. By the triangle inequality, we have that $\tilde{d}_{G^{in}}(u,\sigma_i) + \tilde{d}_{G^{out}}(v,\sigma_i)\geq d_G(u,v)$ for any $i\in [k]$.  It follows that
	\begin{equation*}
		\tilde{d}_G(u,v) = \min_{i\in [k]}\{ d_{G^{in}}(u,\sigma_i) + d_{G^{out}}(v,\sigma_i)\}\geq d_G(u,v)~.
	\end{equation*}	
	This implies the left inequality in Item (1).
	
	To show the right inequality in Item (1), we observe that, since $C$  is simple-crossing, there is a shortest path $SP(u,v,G)$ from $u$ to $v$ cross $C$ at most once. Thus, there exists a vertex $b\in C$ such that $SP(u,v,G) = SP(u,b,G^{in})\circ SP(b,v,G^{out})$, where $SP(u,b,G^{in})$ ($SP(b,v,G^{out})$) is the shortest $u$-to-$b$ ($b$-to-$v$) path in $G^{in}$ ($G^{out}$).  Let $r\in [k]$ be such that $d_C[\sigma_{r},b]\leq \tau$; $r$ exists since $\sigma$ is a $\tau$-cover of $C$.By the triangle inequality we have:
	\begin{equation}\label{eq:dbsr}
		d_{G^{in}}(u, \sigma_{r}) + d_{G^{out}}(\sigma_{r}, v) \leq d_G(u,v)  + 2 d_C[\sigma_{r},b] \leq  d_G(u,v) + 2\tau
	\end{equation}
	Furthermore, by \Cref{lm:approx-dist}, the RHS of \Cref{eq:dtilde-uv} is at most
	\begin{equation*}
		\begin{split}
			\min_{i\in [k]}\{ d_{G^{in}}(u,\sigma_i) + d_{G^{out}}(v,\sigma_i) + 2i\delta\} &\leq d_{G^{in}}(u,\sigma_{r}) + d_{G^{out}}(v,\sigma_{r}) +  2k\delta\\
			&\leq d_G(u,v) + 2k\delta +  2\tau
		\end{split}
	\end{equation*}	
	by \Cref{eq:dbsr}. 
	
	Next, we prove Item (2). Note by \Cref{remrk:dusigma1} that $\tilde{d}_{G^{in}}(u,\sigma_1) = \appr{\delta}{d_{G^{in}}(u,\sigma_1)}$. By \Cref{def:appx-pattern} and \Cref{def:decode}, we have:
	\begin{equation}\label{eq:dist-u-sigmai-Gin}
		\tilde{d}_{G^{in}}(u,\sigma_i) = \vec{d}_u^{\leq}[i] =  \appr{\delta}{d_{G^{in}}(u,\sigma_1)} + \sum_{j=1}^{i-1}\vec{p}_{u}[j]
	\end{equation}
	  By \Cref{def:patter-ver-dist}, we have:
	\begin{equation}
		\begin{split}
			\tilde{d}_{G^{out}}(\vec{p}_{u}, v) + \tilde{d}_{G^{in}}(u,\sigma_1)  &=  \min_{1\leq i\leq k}\{\tilde{d}_{G^{out}}(v,\sigma_i) + \sum_{j=1}^{i-1}\vec{p}_{u}[j]\}  +  \tilde{d}_{G^{in}}(u,\sigma_1)  \\
			&=  \min_{1\leq i\leq k}\{\tilde{d}_{G^{out}}(v,\sigma_i) + \sum_{j=1}^{i-1}\vec{p}_{u}[j] + \appr{\delta}{d_{G^{in}}(u,\sigma_1)} \}\\
			&= \min_{1\leq i\leq k}\{\tilde{d}_{G^{out}}(v,\sigma_i) +  \tilde{d}_{G^{in}}(u,\sigma_i)\} \qquad \mbox{(by \Cref{eq:dist-u-sigmai-Gin})}\\
			&=  \tilde{d}_G(u,v)~,
		\end{split}
	\end{equation}
	as desired.
\end{proof}

By \Cref{lm:separating-cycle}, to obtain an approximate distance between $u$  and $v$, it suffices to know the $\delta$-approximate distance encodings of $u$ and $v$ w.r.t. $\sigma$ in $G^{in}$ and $G^{out}$, respectively.  We show later that, by choosing $\delta$ appropriately, the size of the set of all approximate distance encodings $\{\vec{d}_{u}\}_{u\in V(G^{in})}$ of $G^{in}$ is  polynomial in $1/\eps$. Thus, we can store all these distance encodings in a table, say $T_1$. Then for each vertex  $u \in G^{in}$, we only store a \emph{pointer} to  the corresponding entry in the table, which costs only $O(1)$ machine words (instead of $O(1/\eps)$ machine words to store the actual distance encoding of $u$).  The same holds for $v$ in graph $G^{out}$. Furthermore, we precompute distances $\tilde{d}$ from every pair of distance encodings, one from $G^{in}$ and the other from $G^{out}$, and store the result in a lookup table, say $T_2$, which costs only $O(\poly(1/\eps))$ space. To retrieve the distance from $u$ to $v$, we simply follow the pointers to access their approximate distance encodings in $T_1$ and then access the $\tilde{d}_G(u,v)$ in table $T_2$ in $O(1)$ time.  The  total space is $O(n + \poly(1/\eps))$. Note that this is only for querying distances from a vertex in $G^{in}$ to a vertex in $G^{out}$; we need to recurse to construct a data structure for all pairs of vertices. Furthermore, we want the space bound to be $\tilde{O}(n(1/\eps)^{o(1)})$ rather than $O(n + \poly(1/\eps))$, and for this, we need additional ideas.  Nevertheless, the fact that the space dependency on $1/\eps$ is additive instead of multiplicative is the key to our construction later. 

In our oracle construction, sometimes for a given vertex $u\in G^{in}$ and $v \in G^{out}$, we can only obtain approximate distance encodings that are sufficiently close to the true approximate distance encodings of $u$ and $v$. We show below that we can still recover the approximate distance between $u$ and $v$. To formally state our result, we need some notation. Given two approximate distance encodings $\vec{d}_1$ and $\vec{d}_2$ of dimension $k$, we say that:
\begin{equation}\label{eq:encoding-close}
	\vec{d}_1\approx_{\smallfont{\delta_1},\smallfont{\delta_2}}\vec{d}_2 \qquad \mbox{if } \vec{d}_1[1]\approx_{\smallfont{\delta_1}} \vec{d}_2[1] \mbox{ and } \vec{d}_1[2:k]\approx_{\smallfont{\delta_2}}\vec{d}[2:k]
\end{equation}

\begin{definition}[Approximate Distance from Approximate Distance Encodings]\label{def:encoding-dist} Let $\vec{d}_1$ and $\vec{d}_1$ be two approximate distance encodings of dimension $k$. We define their distance, denoted by, $\norm{\vec{d}_1, \vec{d}_2}$, as follows:
	\begin{equation}\label{eq:coding-norm}
		\norm{\vec{d}_1, \vec{d}_2}  = \min_{1\leq i\leq k}\{ \vec{d}_1^{\leq}[i] + \vec{d}_2^{\leq}[i]\}
	\end{equation}
\end{definition}

If $\vec{d}_u$ and $\vec{d}_v$ are two $\delta$-approximate distance encodings as in \Cref{lm:separating-cycle}, then $\norm{\vec{d}_u, \vec{d}_v} =  \tilde{d}_G(u,v)$ by  definition (\Cref{eq:dtilde-uv}). We show below that we can recover the approximate distance from $u$ to $v$ from the approximations of their approximate distance encodings. 

\begin{lemma}\label{lm:dist-approximation-enco} Let $\vec{d}_1$ and $\vec{d}_2$ be two approximate distance encodings of dimension $k$, and $u \in G^{in}$ and $v\in G^{out}$ be two vertices of $G$ where $G^{in}$ and $G^{out}$ are as defined in \Cref{lm:separating-cycle}. Let $\vec{d}_u$ ($\vec{d}_{v}$) be the $\delta$-approximate distance encoding of $u$ ($v$)  w.r.t. $\sigma$  in $G^{in}$ ($G^{out}$). Suppose that:
	\begin{equation*}
		\vec{d}_1 \approx_{\smallfont{\delta_1},\smallfont{\delta_2}}  \vec{d}_u \qquad \mbox{and} \qquad\vec{d}_2 \approx_{\smallfont{\delta_1},\smallfont{\delta_2}} \vec{d}_v
	\end{equation*}
Then, $\norm{\vec{d}_1,\vec{d}_2}  \approx_{\smallfont{2\delta_1 + 2(k-1)\delta_2}} \tilde{d}_G(u,v)$.
\end{lemma}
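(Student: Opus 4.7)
The plan is to reduce this to \Cref{clm:min-apprx} by first showing that the distance decodings (prefix sums) of $\vec{d}_1$ and $\vec{d}_2$ are close in $\ell^\infty$ norm to the decodings of $\vec{d}_u$ and $\vec{d}_v$, respectively. The whole argument is essentially an unrolling of definitions together with triangle inequalities; the only care needed is to keep track of how the two different tolerances $\delta_1$ (on the first coordinate) and $\delta_2$ (on the remaining coordinates) contribute to the error in the prefix sums.

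First I would bound $\lVert \vec{d}_1^{\le} - \vec{d}_u^{\le}\rVert_\infty$. By the hypothesis $\vec{d}_1 \approx_{\delta_1,\delta_2} \vec{d}_u$ and by \Cref{eq:encoding-close}, we have $|\vec{d}_1[1] - \vec{d}_u[1]|\le \delta_1$ and $|\vec{d}_1[j] - \vec{d}_u[j]|\le \delta_2$ for all $2\le j\le k$. By \Cref{def:decode} and the triangle inequality, for every $i\in[k]$,
\begin{equation*}
\bigl|\vec{d}_1^{\le}[i] - \vec{d}_u^{\le}[i]\bigr| \;\le\; \sum_{j=1}^{i} \bigl|\vec{d}_1[j] - \vec{d}_u[j]\bigr| \;\le\; \delta_1 + (i-1)\delta_2 \;\le\; \delta_1 + (k-1)\delta_2,
\end{equation*}
so $\vec{d}_1^{\le} \approx_{\delta_1 + (k-1)\delta_2} \vec{d}_u^{\le}$. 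The identical argument applied to $\vec{d}_2$ and $\vec{d}_v$ yields $\vec{d}_2^{\le} \approx_{\delta_1 + (k-1)\delta_2} \vec{d}_v^{\le}$.

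Next I would invoke \Cref{clm:min-apprx} with $\mathbf{x}_1 = \vec{d}_1^{\le}$, $\mathbf{x}_2 = \vec{d}_u^{\le}$, $\mathbf{y}_1 = \vec{d}_2^{\le}$, $\mathbf{y}_2 = \vec{d}_v^{\le}$, and tolerances $\delta_1 + (k-1)\delta_2$ for each pair. This gives
\begin{equation*}
\min_{i\in[k]}\bigl\{\vec{d}_1^{\le}[i] + \vec{d}_2^{\le}[i]\bigr\} \;\approx_{\smallfont{2\delta_1 + 2(k-1)\delta_2}}\; \min_{i\in[k]}\bigl\{\vec{d}_u^{\le}[i] + \vec{d}_v^{\le}[i]\bigr\}.
\end{equation*}
Finally, I would identify the two sides with the quantities in the statement: by \Cref{def:encoding-dist} the left-hand side equals $\norm{\vec{d}_1,\vec{d}_2}$, and recalling that $\tilde{d}_{G^{in}}(u,\sigma_i) = \vec{d}_u^{\le}[i]$ and $\tilde{d}_{G^{out}}(v,\sigma_i) = \vec{d}_v^{\le}[i]$ from \Cref{lm:separating-cycle}, the right-hand side equals $\tilde{d}_G(u,v)$ as defined in \Cref{eq:dtilde-uv}. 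This yields the desired bound $\norm{\vec{d}_1,\vec{d}_2}\approx_{2\delta_1 + 2(k-1)\delta_2}\tilde{d}_G(u,v)$.

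There is no real obstacle here; the only subtle point is remembering that the first coordinate carries tolerance $\delta_1$ while the remaining $k-1$ coordinates carry tolerance $\delta_2$, which is exactly why the final error is $2\delta_1 + 2(k-1)\delta_2$ rather than $2k\delta$ for a single scale.
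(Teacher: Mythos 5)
Your proof is correct and takes essentially the same approach as the paper's: both bound $\vec{d}_1^{\le}\approx_{\delta_1+(k-1)\delta_2}\vec{d}_u^{\le}$ (and likewise for $\vec{d}_2,\vec{d}_v$) and then apply \Cref{clm:min-apprx} together with \Cref{eq:dtilde-uv}. The only cosmetic difference is that you derive the prefix-sum bound by a direct triangle inequality, while the paper chains the bound coordinate-by-coordinate via \Cref{obs:apprx}.
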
 
\begin{proof} By \Cref{obs:apprx}, $\sum_{j=2}^{i} \vec{d}_1[j] \approx_{\smallfont{(i-1)\delta_2}} \sum_{j=2}^{i} \vec{d}_u[j]$ for any $i \in [k]$. Thus, by \Cref{def:decode} and \Cref{obs:apprx}, $\vec{d}_1^{\leq}[i]\approx_{\smallfont{\delta_1 + (i-1)\delta_2}} \vec{d}_u^{\leq}[i] = \tilde{d}_{G^{in}}(u,\sigma_i)$. It follows that $\vec{d}_1^{\leq} \approx_{\smallfont{\delta_1 + (k-1)\delta_2}}\tilde{d}_{G^{in}}(u,\cdot)$; here $\tilde{d}_{G^{in}}(u,\cdot)$ is the $k$-dimensional vector whose $i$-th component is $\tilde{d}_{G^{in}}(u,\sigma_i)$. By the same argument, we have that $\vec{d}_2^{\leq}\approx_{\smallfont{\delta_1 + (k-1)\delta_2}}\tilde{d}_{G^{out}}(v,\cdot)$. The lemma then follows from \Cref{clm:min-apprx} and \Cref{eq:dtilde-uv}.
\end{proof}

We close this section by showing that shortest path separators in planar graphs are single-crossing cycles. We rely on the well-known property that each shortest path separator cycle consists of two shortest paths and a single edge. 

\begin{lemma}\label{lm:single-crossing-sep}If a simple cycle $C$ of a plane graph $G(V,E,w)$ is composed of two shortest paths and a single edge, then $C$ is single-crossing.
\end{lemma}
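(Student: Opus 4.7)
The plan is an uncrossing argument. Write $C = Q_1 \cup Q_2 \cup \{e\}$ with $Q_1, Q_2$ shortest paths in $G$ sharing a common endpoint $r$ and $e$ the single edge joining the other endpoints of $Q_1, Q_2$. Fix $u \in \overline{\inter(C)}$ and $v \in \overline{\exter(C)}$, and among all shortest $u$-to-$v$ paths in $G$ choose one, $P$, that minimizes the number of topological crossings with $C$. The goal is to show that this minimum is at most one.

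The central uncrossing move is the following. Suppose $P$ has two crossings $x_1, x_2$ with $C$ that are consecutive along $P$ and that both lie on the same constituent shortest path $Q_j$. Because $Q_j$ is a shortest path in $G$, its subpath $Q_j[x_1, x_2]$ is itself a shortest $x_1$-to-$x_2$ path, so $|Q_j[x_1, x_2]| = d_G(x_1, x_2) = |P[x_1, x_2]|$. Replacing $P[x_1, x_2]$ by $Q_j[x_1, x_2]$ yields another shortest $u$-to-$v$ path $P'$. Since $x_1, x_2$ are consecutive crossings, $P[x_1, x_2]$ lies entirely on one side of $C$, so the portions of $P'$ immediately before $x_1$ and immediately after $x_2$ lie on the same opposite side of $C$. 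By the topological definition of crossing, the new shared subpath $Q_j[x_1, x_2]$ of $P'$ and $C$ is therefore not a crossing of $P'$, and the two original crossings at $x_1, x_2$ have been eliminated; hence $P'$ has at least two fewer crossings than $P$, contradicting the minimality of $P$.

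Iterating this move (and its analogue for the edge $e$) leaves $P$ with at most one crossing on each of $Q_1, Q_2, e$. The reductions do not interfere because $Q_1 \cap Q_2 = \{r\}$ is a single vertex that is the common endpoint of both $Q_1$ and $Q_2$ and hence never a proper subpath of $C$. Finally, to reduce from at most three crossings down to one, I invoke a parity argument: the number of topological crossings of a simple arc with a Jordan curve has the same parity as the indicator that its endpoints lie strictly on opposite sides (endpoints lying on $C$ itself are handled by noting that initial or final shared subpaths of $P$ with $C$ contain an endpoint of $P$ and are therefore non-proper). When $u, v$ are strictly on opposite sides of $C$, the parity is odd, so the count is one or three; the three-crossings case is ruled out by a final merging step that exploits the common vertex $r$ and the fact that the endpoints of $e$ also lie on $Q_1, Q_2$.

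The main obstacle I anticipate is this final three-crossings case, since a direct replacement of a subpath of $P$ between two crossings on different pieces of $C$ by an arc of $C$ may strictly lengthen the path (both arcs of $C$ between two such points can exceed the shortest-path distance in $G$). The merging argument must therefore use the specific shortest-path structure of $Q_1, Q_2$ and the triangle inequality at $r$ to combine two crossings into a single shared subpath with $C$ without increasing the length of $P$.
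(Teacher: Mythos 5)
Your uncrossing move is sound as far as it goes, but its restriction to crossings that are \emph{consecutive along $P$} and lie on the same piece $Q_j$ leaves the proof incomplete, and you concede as much. Concretely, if the crossings of $P$ with $C$ occur in the alternating order $Q_1, Q_2, Q_1$, then no two consecutive crossings lie on the same $Q_j$ and your move never fires. Your assertion that ``iterating this move leaves $P$ with at most one crossing on each of $Q_1, Q_2, e$'' is therefore not a consequence of the move you defined---it is false for this alternating pattern. The parity observation correctly pins the remaining count to $1$ or $3$, but the three-crossings-with-alternation case is exactly the one you acknowledge you cannot close (``the main obstacle I anticipate\dots''). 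That missing merging step is a genuine gap: without it the lemma is not proved.

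The fix, and the route the paper actually takes, is to drop the consecutiveness requirement. If $P$ crosses $C$ at least three times, then by pigeonhole some constituent shortest path $Q_j$ is crossed at least twice; let $a$ and $b$ be the \emph{first and last} such crossings along $P$, which need not be consecutive on $P$. Since $Q_j$ is a shortest path, $Q_j[a,b]$ is a shortest $a$-to-$b$ path, so replacing $P[a,b]$ by $Q_j[a,b]$ yields another shortest $u$-$v$ path. The substituted segment lies entirely on $C$, so the maximal shared subpath of the new path with $C$ that contains $Q_j[a,b]$ contributes at most one crossing, while the replacement eliminates the two crossings at $a$ and $b$ together with every crossing $P$ made with $C$ strictly between them (in particular any intervening crossings with $Q_{3-j}$). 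The crossing count drops strictly, contradicting minimality, with no separate parity bookkeeping or merging argument needed. In short: choose the endpoints of the detour by \emph{which $Q_j$ they lie on}, not by adjacency along $P$.
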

\begin{proof} Suppose that $C = Q_1\circ e\circ Q_2$ where $Q_1,Q_2$ are shortest paths of $G$, and $e$ is a single edge. Let $u\in G^{in}$ and $v\in G^{out}$ be two vertices separated by $C$, where $G^{in}$ ($G^{out}$)  be the subgraph of $G$ induced by vertices inside (outside) or on $C$. Let $SP(u,v,G)$ be a shortest path from $u$ to $v$ in $G$ that crosses $C$ a minimum number of times. (If $u$ and $v$ are both on $C$, we regard $u$ as inside $C$ and $v$ as outside $C$.) By Jordan curve theorem, $SP(u,v,G)$ must cross $C$ by an odd number of times. If $SP(u,v,G)$ crosses $C$ at least 3 times, then $SP(u,v,G)$ must cross, say $Q_1$, at least twice. See \Cref{fig:distEncode}(b) for an illustration. Let $a$ and $b$ be the first and the last crossing points on the path $SP(u,v,G)$ oriented from $u$ to $v$. By replacing the subpath $SP(u,v,G)[a,b]$ by $Q_1[a,b]$, we obtain another path from $u$ to $v$ while the number of crossing is reduced by at least one, contradicting that $SP(u,v,G)$ has a minimum number of crossings. Thus, $C$ is single-crossing. 
\end{proof}

\subsection{Approximate Pattern Composition}\label{subsec:composition}

We define patterns induced by approximate patterns. This definition is similar to the patterns induced by patterns of~\cite{FMW20}. Recall the distance between a pattern and a vertex is defined in \Cref{def:patter-ver-dist}. 

\begin{definition}[Pattern Induced by an Approximate Distance]\label{def:patter-induced} Let  $\sigma$ be a sequence of $k$ vertices on the boundary of a face $f$ of a plane graph $G$. Let $\vec{p}$ be an approximate pattern (w.r.t. some sequence of vertices that may be different from $\sigma$). The pattern induced by  $\vec{p}$ w.r.t. $\sigma$ is a $(k-1)$-dimensional vector $\widehat{\vec{p}}$ where $\widehat{\vec{p}}[i] = \tilde{d}_G(\vec{p}, \sigma_{i+1}) - \tilde{d}_G(\vec{p}, \sigma_i)$ for every $i \in [k-1]$.  
\end{definition}

\begin{figure}[!htb]
	\begin{center}
		\includegraphics[width=.4\textwidth]{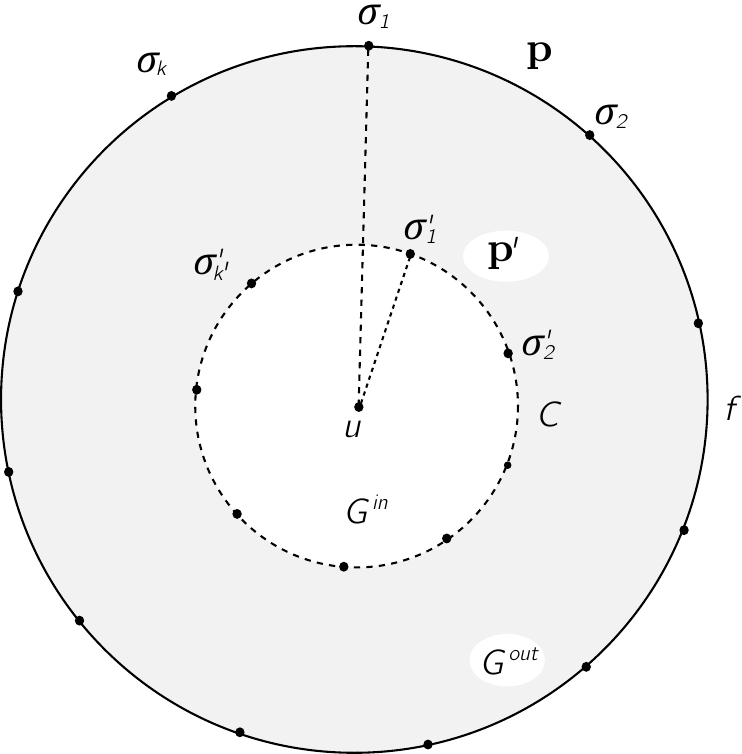}
	\end{center}
	\caption{Two patterns  $\vec{p}'$ and $\vec{p}$ of $u$ w.r.t. two sequences $\sigma'$ and $\sigma$ in $G^{in}$ and $G$, respectively. \Cref{lm:pattern-comp} implies that the pattern induced by $\vec{p}'$ in $G^{out}$ is a $k\delta$-approximation of  $\vec{p}$.}
	\label{fig:composition}
\end{figure}

In the following lemma, we show that, given a face $f$ of a plane graph, and a simple-separating cycle $C$ such that $f$ lies outside $C$, for any vertex $u$ inside $C$, we can construct a new pattern from $u$ w.r.t. some vertex sequence $\sigma$ in $f$ by \emph{composing} two patterns: a pattern from $u$ w.r.t. some vertex sequence $\sigma'$ on $C$, and the pattern w.r.t. $\sigma$ induced by a pattern w.r.t. $\sigma'$. The new pattern is may not be the same as the approximate pattern $\vec{p}$ of $u$ w.r.t. $\sigma$ in $G$ as defined in \Cref{def:appx-pattern}, but it is close to $\vec{p}$. See \Cref{fig:composition} for an illustration.

\begin{lemma}[Pattern Composition]\label{lm:pattern-comp}  Let  $\sigma$ be a sequence of $k$ vertices on the boundary of a face $f$ of a plane graph $G$, and $\sigma'$  be a sequence of $k'$  vertices on a single-crossing simple cycle $C$ such that $f\subseteq  \overline{\exter(C)}$. Let $G^{in}$ ($G^{out}$)  be the subgraph of $G$ induced by vertices inside (outside) or on $C$. Let $u$ be a vertex in $G^{in}$, $\vec{p}'$ be a pattern of $u$ in $G^{in}$ w.r.t. $\sigma'$, and  $\vec{p}$  be a pattern of $u$ in $G$ w.r.t. $\sigma$. Let $\widehat{\vec{p}}'$ be the pattern induced by $\vec{p}'$ in $G^{out}$.  Then, it holds that $\widehat{\vec{p}}'\approx_{\smallfont{k\delta}} \vec{p}$.
\end{lemma}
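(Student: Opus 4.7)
Fix an arbitrary coordinate $i \in [k-1]$. By \Cref{def:patter-induced} the induced pattern satisfies
$$\widehat{\vec{p}}'[i] \;=\; \tilde{d}_{G^{out}}(\vec{p}', \sigma_{i+1}) \;-\; \tilde{d}_{G^{out}}(\vec{p}', \sigma_i),$$
while by \Cref{def:appx-pattern} we have $\vec{p}[i] = \appr{\delta}{d_G(u, \sigma_{i+1}) - d_G(u, \sigma_i)}$. The plan is to show that $\widehat{\vec{p}}'[i]$ lies within $k\delta$ of the true difference $d_G(u, \sigma_{i+1}) - d_G(u, \sigma_i)$, which differs from $\vec{p}[i]$ by at most $\delta$, and then to close the loop by the triangle inequality.

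The first step exploits the single-crossing hypothesis on $C$. Since $f \subseteq \overline{\exter(C)}$, every $\sigma_j$ lies in $V(G^{out})$, so item (2) of \Cref{lm:separating-cycle}, applied with $u \in V(G^{in})$ and $v = \sigma_j$ for $j \in \{i, i+1\}$, yields the identity
$$\tilde{d}_{G^{out}}(\vec{p}', \sigma_j) \;=\; \tilde{d}_G(u, \sigma_j) \;-\; \tilde{d}_{G^{in}}(u, \sigma'_1).$$
Subtracting the two instances cancels the common shift $\tilde{d}_{G^{in}}(u, \sigma'_1)$ and collapses the induced coordinate to
$$\widehat{\vec{p}}'[i] \;=\; \tilde{d}_G(u, \sigma_{i+1}) \;-\; \tilde{d}_G(u, \sigma_i).$$

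The second step replaces each $\tilde{d}_G(u, \sigma_j)$ by the true distance $d_G(u, \sigma_j)$ via item (1) of \Cref{lm:separating-cycle}, which sandwiches $\tilde{d}_G(u, \sigma_j)$ between $d_G(u, \sigma_j)$ and $d_G(u, \sigma_j)$ plus a non-negative additive error controlled by $\delta$ and the resolution of $\sigma'$ on $C$. Differencing the two sandwiches yields a bound on $|\widehat{\vec{p}}'[i] - (d_G(u, \sigma_{i+1}) - d_G(u, \sigma_i))|$, which combined with the rounding inequality $|\vec{p}[i] - (d_G(u, \sigma_{i+1}) - d_G(u, \sigma_i))| \leq \delta$ from \eqref{eq:delta-approx} and the triangle inequality yields the coordinate-wise $\ell^{\infty}$ estimate. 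Repeating this for every $i \in [k-1]$ gives $\widehat{\vec{p}}' \approx_{\smallfont{k\delta}} \vec{p}$.

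The main obstacle will be the quantitative bookkeeping in the second step: matching the additive error extracted from item (1) of \Cref{lm:separating-cycle} to the claimed $k\delta$ bound requires carefully tracking both the dimension of $\sigma'$ and the covering parameter of $\sigma'$ on $C$, details which are implicit from how the lemma is instantiated inside the recursive oracle construction. The qualitative structure of the argument, however, is a clean two-step chaining of items (2) and (1) of \Cref{lm:separating-cycle} with the rounding control of $\appr{\delta}{\cdot}$, and no new machinery beyond what has already been developed in \Cref{subsec:approxPattern,subsec:approxDecoding} is required.
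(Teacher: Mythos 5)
Your first step—applying Item (2) of \Cref{lm:separating-cycle} with $v = \sigma_j$ so that the common shift $\tilde{d}_{G^{in}}(u,\sigma'_1)$ cancels and $\widehat{\vec{p}}'[i]$ collapses to $\tilde{d}_G(u,\sigma_{i+1}) - \tilde{d}_G(u,\sigma_i)$—is exactly the paper's first step. The divergence, and the gap, is in your second step.

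The paper does \emph{not} invoke Item (1) of \Cref{lm:separating-cycle} to pass from $\tilde{d}_G(u,\sigma_j)$ to $d_G(u,\sigma_j)$. It uses \Cref{lm:approx-dist} (i.e.\ \Cref{eq:approx-dist}), the per-coordinate sandwich $d_G(u,\sigma_j) \le \tilde{d}_G(u,\sigma_j) \le d_G(u,\sigma_j) + j\delta$, to conclude $\tilde{d}_G(u,\sigma_{i+1}) - \tilde{d}_G(u,\sigma_i) \approx_{(i+1)\delta} \vec{p}[i]$, and then bounds $(i+1)\delta \le k\delta$ since $i \le k-1$. The resulting error is purely a rounding error that accumulates $\pm\delta$ per position along $\sigma$, and notably involves \emph{neither} the covering parameter $\tau$ of $\sigma'$ on $C$ \emph{nor} the length $k'$ of $\sigma'$.

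Your step replaces this by Item (1) of \Cref{lm:separating-cycle}, whose sandwich on $\tilde{d}_G(u,\sigma_j)$ has additive error $2k'\delta + 2\tau$ (here $k'$ is the number of vertices in $\sigma'$ on $C$, since that is the cycle to which the lemma is applied). Differencing two such sandwiches and adding the $\delta$ from the definition of $\appr{\delta}{\cdot}$ gives a per-coordinate error of order $2k'\delta + 2\tau + \delta$, which has the wrong parameter dependence: it does not scale as $k\delta$ and is dominated by the covering-radius term $2\tau$. In the concrete instantiation of the oracle ($\tau = \eps D$, $\delta = \eps^3 D$, $k,k' = O(1/\eps)$), your bound is $\Theta(\eps D)$ while the claimed $k\delta$ is $\Theta(\eps^2 D)$—a factor $1/\eps$ loss that would propagate into \Cref{lm:query-approximiate} and \Cref{lm:dhat-uv} and destroy the additive stretch analysis. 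You flag this yourself ("the main obstacle will be the quantitative bookkeeping"), but this is not a bookkeeping issue: Item (1) is simply the wrong tool, because its error is inherently set by how coarsely $\sigma'$ samples the cycle, whereas the lemma's bound must reflect only the rounding granularity $\delta$. The correct second step is the coordinate-wise bound of \Cref{lm:approx-dist}.
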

\begin{proof} By \Cref{def:patter-induced}, for any $i\in [k-1]$, we have:
	\begin{equation*}
		\begin{split}
			\widehat{\vec{p}}'[i] &= \tilde{d}_{G^{out}}(\vec{p}', \sigma_{i+1})   -  \tilde{d}_{G^{out}}(\vec{p}', \sigma_i) \\
			&= (\tilde{d}_{G}(u,\sigma_{i+1}) - \tilde{d}_{G^{in}}(u,\sigma'_1)) -  (\tilde{d}_{G}(u,\sigma_i) - \tilde{d}_{G^{in}}(u,\sigma'_1)) \qquad\mbox{(by  Item (2) in \Cref{lm:separating-cycle})}\\
			&= \tilde{d}_{G}(u,\sigma_{i+1}) - \tilde{d}_{G}(u,\sigma_i).
		\end{split}
	\end{equation*}
	By \Cref{eq:approx-dist}, we have:
	\begin{equation}
		\begin{split}
			\tilde{d}_{G}(u,\sigma_{i+1}) - \tilde{d}_{G}(u,\sigma_i) &\leq \appr{\delta}{d_{G}(u,\sigma_{i+1}) - d_{G}(u,\sigma_i) } + (i+1)\delta\\
			\tilde{d}_{G}(u,\sigma_{i+1}) - \tilde{d}_{G}(u,\sigma_i) &\geq \appr{\delta}{d_{G}(u,\sigma_{i+1}) - d_{G}(u,\sigma_i) } - (i+1) \delta~. 
		\end{split}
	\end{equation}	
Thus, $	\widehat{\vec{p}}'[i]\approx_{(i+1)\delta} \vec{p}[i]$, which  implies the lemma, since $i\leq k-1$. 
\end{proof}

\subsection{A Weaker Oracle: Proof of \Cref{thm:additiveOracleEasy}}\label{subsec:easyOracle}

In this section, we construct an oracle as claimed in \Cref{thm:additiveOracleEasy}. The construction is described in \Cref{subsubsec:const}, ignoring implementation issues. The analysis of query time and space is presented in \Cref{subsubsec:time-stretch} and \Cref{subsubsec:space-analysis}. Finally, in \Cref{subsubsec:preprocessing}, we discuss the implementation. 

\subsubsection{The Construction}\label{subsubsec:const}

First, we review the standard recursive decomposition using shortest path separators.  Let $T$ be a shortest path tree of $G$ rooted at a chosen vertex $r$. We assume w.l.o.g that $G$ is triangulated. A shortest path separator is a fundamental cycle $C$ of $T$ that comprises of two shortest paths rooted at $r$ and an edge $e$ connecting two other endpoints of the two paths. By \Cref{lm:single-crossing-sep}, $C$ is single-crossing.  It was known that, for any given non-negative weight function $w_V: V \rightarrow \mathbb{R}^+$, there is a shortest path separator such that the total weight of vertices strictly inside or outside $C$ is at most $\frac{2w_V(G)}{3}$ where $w_V(G) = \sum_{u\in V}w_V(u)$.

We use the shortest path separators to recursively separate $G$ into \emph{regions} as follows. Initially every vertex of $G$ is marked.  Starting from $G$, we construct a shortest path separator $C$ such that the total number of vertices strictly inside/outside $C$ is at most $\frac{2n}{3}$, obtaining two regions  sharing the same boundary $C$. We then distribute marked vertices on $C$ evenly to the two regions, so that each region gets at most $2n/3$ marked vertices. Some vertices on $C$ in one region get unmarked because they are marked vertices in the other region.
Next, pick any region $R$ that has at least $\lambda > 0$ marked vertices for some parameter $\lambda$ (defined in the algorithm below), we separate $R$ using the shortest path separator $C_R$ into two smaller regions $R_1,R_2$. The separator $C_R$ is chosen to balance the number of holes or the number of marked vertices inside child regions.  In particular, if $R$ has exactly 5 holes, we choose $C_R$ such that child regions $R_1,R_2$ each has at most $\lfloor \frac{2\cdot 5}{3} \rfloor + 1 = 4$ holes; this can be done by assigning weights to vertices on the hole appropriately. Otherwise,  we choose $C_R$ such that the number of marked vertices of each child region is at most  $2/3$ the number of marked vertices of $R$; again, we distribute marked vertices on $C_R$ evenly to both sides.   It follows from the construction that each region has at most 5 holes.

Let $\mathcal{T}$ be the recursion tree induced by the recursive decomposition of $G$. Each node of $\mathcal{T}$ is associated with a region resulting from the decomposition. It is well known that:

\begin{lemma}\label{lm:T-size} $\mathcal{T}$ has depth $O(\log(n))$ and $O(\frac{n}{\lambda})$ nodes that can be constructed in $O(n\log n)$ time. 
\end{lemma}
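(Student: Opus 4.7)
My plan is to establish the three parts of the lemma---depth $O(\log n)$, node count $O(n/\lambda)$, and construction time $O(n\log n)$---separately, using the balanced-split property of shortest-path separators together with the fact that every region maintains at most $5$ holes.

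\textbf{Depth.} I would analyze any root-to-leaf path in $\mathcal{T}$ by classifying separations into \emph{marked-vertex reductions} and \emph{hole reductions}. A hole reduction is triggered only when a region has exactly $5$ holes; after it, both children have at most $4$ holes, so the next hole reduction on the path can only be triggered after subsequent separations raise the hole count back up (each separation adds one new boundary hole per child from the separator cycle, while parent holes are distributed to children). Hence the number of hole reductions on any root-to-leaf path is at most the number of marked-vertex reductions plus $O(1)$. Each marked-vertex reduction multiplies the marked count by at most $2/3$, the root has $n$ marked vertices, and every internal node has at least $\lambda$, so the number of marked-vertex reductions on any path is $O(\log(n/\lambda)) = O(\log n)$, giving depth $O(\log n)$.

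\textbf{Number of nodes.} Since $\mathcal{T}$ is binary, it suffices to bound the number of leaves. The key observation is that at each separation the marked vertices are partitioned between the two children, so the sum of marked counts across any level is exactly $n$. For \emph{marked-vertex reductions}, both children receive between $m_v/3$ and $2m_v/3$ of the parent's marked count, so a leaf produced by such a split carries at least $\lambda/3$ marked vertices. For \emph{hole reductions} the split may be unbalanced in marked count, but since the sum is at least $\lambda$, at most one of the two children has fewer than $\lambda/3$ marked vertices. Call a leaf \emph{large} if it has $\geq \lambda/3$ marked vertices and \emph{small} otherwise. Large leaves have disjoint sets of marked vertices summing to at most $n$, giving at most $3n/\lambda$ of them. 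For small leaves, I would charge each one to its sibling (which, by the observation above, must be either a large leaf or an internal subtree whose root has $\geq \lambda/3$ marked vertices that are not charged by any other small leaf), so that the number of small leaves is also $O(n/\lambda)$, yielding $O(n/\lambda)$ nodes in total.

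\textbf{Construction time.} I would build $\mathcal{T}$ level by level, invoking the standard linear-time shortest-path-separator primitive on each region. At a fixed level the regions are edge-disjoint except for separator paths from higher levels, and the total size of all regions at that level is $O(n)$ because each edge appears in $O(1)$ regions per level; thus each level costs $O(n)$ time, and the $O(\log n)$ depth bound established above gives overall $O(n\log n)$ construction time.

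\textbf{Main obstacle.} The most delicate step is tightening the count from the naive $O((n/\lambda)\log n)$---which follows immediately from ``$\leq n/\lambda$ internal nodes per level'' summed over depth---down to the claimed $O(n/\lambda)$. This requires carefully exploiting that hole-reduction separations are the only source of leaves with fewer than $\lambda/3$ marked vertices, and that at each such separation only one child can be ``small,'' so that a charging scheme against sibling subtrees absorbs them without an extra $\log n$ factor.
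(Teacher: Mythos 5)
The paper gives no proof for this lemma (it is labeled ``well known''), so I'll evaluate your argument against the standard one. Your depth bound and construction-time sketch are essentially the standard approach and are fine: between any two hole reductions on a root-to-leaf path there must lie at least one marked-vertex reduction (since a child of a hole reduction has $\le 4$ holes), and marked-vertex reductions shrink the marked count by a $2/3$ factor, so the depth is $O(\log n)$; and the total size of the regions touched per level is $O(n)$, giving $O(n\log n)$ construction time.

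The gap is in the node-count argument. You charge each small leaf $L$ to its sibling $S$ and assert that the $\ge \lambda/3$ marked vertices of $S$ are ``not charged by any other small leaf.'' That claim is false as stated. The siblings of distinct small leaves can be nested: if $P_1$ is a hole reduction with small leaf $L_1$ and sibling $S_1$, and a descendant $P_2$ of $S_1$ is also a hole reduction with small leaf $L_2$ and sibling $S_2$, then $M(S_2)\subseteq M(S_1)$, so the two charges collide. Nothing in your two identified ingredients (``only hole reductions create small leaves'' and ``at most one small child per hole reduction'') rules out a chain of nested charges of length $\Theta(\log n)$, which is exactly the $\log n$ overcount you are trying to avoid. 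The missing observation---one you do use in the depth argument but not here---is that the sibling $S$ has at most $4$ holes, so if $S$ is internal its separation must be a marked-vertex reduction, never another hole reduction. Hence $S$ is either a large leaf or a marked-vertex reduction node. Large leaves carry disjoint marked sets of size $\ge\lambda/3$ and number at most $3n/\lambda$; marked-vertex reduction nodes are precisely the branching nodes of the subtree $\mathcal{T}'$ of nodes with $\ge\lambda/3$ marked vertices, so they number at most (leaves of $\mathcal{T}'$)$-1 \le 3n/\lambda$ as well. Since each such $S$ is the sibling of at most one small leaf, the number of small leaves is $O(n/\lambda)$, and the total node count follows. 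I'd also note that the statement ``at a fixed level the regions are edge-disjoint except for separator paths from higher levels, and each edge appears in $O(1)$ regions per level'' in the construction-time part is true but not trivial; it rests on the fact that a separator cycle introduced at some level remains on the boundary of only $O(1)$ regions at each subsequent level, which deserves a sentence of justification.
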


By construction, the internal vertices of a region are all marked. Unmarked vertices are on the boundaries of the holes.  We say that a region $A$ is an ancestor of a region $R$ if $A$ is associated with an ancestor node of $R$'s node in $\mathcal{T}$. Clearly, $R\subseteq A$.  Each region $R$, except $G$, has a special hole $h$ whose boundary is the  separating cycle $C_P$ of its parent region $P$. We call $h$ the \emph{parental hole} of $R$ (see \Cref{fig:recursion}(a)). (The parental hole of $R$ could be the infinite face of $R$ in the planar embedding inherited from $G$.) The boundary of $h$ is called the \emph{parental boundary} of $R$. In the following, we construct our distance oracle in four steps. \Cref{fig:orale-connst} and \Cref{fig:recursion} illustrates each step of the construction.

\begin{figure}[!htb]
	\begin{picture}(1000,450)
		\put(0,0){\includegraphics[width=1.0\textwidth]{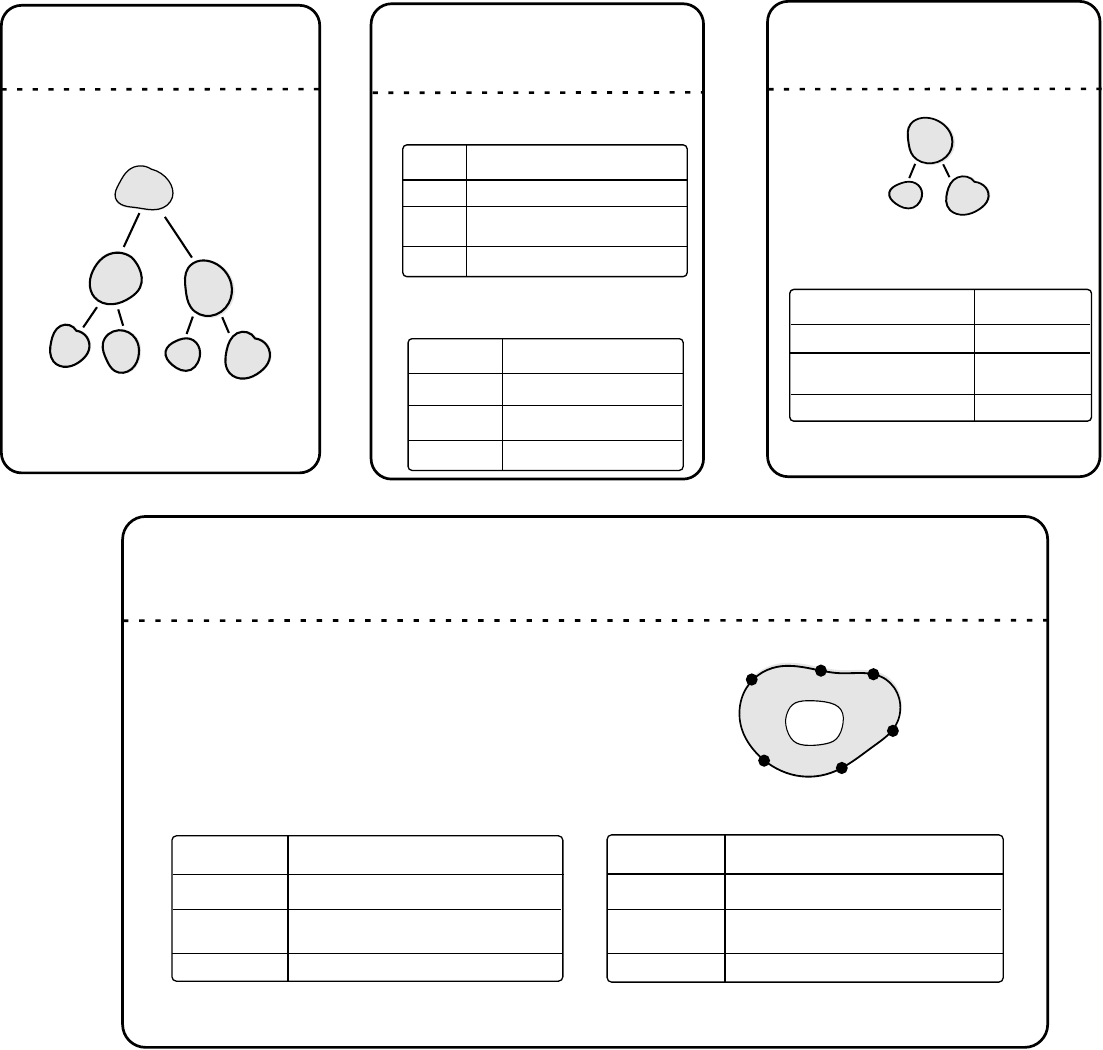}}
		\put(35,420){\huge\textbf{Step 1}}
		\put(45,390){$\mathcal{T}, \mathcal{LCA}_{\mathcal{T}}$}
		\put(23,265){ $|V(\mathcal{T})| = O(n/\lambda)$}
		\put(12,252){ $\lambda = \log(n)/\eps^{c},\quad c = 24$}
		%=======================
		\put(200,420){\huge\textbf{Step 2}}
		\put(180,392){$T_{21}$ stores $S_R,\vec{P}_R$}
		\put(175,375){\small Key}
		\put(230,375){\small Value}
		\put(175,364){ $\ldots$}
		\put(220,364){ $\ldots$}
		\put(175,347){ $u$}
		\put(200,348){\footnotesize $\appr{\delta}{d_{R^+}(u,\sigma_1)},\mbox{\footnotesize ID}(\vec{p}_u)$}
		\put(175,335){ $\ldots$}
		\put(220,335){ $\ldots$}
		
		\put(180,310){$T_{22}$ stores $\vec{D}^+_R$}
		\put(180,292){\small Key}
		\put(230,292){\small Value}
		\put(180,279){ $\ldots$}
		\put(220,279){ $\ldots$}
		\put(175,263){$s,\mbox{\footnotesize ID}(\vec{p})$}
		\put(230,263){$\mbox{\footnotesize ID}([s,\vec{p}]^\intercal)$}
		\put(180,251){ $\ldots$}
		\put(220,251){ $\ldots$}
			
	     %=======================
		\put(370,420){\huge\textbf{Step 3}}
		\put(409,392){$R$}
		\put(370,352){$R_1$}
		\put(420,352){$R_2$}
		
		\put(370,329){$T_{3}$}
		\put(350,313){\small Key}
		\put(420,313){\small Value}
		\put(350,301){ $\ldots$}
		\put(420,301){ $\ldots$}
		\put(340,284){$\mbox{\footnotesize ID}(\vec{d}_1),\mbox{\footnotesize ID}(\vec{d}_2)$}
		\put(420,284){$\norm{\vec{d}_1,\vec{d}_2}$}
		\put(350,273){ $\ldots$}
		\put(420,273){ $\ldots$}
		\put(350,253){$\vec{d}_1\in \vec{D}^+_{R_1},\vec{d}_2\in \vec{D}^+_{R_2}$}
		
		 %=======================
		\put(200,198){\huge\textbf{Step 4}}
		\put(80,158){$\vec{P}_A:$ pattern set of $A$}
		\put(80,142){$\vec{P}_R:$ pattern set of $R$}
		\put(80,127){$\widehat{\vec{P}}_{R,A}:$ patterns induced by patterns in $\vec{P}_R$}
		
		\put(385,158){$A$}
		\put(350,167){$\sigma^A$}
		\put(342,135){$R$}
		\put(362,142){\small $R^{out}$}
		
		\put(100,98){$T_{41a}$}
		\put(80,79){\small Key}
		\put(140,79){\small Value}
		\put(80,66){ $\ldots$}
		\put(145,66){ $\ldots$}
		\put(80,47){$\mbox{\footnotesize ID}(\vec{p})$}
		\put(140,47){$\tilde{d}_{R^{out}}(\vec{p}, \sigma^A_1),\mbox{\footnotesize ID}(\widehat{\vec{p}})$}
		\put(80,35){ $\ldots$}
		\put(145,35){ $\ldots$}
		\put(80,13){ $\vec{p}\in \vec{P}_R, \qquad \widehat{\vec{p}}\in \widehat{\vec{P}}_{R,A}$}
		
		\put(290,98){$T_{41b}$}
		\put(270,79){\small Key}
		\put(330,79){\small Value}
		\put(270,66){ $\ldots$}
		\put(335,66){ $\ldots$}
		\put(270,47){$\mbox{\footnotesize ID}(\widehat{\vec{p}})$}
		\put(330,47){$\mbox{\footnotesize ID}(\vec{p}')$}
		\put(270,35){ $\ldots$}
		\put(335,35){ $\ldots$}
		\put(250,13){ $\widehat{\vec{p}}\in \widehat{\vec{P}}_{R,A},\vec{p}'\in \vec{P}_A, \lVert\widehat{\vec{p}} - \vec{p}'\rVert_{\infty}$ minimum}

	\end{picture}
	\caption{Tables and data structures used in each step of the oracle construction.}
	\label{fig:orale-connst}
\end{figure}

\pagebreak

\newgeometry{left=0.5in,bottom=0.8in,top=0.8in,right=0.5in}

\begin{tcolorbox}
\begingroup
\fontsize{10pt}{12pt}\selectfont
\textsc{Distance Oracle Construction:}
\paragraph{Step 1.} Construct a recursive decomposition $\mathcal{T}$ of $G$ with $\lambda = \frac{\log n}{\eps^{c}}$ for $c = 24$, and a lowest common ancestor data structure $\lca_{\mathcal{T}}$ for $\mathcal{T}$.  For each shortest path separator $C$ in $\mathcal{T}$, we designate $O(1/\eps)$ portals such that the distance between any consecutive portals is $\eps D$;  if there is an edge of length more than $\eps D$, we will subdivide the edge using portals. For each region $R$ associated with a node of $\mathcal{T}$, let $M(R)$ be the set of marked vertices of $R$.

\paragraph{Step 2.}  Let $\delta = \eps^3 D$. For each region $R$ associated with a non-root node of $\mathcal{T}$, let $R^+$ be obtained from $R$ by \emph{filling all the holes of $R$, except the parental hole of $R$}. That is, for each non-parental hole $h$, we add the edges and vertices of $G$ inside $h$ to $R$.   We form a sequence $\sigma$ of the portals  on the parental boundary of $R$ by ordering the portals clockwise order.  (See \Cref{fig:recursion}(b).) For each vertex $u\in M(R)$, let $\vec{p}_u$  be the $\delta$-approximate pattern  of $u$ w.r.t. $\sigma$ \emph{in graph $R^+$}. Let $\vec{P}_{R} = \{\vec{p}_u: u\in M(R)\}$ and $S_R = \{ \appr{\delta}{d_{R^+}(u,\sigma_1): u\in M(R)}\}$.  
Let $\vec{D}^+_R = S_R\times \vec{P}_R$ be a set $k$-dimensional vectors whose first coordinates are scalars in $S_R$; here $k = |\sigma|$.
We  assign unique IDs to vectors in $\vec{P}_R$ and $\vec{D}^+_R$. At node $R$, we store $S$ and $ \vec{P}_R$ in a table $T_{21}$ and  $\vec{D}^+_R$ in a table $T_{22}$.  Given $u$, we can query $\appr{\delta}{d_{R^+}(u,\sigma_1)}$ and the ID of $\vec{p}_u$  in $T_{21}$ in $O(1)$ time. Given a scalar $s\in S_R$ and an ID of a vector $\vec{p} \in \vec{P}_{R}$, we can query the ID of the corresponding vector $[s,\vec{p}]^{\intercal}\in \vec{D}^+_R$ in  $T_{22}$ in $O(1)$ time.

\paragraph{Step 3.} For each non-leaf node $R$ with two child regions $R_1$ and $R_2$. Observe that $R_1^+\cup R_2^+ = G$ by definition, and that $R_1^+\cap R_2^+ = C_R$ where $C_R$ is the shared parental boundary of $R_1$ and $R_2$. (See \Cref{fig:recursion}(a).)  Let $\sigma$ be the sequence of portal vertices on $C_R$.  For each pair of distance encodings $\vec{d}_1 \in \vec{D}^+_{R_1}$ and $\vec{d}_2 \in \vec{D}^+_{R_2}$, we compute $\norm{\vec{d}_1, \vec{d}_2}$  using~\Cref{def:encoding-dist}, and store it in a lookup table $T_{3}$, indexed by the IDs of  $\vec{d}_1$ and $\vec{d}_2$.  Thus, given the IDs of $\vec{d}_1$ and $\vec{d}_2$, we can look up the distance $\norm{\vec{d}_1, \vec{d}_2}$ in $T_{3}$ in $O(1)$ time.	

\paragraph{Step 4.}   For each \emph{leaf region} $R$, we store the following information at the node of $R$.
\begin{itemize}
	\item[(4.1)] For each ancestor region $A$ of $R$, let $R^{out}$ be the graph induced by the edge set $(E(A^+)\setminus E(R^+))\cup \partial R^+$. Let $\sigma^A$  be the sequence of portals of the parental boundary of $A$; see \Cref{fig:recursion}(c). Let $\widehat{\vec{P}}_{R,A}$ be the set of patterns w.r.t. $\sigma^A$ induced by $\delta$-approximate patterns in $\vec{P}_R$ in graph $R^{out}$.
	\begin{itemize}
		\item[(a)] We store in a table $T_{41a}$ for each pattern $\vec{p} \in \vec{P}_R$ the distance $\tilde{d}_{R^{out}}(\vec{p}, \sigma^A_1)$, and the ID of the pattern $\widehat{\vec{p}}\in \widehat{\vec{P}}_{R,A}$ induced by $\vec{p}$. Given the ID of $\vec{p}$, we can access the corresponding entry of $T_{41a}$ in $O(1)$ time.
		\item[(b)] Recall that $\vec{P}_A$ is the set of $\delta$-approximate patterns of A computed in Step 2. For each induced pattern $\widehat{\vec{p}}\in \widehat{\vec{P}}_{R,A}$, we store  in a table $T_{41b}$ the ID of the pattern $\vec{p}'\in \vec{P}_A$ closest to $\widehat{\vec{p}}$. That is, $\vec{p}'$ minimize  $\lVert |\widehat{\vec{p}} - \vec{p}'\rVert_{\infty}$ over all patterns in $ \vec{P}_A$.	We can access entries of $T_{41b}$ in $O(1)$ time given the ID of $\widehat{\vec{p}}$.  
	\end{itemize}
	\item[(4.2)] We construct a graph $R'$, called a \emph{contracted filled graph}, from $R$ as follows. (See \Cref{fig:recursion}(d).) For each hole $h$ of $R$, let $P_h$ be the set of portals on the boundary of $h$ constructed in Step 1. Let $G^h$ be the graph induced by edges of $G$ inside and on the boundary of $h$. Let $K^h$ be the distance preserving minor of $P_h$ in $G^h$. 
	Next, let $\overline{R}$ be obtained from $R$ by contracting each unmarked vertices on the boundaries of the holes of $R$ to the nearest portal. The weight of new edges between two portals on the boundary of $\overline{R}$ is their distance in $R$. For every set of parallel edges whose one endpoint is non-portal, we keep  the minimum-weight edge. Let:
	\begin{equation}\label{eq:Rprime}
		R' = \overline{R}\bigcup (\cup_{h \mbox{ is a hole of }R} K^h).
	\end{equation}
	Observe that $R'$ is planar. We apply \Cref{thm:LongPettie} to construct an exact distance oracle $\mathcal{E}_R$ for $R'$ with:
	\begin{itemize}
		\item[(a)]  $|V(R')|^{1+o(1)}$ space  and  $\log^{2+o(1)}(|V(R')|)$ query time or
		\item[(b)] $|V(R')|\log^{2+o(1)}(|V(R')|)$ space and $|V(R')|^{o(1)}$ query time. 
	\end{itemize}
\end{itemize}
\endgroup
\end{tcolorbox}
\restoregeometry

\begin{figure}[!htb]
	\includegraphics[width=1.0\textwidth]{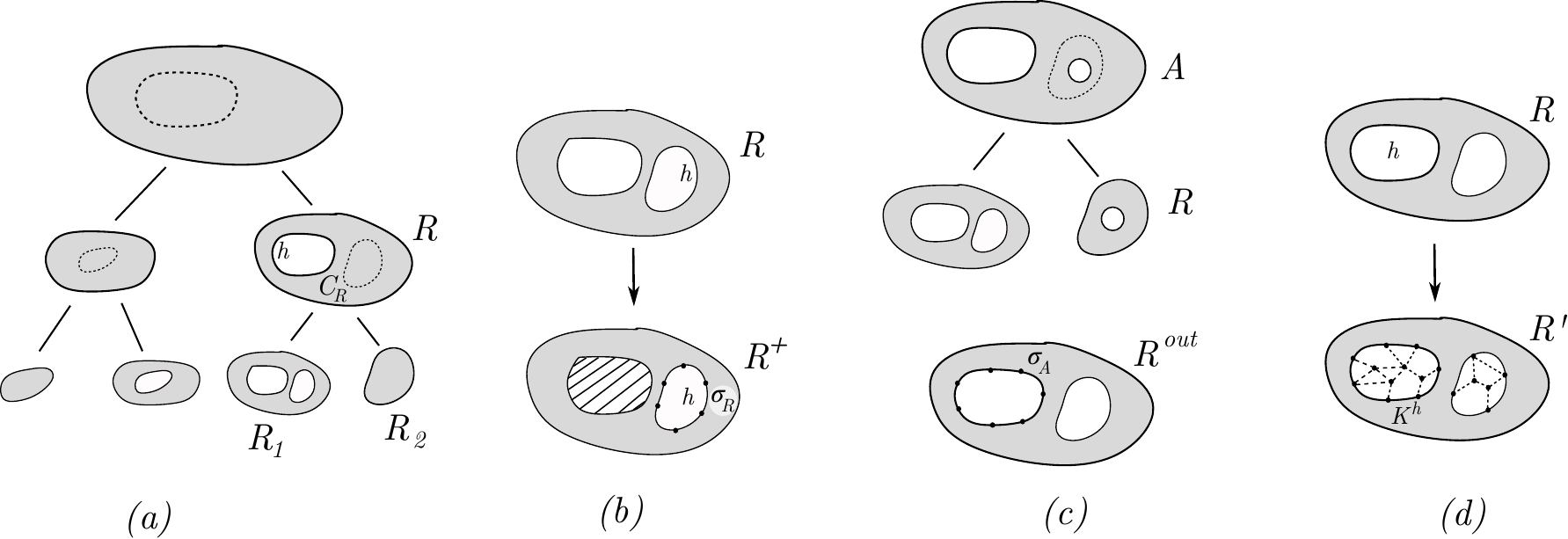}
	\caption{(a) A recursive decomposition tree $\mathcal{T}$; $h$ is the parental hole of $R$ and the boundary of $h$ is the parental boundary of $R$; separator $C_R$ separates $R$ into two children $R_1$ and $R_2$. (b) $R$ has two holes, where $h$ is the parental hole; $R^+$ is obtained from $R$ by filling up the non-parental hole; $\sigma^R$ is the sequence of vertices on the boundary of $h$. (c) A region $R$ and its parent $A$ (top) and the subgraph $R^{out}$ induced by the edge set $(E(A^+)\setminus E(R^+))\cup \partial R^+$ (bottom). (d) The contracted filled graph $R'$ of $R$ is obtained by first contracting some edges on the boundary of $R$ to obtain $\overline{R}$ and then filling each hole $h$ in $\overline{R}$ by a graph $K^h$.}
	\label{fig:recursion}
\end{figure}

Here we briefly describe the idea of each step in the construction. Step 1 constructs the recursive decomposition $\mathcal{T}$ and a basic data structure to navigate $\mathcal{T}$. Step 2 stores the set of approximate patterns of each region $\vec{P}_R$ and an \emph{extended} set of approximate distance decodings $\vec{D}_R^+$. The set of approximate distance decodings of $R$ is  $\vec{D}_R = \{\vec{d}_u: u\in M(R)\}$, which is a subset of $\vec{D}_R^+$. The reason we store the extended set is that we could not query the approximate distance decoding $\vec{d}_u$ of $u$ from the information stored at the leaf region containing $u$ due to the pattern composition step. The pattern composition only gives an approximate distance decoding $\tilde{d}_u$ that is close to $\vec{d}_u$ (in $\ell_{\infty}$ norm) and $\tilde{d}_u \in \vec{D}_R^+$. Step 3 precomputes  the distance of every pair of approximate distance decodings in a table so that we can look up the distance in $O(1)$ time during the query stage. Step 4(1) implements the pattern composition discussed in \Cref{subsec:composition}, and Step 4(2) constructs an exact distance oracle for each leaf region.

\begin{wrapfigure}{r}{0.4\textwidth}
	\vspace{-40pt}
	\begin{center}
		\includegraphics[width=0.38\textwidth]{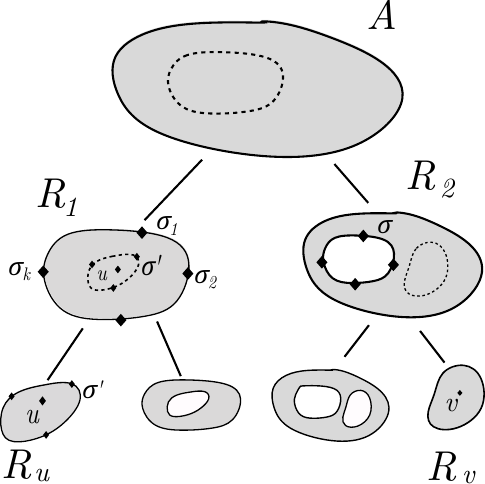}
	\end{center}
	\vspace{-20pt}
\end{wrapfigure}

\paragraph{Answering a query.~} Let $u$ and $v$ be two given vertices in the query. First, we find two leaf regions $R_u$ and $R_v$ containing $u$ and $v$ such that $u\in M(R_u)$ and $v\in M(R_v)$. If $R_u = R_v = R$, we query the distance between $u$ and $v$ in $\mathcal{E}_{R}$, denoted by $\mathcal{E}_{R}(u,v)$. We then return $\mathcal{E}_{R}(u,v) + 2\eps D$ as the approximate distance. (We add $+2\eps D$ to  $\mathcal{E}_{R}(u,v)$ because of the contraction in the construction of $\overline{R}$.)

We now assume that $R_u\not= R_v$.  Let $A  =\lca_{\mathcal{T}}(R_u,R_v)$ be the lowest common ancestor region of $R_u$ and $R_v$. Let $R_1$ and $R_2$ be two child regions of $A$ where $R_1$ ($R_2$) is an ancestor of $R_u$ ($R_v$). Let $\sigma$  be the  sequence of vertices on the parental boundary of $R_1$ and $R_2$. By construction, $R_1$ and $R_2$ share the same parental boundary. Next, we use the following lemma, whose proof is deferred to the end of this section.

\begin{lemma}\label{lm:query-approximiate} Given $u\in R_u$, we can query the ID of  a vector $\tilde{\vec{d}}_u \in \vec{D}^+_{R_1}$  in $O(1)$ time such that  $\tilde{\vec{d}}_u\approx_{\smallfont{0,2k\delta}}\vec{d}_u$ where  $\vec{d}_u$ is the approximate distance encoding of $u$ w.r.t. $\sigma$ in $R_1^{+}$.
\end{lemma}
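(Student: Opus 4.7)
The plan is to combine the tables at the leaf $R_u$ and the ancestor-indexed tables from Step 4.1 with \Cref{lm:pattern-comp} (pattern composition), so that $\tilde{\vec{d}}_u$ can be assembled by a constant number of pointer look-ups.

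First, I would navigate from $R_u$ up to $R_1$ using $\lca_{\mathcal{T}}$ in $O(1)$ time. Then perform three constant-time table accesses: (i) look up $u$ in $T_{21}$ at $R_u$ to retrieve the scalar $s_u = \appr{\delta}{d_{R_u^+}(u,\sigma^{R_u}_1)}$ together with the ID of $u$'s own $\delta$-approximate pattern $\vec{p}_u^{(R_u)} \in \vec{P}_{R_u}$ (w.r.t.\ $\sigma^{R_u}$ in $R_u^+$); (ii) feed this ID into $T_{41a}$ indexed by the pair $(R_u,R_1)$ to obtain the scalar $\tilde{d}_{R_u^{out}}(\vec{p}_u^{(R_u)},\sigma^{R_1}_1)$ and the ID of the induced pattern $\widehat{\vec{p}}_u \in \widehat{\vec{P}}_{R_u,R_1}$; (iii) feed the ID of $\widehat{\vec{p}}_u$ into $T_{41b}$ to obtain the ID of the pattern $\vec{p}'' \in \vec{P}_{R_1}$ closest in $\ell^\infty$ to $\widehat{\vec{p}}_u$.

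For the pattern component, I would show that $\vec{p}''$ approximates $\vec{d}_u[2:k]$, i.e.\ the true $\delta$-approximate pattern $\vec{p}_u$ of $u$ in $R_1^+$ w.r.t.\ $\sigma^{R_1}$. Since marked sets only shrink as we descend $\mathcal{T}$, $u \in M(R_u) \subseteq M(R_1)$, so $\vec{p}_u \in \vec{P}_{R_1}$. The parental boundary of $R_u$ is a shortest-path separator and hence single-crossing by \Cref{lm:single-crossing-sep}; applying \Cref{lm:pattern-comp} with $\sigma' = \sigma^{R_u}$, $\sigma = \sigma^{R_1}$, $G^{in} = R_u^+$ and $G^{out} = R_u^{out}$ yields $\widehat{\vec{p}}_u \approx_{k\delta} \vec{p}_u$. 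The nearest-pattern choice then gives $\lVert \vec{p}'' - \widehat{\vec{p}}_u \rVert_\infty \leq \lVert \vec{p}_u - \widehat{\vec{p}}_u \rVert_\infty \leq k\delta$, and \Cref{obs:apprx} implies $\vec{p}'' \approx_{2k\delta} \vec{p}_u$.

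The main obstacle is obtaining the first coordinate $\vec{d}_u[1] = \appr{\delta}{d_{R_1^+}(u,\sigma^{R_1}_1)}$ \emph{exactly}: \Cref{lm:separating-cycle}(2) only delivers the composed quantity $s_u + \tilde{d}_{R_u^{out}}(\vec{p}_u^{(R_u)},\sigma^{R_1}_1) = \tilde{d}_{R_1^+}(u,\sigma^{R_1}_1)$, which overestimates $d_{R_1^+}(u,\sigma^{R_1}_1)$ by up to $O(k\delta)$, so rounding alone will not in general hit the correct element of $S_{R_1}$. I would resolve this by augmenting the preprocessing so that at each leaf $R$, for each ancestor $A$ and each $u \in M(R)$, a pointer to the exact element $\appr{\delta}{d_{A^+}(u,\sigma^A_1)}$ of $S_A$ is stored alongside the $T_{41a}$ entry (an additive $O(\log n)$ factor in space, absorbed into the $\log^{o(1)}(n)$ slack of \Cref{thm:additiveOracleEasy}). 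Combining this pointer with the ID of $\vec{p}''$ and querying $R_1$'s table $T_{22}$ returns the ID of a vector $\tilde{\vec{d}}_u \in \vec{D}^+_{R_1}$ with $\tilde{\vec{d}}_u[1] = \vec{d}_u[1]$ and $\tilde{\vec{d}}_u[2:k] = \vec{p}'' \approx_{2k\delta} \vec{d}_u[2:k]$, i.e.\ $\tilde{\vec{d}}_u \approx_{0,2k\delta} \vec{d}_u$, all in $O(1)$ time.
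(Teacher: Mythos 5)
Your proposal follows the same table-lookup structure as the paper for the $[2:k]$ coordinates: $T_{21}$ at $R_u$ for the scalar and pattern ID, $T_{41a}$ for the pattern-vertex distance and induced-pattern ID, $T_{41b}$ for the nearest pattern in $\vec{P}_{R_1}$, and then \Cref{lm:pattern-comp} plus the nearest-pattern bound and \Cref{obs:apprx} to get $\vec{p}''\approx_{2k\delta}\vec{p}_u$. That part matches the paper's proof.

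Where you diverge is the first coordinate, and here you have correctly put your finger on a real subtlety. The paper's chain
\[
\appr{\delta}{d_{R_u^+}(u,\sigma'_1)}+\tilde d_{R_u^{out}}(\vec p_u',\sigma_1)=\tilde d_{R_1^+}(u,\sigma_1)
\]
invokes Item~(2) of \Cref{lm:separating-cycle}, so the right-hand side is the \emph{composed minimum} of \Cref{eq:dtilde-uv} over the portals $\sigma'$ on $R_u$'s parental boundary; it is not the decoded quantity $\vec d_u^{\leq}[1]=\appr{\delta}{d_{R_1^+}(u,\sigma_1)}$ from \Cref{remrk:dusigma1}. These two usages of the symbol $\tilde d_{R_1^+}(u,\sigma_1)$ are being silently identified. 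By Item~(1) of \Cref{lm:separating-cycle} the composed minimum overestimates $d_{R_1^+}(u,\sigma_1)$ by as much as $2k'\delta+2\tau$, and with $\tau=\eps D$ the $\tau$ term dominates $\delta=\eps^3 D$ by two orders of magnitude in $\eps$, so the two quantities do not coincide. (Your $O(k\delta)$ estimate for the overshoot actually misses the $\tau$ term, which is the larger one.) So the ``$0$'' in the paper's $\approx_{0,2k\delta}$ is not delivered by the argument as written.

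However, the fix you propose does not fit the space budget. Storing, for each $u\in M(R)$ and each of $u$'s $O(\log n)$ ancestors, a pointer to the exact element of $S_A$ costs $|M(R)|\cdot O(\log n)$ words per leaf; summing $|M(R)|$ over leaves gives $O(n)$, so the total is $\Theta(n\log n)$, which is a \emph{multiplicative} $\log n$ blowup, not an additive $O(\log n)$ term, and it exceeds the $n(1/\eps)^{o(1)}\log^{o(1)}(n)$ space of \Cref{thm:additiveOracleEasy}. The cheaper repair is not to chase exactness at all: accept $\tilde{\vec d}_u[1]\approx_{2k\delta+2\tau}\vec d_u[1]$ (ensuring $S_R$ contains every integer multiple of $\delta$ in the relevant range so the lookup in $T_{22}$ still succeeds). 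Feeding $\delta_1=2k\delta+2\tau$ rather than $\delta_1=0$ into \Cref{lm:dist-approximation-enco} in the proof of \Cref{lm:dhat-uv} still gives $\norm{\tilde{\vec d}_u,\tilde{\vec d}_v}\approx_{O(k^2\delta+\tau)}d_G(u,v)=O(\eps D)$, so only the constant $c_0$ in \Cref{eq:returned-distance} changes, with no space or query-time cost.
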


We apply \Cref{lm:query-approximiate} to query the IDs of approximate distance encodings $\tilde{\vec{d}}_u \in \vec{D}^+_{R_1}$ and  $\tilde{\vec{d}}_v\in \vec{D}^+_{R_2}$  of $u$ and $v$, respectively.   Given the two approximate distance encoding IDs, we query table $T_3$ of $A$ in $O(1)$ time to obtain $\norm{\tilde{\vec{d}}_u, \tilde{\vec{d}}_v}$. Finally, we return:
\begin{tcolorbox}
	\begin{equation}\label{eq:returned-distance}
		\norm{\tilde{\vec{d}}_u, \tilde{\vec{d}}_v} + c_0\eps D \qquad \mbox{for some constant $c_0$ chosen later}
	\end{equation}
\end{tcolorbox}
 as an approximate distance between $u$ and $v$ of our oracle.

\begin{proof}[Proof of \Cref{lm:query-approximiate}]  Let $R_u^{out}$ be   the graph induced by the edge set $(E(R_1^+)\setminus E(R_u^+))\cup \partial R_u^+$. 
	($R_u^{out}$ is exactly the graph in Step 4(1) obtained from the ancestor region $R_1$ and  the leaf region $R_u$.)   
    \InsertBoxR{2}{\begin{minipage}{0.45\linewidth}\centering
   		\vspace{10pt}
   		\includegraphics{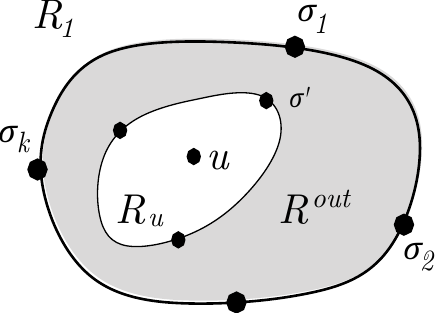}
   		\vspace{-20pt}
   	\end{minipage}%
   }[5]

	Recall each vector $\tilde{\vec{d}}_u \in \vec{D}^+_{R_1}$ can be written as $\tilde{\vec{d}}_u  = \left[\appr{\delta}{d_{R_1^+}(u,\sigma_1)}, \vec{p}\right]^\intercal$ for some approximate pattern $\vec{p} \in \vec{P}_{R_1}$. (Pattern $\vec{p}$ might be different from $\vec{p}_u$, the approximate pattern of $u$ in $\vec{P}_{R_1}$.) Furthermore, we can query the ID of $\tilde{\vec{d}}_u$ from table $T_2$ stored at $R_1$ constructed in Step 2. To do so, we need to know the value of $\appr{\delta}{d_{R_1^+}(u,\sigma_1)}$ and the ID of $\vec{p}$.

	Recall by \Cref{remrk:dusigma1}, $\appr{\delta}{d_{R_1^+}(u,\sigma_1)} = \tilde{d}_{R_1^+}(u,\sigma_1)$. We query $\tilde{d}_{R_1^+}(u,\sigma_1)$ as follows. Let $\sigma'$ be the sequence of portals in the parental hole of $R_u$.   We use $u$ to query table $T_{2}$ of $R_u$ (constructed in Step 2)  to get $\appr{\delta}{d_{R_u^+}(u,\sigma'_1)}$ and the approximate pattern of $u$ in graph $R^+_u$, denoted by $\vec{p}'_{u}$. (We reserve $\vec{p}_u$ notation for the approximate pattern of $u$ in $R^+_1$.)   Next, we use the ID of $\vec{p}'_{u}$ to query $\tilde{d}_{R^{out}}(\vec{p}'_u,\sigma_1)$ in table $T_{41a}$ (stored at $R_u$) constructed in Step 4(1a). The total query time is $O(1)$. Observe that:

	\begin{equation*}
		\begin{split}
			&\appr{\delta}{d_{R_u^+}(u,\sigma'_1)} + \tilde{d}_{R^{out}}(\vec{p}_u',\sigma_1 )\\
			&=  \tilde{d}_{R_u^+}(u,\sigma_1') + \tilde{d}_{R^{out}}(\vec{p}_u',\sigma_1) \qquad \mbox{(by \Cref{remrk:dusigma1})}\\
			&= \tilde{d}_{R_1^+}(u,\sigma_1) \qquad \mbox{(by Item (2) of \Cref{lm:separating-cycle})}
		\end{split}
	\end{equation*}
	 which will give us the value of $\appr{\delta}{d_{R_1^+}(u,\sigma_1)}$.

	Next, we query the ID of $\vec{p}$ as follows. First, we query the ID of $\vec{p}'_{u}$ stored in table $T_2$ of $R_u$ as above. Then, we use the ID of $\vec{p}'_{u}$ to query the ID of an approximate pattern $\vec{p}$ in table $T_{41b}$ for ancestor $R_1$ of $R_u$ constructed in Step 4(1b) in $O(1)$ time.
	
	Now using $\appr{\delta}{d_{R_1^+}(u,\sigma_1)}$ and the ID of $\vec{p}$, we query the ID of $\tilde{\vec{d}}_u \in \vec{D}^+_{R_1}$ from table $T_2$ stored at $R_1$ constructed in Step 2. Note that $\tilde{\vec{d}}_u[1] = \appr{\delta}{d_{R_1^+}(u,\sigma_1)} = \tilde{d}_{R_1^+}(u,\sigma_1)$ and $\tilde{\vec{d}}_u[2:k] = \vec{p}$.

	To complete the proof of the lemma, we show that $\tilde{\vec{d}}_u\approx_{\smallfont{0,2k\delta}}\vec{d}_u$. Observe that  $\tilde{\vec{d}}_u[1] = \appr{\delta}{d_{R_1^+}(u,\sigma_1)}=  \vec{d}_u[1]$. Thus, by \Cref{eq:encoding-close}, it remains to show that $\tilde{\vec{d}}_u[2:k] \approx_{2k\delta} \vec{d}_u[2:k]$ which is equivalent to showing that $ \vec{p}\approx_{2k\delta}\vec{p}_u$.

	By construction in Step 4(1b), $\vec{p}$ is the closest pattern of the pattern, say $\widehat{\vec{p}}_u'  \in \widehat{\vec{P}}_{R_u,R_1}$, induced by  $\vec{p}_u'$.   By \Cref{lm:pattern-comp}, $\widehat{\vec{p}}_u'\approx_{\smallfont{k\delta}} \vec{p}_u$; that is, $\lVert\widehat{\vec{p}}_u', \vec{p}_u\rVert_\infty \leq k\delta$.  Since $\vec{p}$  is closest to $\widehat{\vec{p}}_u'$ in $\ell_{\infty}$ norm, we have:
	\begin{equation}
		\lVert \vec{p}, \widehat{\vec{p}}_u' \rVert_{\infty } \leq \lVert\widehat{\vec{p}}_u', \vec{p}_u\rVert_\infty \leq k\delta
	\end{equation}
	Since $\vec{p} \approx_{k\delta}  \widehat{\vec{p}}_u' $ and $\widehat{\vec{p}}_u'\approx_{\smallfont{k\delta}} \vec{p}_u$, it follows from \Cref{obs:apprx} that $ \vec{p}\approx_{2k\delta}\vec{p}_u$, as desired.
\end{proof}

\subsubsection{Query time and stretch analysis}\label{subsubsec:time-stretch}

\paragraph{Query time analysis.~} Let $u$ and $v$ be two querying vertices. If $R_u = R_v = R$, the algorithm queries  $\mathcal{E}_{R}$ in time:

\begin{equation}\label{eq:time-Exact}
	\begin{cases} \log^{2+o(1)}(|V(R_u)|) = \log^{2+o(1)}(1/\eps) + \log^{2 + o(1)}\log(n)  \qquad &\mbox{ in Regime 4(2a)}\\
		|V(R_u)|^{o(1)} = \log^{o(1)}(n)(1/\eps)^{o(1)}  \qquad &\mbox{ in Regime 4(2b)}	
	\end{cases}
\end{equation}

We now consider the case $R_u\not= R_v$. The time to compute  $A  =\lca_{\mathcal{T}}(R_u,R_v)$ is  $O(1)$. Let $R_1$ and $R_2$ be two child regions of $A$ where $R_1$ ($R_2$) is an ancestor of $R_u$ ($R_v$).  By \Cref{lm:query-approximiate}, the time to query the IDs of approximate distance encodings   $\tilde{\vec{d}}_u \in \vec{D}^+_{R_1}$ and    $\tilde{\vec{d}}_v \in \vec{D}^+_{R_2}$ of $u$ and $v$, respectively, is $O(1)$. Next, we query table $T_3$ of $A$ in $O(1)$ time to obtain $\norm{\tilde{\vec{d}}_u,\tilde{\vec{d}}_v}$. Finally, we return the approximate distance in \Cref{eq:returned-distance} in $O(1)$ time. 

In summary, the total query time is dominated by the query time in \Cref{eq:time-Exact}, which implies the query time claimed in \Cref{thm:additiveOracleEasy}.

\paragraph{Stretch analysis and choosing $c_0$ in \Cref{eq:returned-distance}.~} If $u$ and $v$ are in the same leaf region $R$, then the exact distance query in $\mathcal{E}_{R}$ will return the exact distance between $u$ to $v$ in $R'$. We show in the following that this distance approximates the true distance in $G$. 

\begin{lemma}\label{clm:dRprime}$d_G(u,v) - 2\eps D\leq d_{R'}(u,v) \leq d_G(u,v)+ 10\eps D$.
\end{lemma}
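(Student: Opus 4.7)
To prove Lemma \ref{clm:dRprime}, I would establish the two inequalities by separate constructions, each starting from a shortest path (in $G$ for the upper bound, in $R'$ for the lower bound) and transferring it to the other graph while carefully tracking additive loss from portalization and contraction.

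For the upper bound $d_{R'}(u,v) \le d_G(u,v) + 10\eps D$, the plan is to start from a shortest $u$-to-$v$ path $P$ in $G$, and show it can be simulated in $R'$ with overhead at most $10\eps D$. The key preliminary fact is that each hole of $R$ is bounded by a shortest-path separator cycle of some ancestor region and is therefore single-crossing by \Cref{lm:single-crossing-sep}. Using a standard uncrossing argument on same-side shortest paths, I may assume without loss of generality that $P$ crosses each hole boundary at most twice, so $P$ has at most one ``excursion'' into $G^h$ per hole, hence at most five excursions in total since $R$ has at most five holes. Each inside-$R$ segment is mapped to a walk of no greater length in $\overline{R}$: contraction identifies boundary vertices with their nearest portals, and the min-weight rule on parallel edges never increases edge weights. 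For each outside-$R$ excursion through hole $h$ from $x$ to $y$, I replace it by a path in $K^h$ between the nearest portals $p_x^\star, p_y^\star$ of $x,y$; since $K^h$ is a distance-preserving minor of $G^h$ and consecutive portals on $\partial h$ are at distance $\eps D$, the triangle inequality yields length at most $d_{G^h}(x,y) + 2\eps D$. Summing over at most five excursions gives a total overhead of $10\eps D$.

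For the lower bound $d_{R'}(u,v) \ge d_G(u,v) - 2\eps D$, the plan is to start from a shortest $u$-to-$v$ path $P'$ in $R'$ and expand it edge-by-edge into a walk $W$ in $G$ of length at most $d_{R'}(u,v) + 2\eps D$, so that $d_G(u,v) \le |W| \le d_{R'}(u,v) + 2\eps D$. Each edge of $R'$ lifts as follows: (a) an $\overline{R}$-edge between two marked interior vertices is the same edge in $R \subseteq G$ with identical weight; (b) a $K^h$-edge lifts to an exact shortest $G^h$-path with identical weight; (c) an $\overline{R}$-edge between two portals $p,q$ has weight $d_R(p,q)$ and lifts to a shortest $p$-to-$q$ path in $R$; (d) an $\overline{R}$-edge from an interior vertex $y$ to a portal $p$ has weight $w(x,y)$ for some non-portal boundary vertex $x$ contracted to $p$, and lifts to the walk $p \to x \to y$ in $R$ of length at most $w(x,y) + d_R(x,p) \le w(x,y) + \eps D$. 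Only type-(d) edges introduce slack, of at most $\eps D$ per edge. It then remains to argue that some shortest $u$-$v$ path in $R'$ uses at most two type-(d) edges---intuitively, one near $u$ to step into the portal/boundary world and one near $v$ to step back out.

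The main obstacle I anticipate is precisely this last step: bounding the number of type-(d) edges in a shortest $R'$-path by two. The plan is to proceed by an exchange argument that reroutes any interior vertex ``sandwiched'' between two portal visits through a portal-to-portal subpath, exploiting that the portal subgraph of $R'$ already contains both $\overline{R}$-edges of weight $d_R(p,q)$ (arising from contracted boundary-to-boundary edges of $R$) and $K^h$-edges of weight $d_{G^h}(p,q)$. A case analysis on how the interior detour interacts with the boundaries of the holes, combined with the fact that distances between portals in $\overline{R}$ match those in $R$ up to an $O(\eps D)$ additive term per contraction, should close the argument, with the constant $2\eps D$ emerging from exactly the two endpoint transitions at $u$ and $v$.
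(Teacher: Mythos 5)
Your proof takes a genuinely different route from the paper's. The paper introduces two intermediate graphs and composes three comparisons: it first shows $d_{R}(u,v) - 2\eps D \leq d_{\overline{R}}(u,v) \leq d_{R}(u,v)$ (contraction step), then lets $\overline{G} = \overline{R}\cup\bigl(\cup_{h} G^h\bigr)$ be the contracted region with every hole filled \emph{exactly} and deduces $d_{G}(u,v) - 2\eps D \leq d_{\overline{G}}(u,v) \leq d_{G}(u,v)$, and finally shows $d_{\overline{G}}(u,v)\leq d_{R'}(u,v) \leq d_{\overline{G}}(u,v) + 10\eps D$ (portalization step, $2\eps D$ per hole over at most five holes). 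The point of factoring through $\overline{G}$ is that the lower-bound direction of the last comparison, $d_{\overline{G}}(u,v)\leq d_{R'}(u,v)$, is \emph{free}: $K^h$ is a distance-preserving minor of $G^h$, so every $R'$-edge (whether in $\overline{R}$ or in $K^h$) lifts to a $\overline{G}$-walk of the same length, and all of the $-2\eps D$ slack is concentrated in the single contraction comparison. You instead work directly between $G$ and $R'$, handling contraction loss and hole filling simultaneously.

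Your upper-bound sketch is in the right spirit, but note one subtlety you gloss over: when the inside-$R$ segment of $P$ uses an edge between two non-portal boundary vertices $x,x'$ that are contracted to two \emph{distinct} portals $p,p'$, the corresponding $\overline{R}$-edge is reweighted to $d_R(p,p')$, which can exceed $w_R(x,x')$ by up to $2\eps D$. So the claim that the inside-$R$ segments map to $\overline{R}$-walks of \emph{no greater} length is not quite right as stated; you need to charge this to the hole-portalization budget or argue such edges can be avoided. The paper's decomposition through $\overline{G}$ implicitly handles this by folding it into the single $\pm 2\eps D$ contraction term.

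The more serious issue is the one you flag yourself in the lower bound: you have not shown that a shortest $u$-to-$v$ path in $R'$ uses at most two type-(d) edges, and this is a genuine gap, not just a technicality. The exchange argument you propose is not obviously sound, because a shortest $R'$-path can legitimately detour through interior vertices between two portal visits when the interior is a genuine shortcut, in which case rerouting through the portal subgraph would \emph{increase} length and the exchange gains nothing; what one actually needs is a lifting argument showing that an $R'$-path with many interior-portal transitions still lifts to an $R$-walk with only $O(\eps D)$ total overhead, or an argument that such transitions can be locally eliminated without increasing $R'$-length. As written, your lower bound does not reach the stated $-2\eps D$ (or any explicit $O(\eps D)$) bound; it only does so conditionally on the unproven two-transition claim.
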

\begin{proof}
	We reuse the notation in Step 4(2) here. As the distance of each unmarked vertex to the nearest portal is at most $\eps D$, by the triangle inequality,  for any two marked vertices $u$ and $v$, it holds that:
	\begin{equation}\label{eq:dRoverline}
		d_{R}(u,v) - 2\eps D \leq d_{\overline{R}}(u,v) \leq d_{R}(u,v)
	\end{equation}
	Let $\overline{G}$ be obtained from $\bar{R}$ by filling each hole exactly. That is:
	\begin{equation*}
		\overline{G} = \overline{R}\bigcup (\cup_{h \mbox{ is a hole of }R} G^h).
	\end{equation*}
	It follows from \Cref{eq:dRoverline} that:
	\begin{equation}\label{eq:dGoverline}
		d_{G}(u,v) - 2\eps D \leq d_{\overline{G}}(u,v) \leq d_{G}(u,v)
	\end{equation}
	Since $R'$ is obtained from $\overline{R}$ by approximately filling 5 holes through portals, and since the distance between any two consecutive portals is $\eps D$, by the triangle inequality,  $d_{\overline{G}}(u,v)\leq d_{R'}(u,v) \leq d_{\overline{G}}(u,v) + 10\eps D$. Combined with \Cref{eq:dGoverline}, we have $	d_{G}(u,v) - 2\eps D  \leq d_{R'}(u,v) \leq d_{G}(u,v) + 10\eps D$ as claimed.
\end{proof}

The distance that the oracle returns is $\mathcal{E}_R(u,v) + 2\eps D  = d_{R'}(u,v)+2\eps D$, which is at least $d_G(u,v)$ and at most $d_G(u,v) + 12\eps D$ by \Cref{clm:dRprime}.

It remains to consider the case $R_u\not=R_v$. Observe that for every region $R$ in $\mathcal{T}$, the sequence $\sigma$ of vertices forming from the  portals of the parental boundary of $R$ is a $\eps D$-cover of the boundary. Thus, $\tau = \eps D$. Since every shortest path has length at most $D$,  there are at most $k = O(D/\tau) = O(1/\eps)$ vertices in  $\sigma$. 

In \Cref{eq:returned-distance},  the oracle computes and returns $ \norm{\tilde{\vec{d}}_u,\tilde{\vec{d}}_v} + c_0\eps D$. The following lemma, whose proof is deferred to the end of this section, will help us bound the stretch.

\begin{lemma}\label{lm:dhat-uv} $\norm{\tilde{\vec{d}}_u,\tilde{\vec{d}}_v} \approx_{\smallfont{(4k^2-2k)\delta + 2\tau}}d_G(u,v)$.  In particular, when $ k = O(1/\eps), \delta = \eps^3 D$ and $\tau = \eps D$, then $\norm{\tilde{\vec{d}}_u,\tilde{\vec{d}}_v} \approx_{\smallfont{c_0 \eps D}}d_G(u,v)$ for some constant $c_0$.
\end{lemma}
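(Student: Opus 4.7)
The plan is to reduce the statement to three facts already established in the preceding subsections and then chain them using the triangle inequality (Observation \ref{obs:apprx}). Let $C_R$ denote the parental boundary shared by $R_1$ and $R_2$, and let $\sigma$ be the portal sequence on $C_R$. By Lemma \ref{lm:single-crossing-sep}, $C_R$ (a fundamental cycle of the shortest path tree) is single-crossing, and $R_1^+$, $R_2^+$ play exactly the roles of $G^{in}$, $G^{out}$ in Lemma \ref{lm:separating-cycle}, since by construction $R_1^+\cup R_2^+=G$ and $R_1^+\cap R_2^+=C_R$. This lets us apply the separator-based machinery of Section \ref{subsec:approxDecoding} verbatim.

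First I would invoke Lemma \ref{lm:query-approximiate} twice to obtain $\tilde{\vec{d}}_u\approx_{\smallfont{0,2k\delta}}\vec{d}_u$ and $\tilde{\vec{d}}_v\approx_{\smallfont{0,2k\delta}}\vec{d}_v$, where $\vec{d}_u,\vec{d}_v$ are the genuine $\delta$-approximate distance encodings of $u,v$ with respect to $\sigma$ in $R_1^+$ and $R_2^+$. Feeding these into Lemma \ref{lm:dist-approximation-enco} with $\delta_1=0$ and $\delta_2=2k\delta$ yields
\begin{equation*}
\norm{\tilde{\vec{d}}_u,\tilde{\vec{d}}_v}\;\approx_{\smallfont{2(k-1)\cdot 2k\delta}}\;\tilde{d}_G(u,v),
\end{equation*}
so the two quantities differ by at most $(4k^2-4k)\delta$. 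Then Item (1) of Lemma \ref{lm:separating-cycle} gives $d_G(u,v)\le\tilde{d}_G(u,v)\le d_G(u,v)+2k\delta+2\tau$, hence $\tilde{d}_G(u,v)\approx_{\smallfont{2k\delta+2\tau}}d_G(u,v)$. Chaining the two approximations via Observation \ref{obs:apprx} produces the claimed bound $\norm{\tilde{\vec{d}}_u,\tilde{\vec{d}}_v}\approx_{\smallfont{(4k^2-2k)\delta+2\tau}}d_G(u,v)$.

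For the second assertion, plug in $k=O(1/\eps)$, $\delta=\eps^3 D$, and $\tau=\eps D$: then $(4k^2-2k)\delta=O(1/\eps^2)\cdot\eps^3 D=O(\eps D)$ and $2\tau=2\eps D$, so the total additive error is $O(\eps D)=c_0\eps D$ for some absolute constant $c_0$.

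I do not anticipate a genuine obstacle here, as the substantive work is already done in Lemmas \ref{lm:separating-cycle}, \ref{lm:dist-approximation-enco}, and \ref{lm:query-approximiate}. The only care needed is in checking the hypotheses of Lemma \ref{lm:separating-cycle} for the pair $(R_1^+,R_2^+)$ relative to $C_R$: specifically, that $C_R$ is single-crossing (Lemma \ref{lm:single-crossing-sep}) and that $\sigma$ is an $\eps D$-cover of $C_R$ of length $k=O(1/\eps)$ (from the portal placement in Step 1 together with $\diam(G)\le D$). Once these are in hand, the proof is a two-line application of the triangle inequality.
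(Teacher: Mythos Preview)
Your proposal is correct and follows essentially the same approach as the paper: invoke \Cref{lm:query-approximiate} to get $\tilde{\vec{d}}_u\approx_{\smallfont{0,2k\delta}}\vec{d}_u$ and $\tilde{\vec{d}}_v\approx_{\smallfont{0,2k\delta}}\vec{d}_v$, apply \Cref{lm:dist-approximation-enco} with $\delta_1=0$, $\delta_2=2k\delta$ to obtain $\norm{\tilde{\vec{d}}_u,\tilde{\vec{d}}_v}\approx_{\smallfont{4(k-1)k\delta}}\tilde{d}_G(u,v)$, then combine with Item~(1) of \Cref{lm:separating-cycle} via \Cref{obs:apprx}. The hypothesis checks you list (single-crossing via \Cref{lm:single-crossing-sep}, $\sigma$ being an $\eps D$-cover of length $O(1/\eps)$) are exactly the ones the paper relies on as well.
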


We choose $c_0$ in \Cref{eq:returned-distance} to be the value in \Cref{lm:dhat-uv}. It follows that:
\begin{equation*}
	d_G(u,v)\leq \norm{\tilde{\vec{d}}_u,\tilde{\vec{d}}_v} + c_0\eps D \leq d_G(u,v) + 2c_0\eps D~.
\end{equation*}
That is, the additive stretch of our oracle is $+O(\eps D)$. We can get back additive stretch $+\eps D$ by scaling $\eps$.

\begin{proof}[Proof of \Cref{lm:dhat-uv}] 
	By construction, $R_1^+ \cup R_2^+ = G$. We denote $G^{in} = R_1^+ $ and $G^{out} = R_2^+$.    The cycle separating $G^{in}$ and $G^{out}$ is single-crossing by \Cref{lm:single-crossing-sep}. Recall that $\sigma$ is the sequence of at most $k$ portals on the (shared) parental boundary of $R_1$ and $R_2$.  Let $\vec{p}_u$ ($\vec{d}_u$) and $\vec{p}_v$ ($\vec{d}_v$) be the $\delta$-approximate patterns (distance encodings) of $u$ and $v$ w.r.t. $\sigma$ in   $G^{in}$ and  $G^{out}$, respectively. By \Cref{lm:query-approximiate}, we have:
	\begin{equation*}
			\tilde{\vec{d}}_u \approx_{\smallfont{0},\smallfont{2k\delta}} \vec{d}_u  \qquad \mbox{and} \qquad 
			\tilde{\vec{d}}_v \approx_{\smallfont{0},\smallfont{2k\delta}}  = \vec{d}_v
	\end{equation*}
	By applying \Cref{lm:dist-approximation-enco} with $\delta_1 =0$ and $\delta_2 = 2k\delta$, $\norm{\tilde{\vec{d}}_u,\tilde{\vec{d}}_v} \approx_{\smallfont{4(k-1)k\delta}} \tilde{d}_G(u,v)$. By Item (1) in  \Cref{lm:separating-cycle}, $\tilde{d}_G(u,v)\approx_{\smallfont{2k\delta + 2\tau}}d_G(u,v)$. It follows from \Cref{obs:apprx} that $\norm{\tilde{\vec{d}}_u,\tilde{\vec{d}}_v}\approx_{\smallfont{(4k^2-2k)\delta + 2\tau}}d_G(u,v)$ as desired.
\end{proof}

\subsubsection{Space analysis}\label{subsubsec:space-analysis}

 First, we bound the number of $\delta$-approximate patterns w.r.t. sequence $\sigma_H $ of at most $k = O(1/\eps)$ vertices in a graph $H$ arising during the  construction of the oracle; $H$ could be $R^+$ in Step 2, or $R^{out}$ in Step 4(1). Recall that we set $\delta = \eps^3 D$ in Step 2.  Since the distance between $\sigma_{i+1}$ and $\sigma_i$ is at most $\eps D$ by construction for every $i\in [k-1]$, it follows from the triangle inequality that $-\eps D \leq d_G(u,\sigma_{i+1}) - d_G(u,\sigma_i) \leq \eps D$. That is, the constant $g$ in \Cref{lm:pattern-bound} satisfies $g \leq \frac{\eps D}{\delta} \leq 1/\eps^2$.  Let $\vec{P}_H$ be the set of all approximate patterns of $H$. By  \Cref{lm:pattern-bound}, we have:
\begin{equation}\label{eq:pattern-region}
	|\vec{P}_H| = O((kg)^3) = O(1/\eps^9)~.
\end{equation}

In the next four lemmas, we bound the space of each step of the oracle construction. (See \Cref{fig:orale-connst} for an illustration.) Recall that $c = 24$ specified in Step 1.

\begin{lemma}\label{lm:step1-space} The total space of Step 1 is $O(\frac{n\eps^{c-1}}{\log n})$.
\end{lemma}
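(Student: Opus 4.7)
The space used in Step 1 breaks into three contributions: the recursion tree $\mathcal{T}$ itself, the lowest-common-ancestor data structure $\lca_{\mathcal{T}}$, and the portals designated on the shortest path separators. The plan is to bound each and sum. By \Cref{lm:T-size}, $|V(\mathcal{T})| = O(n/\lambda)$, which with $\lambda = \log(n)/\eps^{c}$ becomes $O(n\eps^{c}/\log n)$. Standard pointer representations of $\mathcal{T}$ and a standard linear-space LCA data structure supporting $O(1)$-time queries each cost $O(|V(\mathcal{T})|) = O(n\eps^{c}/\log n)$ words.

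For the portals, I will use the fact that each separator $C$ is a fundamental cycle of the shortest path tree $T$ comprising two root-to-leaf shortest paths of length at most $D$ together with a single edge, so $C$ has total length $O(D)$. Spacing portals $\eps D$ apart along $C$ (subdividing any edge longer than $\eps D$ as needed) therefore costs only $O(D/(\eps D)) = O(1/\eps)$ portals per separator, matching the bound asserted in the construction. Summed over the $O(|V(\mathcal{T})|)$ separators this yields $O(|V(\mathcal{T})|/\eps) = O(n\eps^{c-1}/\log n)$ portal records, each stored in $O(1)$ words together with pointers into the cycle $C$.

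Adding the three contributions and noting that $\eps^{c-1}/\log n$ dominates $\eps^{c}/\log n$ for $\eps\in(0,1)$ yields the claimed $O(n\eps^{c-1}/\log n)$ space. The proof is a straightforward accounting exercise and I do not foresee any serious obstacle; the one technical point worth verifying is that subdividing long edges does not blow up the portal count, which follows from $\mathrm{length}(C) = O(D)$, so that portals at spacing $\eps D$ still number $O(1/\eps)$ even after subdivision. Note that $M(R)$ is introduced in Step 1 as notation only and does not need to be stored explicitly at each node; the assignment of vertices to regions can be recovered implicitly from the recursion and will be accounted for by later steps.
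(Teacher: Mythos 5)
Your proof is correct and follows essentially the same route as the paper: bound $|V(\mathcal{T})| = O(n\eps^c/\log n)$ via \Cref{lm:T-size}, charge $O(1/\eps)$ portals per separator, and note that the $O(|V(\mathcal{T})|)$ cost of the tree and LCA structure is dominated. Your extra remark that subdivision of long edges cannot increase the portal count beyond $O(1/\eps)$ (since each separator cycle has total length $O(D)$) is a reasonable bit of added rigor that the paper leaves implicit.
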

\begin{proof}
	Since $\lambda = \frac{\log(n)}{\eps^c}$, $\mathcal{T}$ has $O(\frac{n\eps^{c}}{\log n})$ nodes by \Cref{lm:T-size}. Thus, the total space to store all portals in Step 1 is $O(\frac{n\eps^c}{\log n}\cdot (1/\eps)) = O(\frac{n\eps^{c-1}}{\log n})$. The LCA data structure $\lca_{\mathcal{T}}$ (see, e.g.,~\cite{BF00}) has space  $O(|V(\mathcal{T})|) = O(\frac{n\eps^c}{\log n})$, which implies the lemma.
\end{proof}

\begin{lemma}\label{lm:step2-space} The total space of Step 2 is $O(\frac{n\eps^{c-13}}{\log n})$.
\end{lemma}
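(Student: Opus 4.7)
The plan is to bound the space occupied by the two tables $T_{21}$ and $T_{22}$ at each region $R$ and then sum over all $O(n/\lambda)$ nodes of $\mathcal{T}$ provided by \Cref{lm:T-size}. The three relevant cardinalities driving the analysis are: $k = |\sigma| = O(1/\eps)$ from the portal spacing of Step~1; $|\vec{P}_R| = O(1/\eps^9)$ directly from \Cref{eq:pattern-region}; and $|S_R| = O(1/\eps^3)$, since every scalar is an integer multiple of $\delta = \eps^3 D$ lying in $[0,D]$ (capping any larger quantity by a single ``out-of-range'' sentinel). Consequently $|\vec{D}_R^+| = |S_R|\cdot|\vec{P}_R| = O(1/\eps^{12})$.

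With these cardinalities in hand, I will estimate the per-region storage as follows. The set $S_R$ takes $O(1/\eps^3)$ words, and writing each pattern in $\vec{P}_R$ as a $(k-1)$-dimensional vector takes $O(k\,|\vec{P}_R|) = O(1/\eps^{10})$ words. The table $T_{22}$ stores the elements of $\vec{D}_R^+$ as explicit $k$-dimensional vectors (needed so that Step~3 can compute $\lVert \vec{d}_1,\vec{d}_2\rVert$ via \Cref{def:encoding-dist} and so that later steps can address individual coordinates); this costs $O(k\,|\vec{D}_R^+|) = O(1/\eps^{13})$ words per region, which dominates. For the vertex-indexed portion of $T_{21}$, each marked $u$ stores a single pointer to $(\appr{\delta}{d_{R^+}(u,\sigma_1)},\mathrm{ID}(\vec{p}_u))$; since per-vertex look-up is only invoked at a \emph{leaf} region in the proof of \Cref{lm:query-approximiate}, it suffices to store this map at leaves, contributing a cumulative $O(n)$ that I will attribute to a later step's account (e.g., Step~4).

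Summing the dominant $O(1/\eps^{13})$ per-region cost over the $O(n\eps^c/\log n)$ regions from \Cref{lm:T-size} produces the bound $O(n\eps^{c-13}/\log n)$ stated in the lemma. The subordinate $O(1/\eps^{10})$ term for storing $\vec{P}_R$ contributes only $O(n\eps^{c-10}/\log n)$ in aggregate, and the $O(1/\eps^3)$ term from $S_R$ even less; both are absorbed into the target bound.

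The main obstacle is justifying $|S_R| = O(1/\eps^3)$ rigorously. A priori $d_{R^+}(u,\sigma_1)$ need not be bounded by $D$: the filled graph $R^+$ retains the parental hole, so a shortest $u$-to-$\sigma_1$ path in $G$ may cut through that hole and fail to survive in $R^+$, leaving $d_{R^+}(u,\sigma_1)$ potentially much larger than the diameter. The careful step is to argue that patterns with $\appr{\delta}{d_{R^+}(u,\sigma_1)} > D$ can be collapsed into a single sentinel without affecting either the distance encoding identities used in \Cref{lm:separating-cycle} or the pattern-composition step of \Cref{lm:pattern-comp} — intuitively because such a $u$ can never participate in a path of $G$ of length $\leq D$ that crosses the separating cycle, and so the minimum in \Cref{eq:dtilde-uv} is attained elsewhere. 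Formalizing this truncation is the delicate piece; once it is in place, the rest of the lemma reduces to the bookkeeping above.
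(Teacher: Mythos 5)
Your bookkeeping lines up exactly with the paper's: $k=O(1/\eps)$, $|\vec{P}_R|=O(1/\eps^9)$ from \Cref{eq:pattern-region}, $|S_R|=O(1/\eps^3)$, hence $|\vec{D}_R^+|=O(1/\eps^{12})$, each encoding stored as a $k$-dimensional vector for $O(1/\eps^{13})$ per region, and summing over the $O(n\eps^c/\log n)$ nodes from \Cref{lm:T-size} yields $O(n\eps^{c-13}/\log n)$. The dominant term, the subordinate $O(1/\eps^{10})$ and $O(1/\eps^3)$ terms, and the $O(n)$ per-vertex pointer cost are all handled as in the paper.

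However, you flag the bound $|S_R|=O(1/\eps^3)$ as an open gap in your own argument, and the sentinel-based truncation you propose is left unformalized — and would require re-examining \Cref{lm:separating-cycle,lm:pattern-comp} carefully to make sure the collapsed patterns never corrupt a query. The paper closes this cleanly with a structural observation you seem to have missed: because every separator in $\mathcal{T}$ is a fundamental cycle of the global shortest-path tree $T$ rooted at $r$ consisting of two root paths, the restriction of $T$ to any region (and hence to $R^+$, which only adds the interiors of non-parental holes whose boundaries are again such cycles) is a subtree of $T$ containing $r$. Thus every vertex of $R^+$ is within tree-distance $D$ of $r$ inside $R^+$ itself, so $\diam(R^+)\le 2D$, and in particular $d_{R^+}(u,\sigma_1)\le 2D$ for every $u$. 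This gives $|S_R|\le 2D/\delta=O(1/\eps^3)$ directly, with no truncation needed. Your worry that a shortest $u$-to-$\sigma_1$ path in $G$ might tunnel through the parental hole is correct but moot: a (possibly longer) path of length at most $2D$ via $r$ survives inside $R^+$.

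Secondary remark: confining the per-vertex portion of $T_{21}$ to leaf regions is consistent with how $T_{21}$ is actually queried in the proof of \Cref{lm:query-approximiate}, and the paper's stated bound $O(n\eps^{c-13}/\log n)=o(n)$ for $c=24$ only makes sense under that reading, so your accounting choice is the right one.
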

\begin{proof}
	Observe that there are  $\frac{2D}{\delta} = O(1/\eps^3)$ different values of $\appr{\delta}{d_{R_1^+}(u,\sigma_1)}$ considered in Step 2. This is because the subgraph associated every node of $\mathcal{T}$ has a diameter at most $2D$ due to the special structure of regions separated by shortest path separators: the shortest path trees of these subgraphs are subtrees of $T$, the shortest path tree of $G$ rooted at $r$.  Since the total number of patterns for each region is $O(1/\eps^9)$ by \Cref{eq:pattern-region}, the total number of approximate distance encodings is $O(1/\eps^{12})$. Since each distance encoding has size $O(1/\eps)$, the space of table $T_{12}$ and $T_{22}$ in Step 2 is $O(1/\eps^{13})$. It follows that the total space in Step 2 is $O(\frac{n\eps^{c}}{\log n} (1/\eps^{13})) = O(\frac{n\eps^{c-13}}{\log n})$.
\end{proof}

\begin{lemma}\label{lm:step3-space} The total space of Step 3 is $ O(\frac{n\eps^{c-24}}{\log n})$.
\end{lemma}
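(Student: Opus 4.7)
The plan is to bound the size of the lookup table $T_3$ at each non-leaf node of $\mathcal{T}$ and then multiply by the total number of nodes. First I would observe that $T_3$ at a non-leaf node $R$ with children $R_1, R_2$ stores one scalar (a precomputed $\norm{\vec{d}_1, \vec{d}_2}$ value) per pair in $\vec{D}^+_{R_1} \times \vec{D}^+_{R_2}$, so the space at $R$ is $O(|\vec{D}^+_{R_1}| \cdot |\vec{D}^+_{R_2}|)$ machine words.

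Next I would reuse the counting from the proof of \Cref{lm:step2-space}: for every region $R'$ arising in the decomposition, $|\vec{D}^+_{R'}| = O(1/\eps^{12})$, since there are $O(1/\eps^3)$ possible values of the scalar first coordinate $\appr{\delta}{d_{R'^+}(u, \sigma_1)}$ (the diameter of any such region is at most $2D$ and $\delta = \eps^3 D$) and $O(1/\eps^9)$ possible patterns in $\vec{P}_{R'}$ by \Cref{eq:pattern-region}. Hence the lookup table at each non-leaf node has at most $O(1/\eps^{12}) \cdot O(1/\eps^{12}) = O(1/\eps^{24})$ entries.

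Finally I would multiply by the number of nodes of $\mathcal{T}$, which is $O(\frac{n\eps^{c}}{\log n})$ by \Cref{lm:T-size} (with the choice $\lambda = \log(n)/\eps^c$ from Step 1). The total space over all non-leaf nodes is therefore
\begin{equation*}
O\!\left(\frac{n\eps^c}{\log n}\right) \cdot O(1/\eps^{24}) \;=\; O\!\left(\frac{n\eps^{c-24}}{\log n}\right),
\end{equation*}
which matches the claim. There is essentially no obstacle here; the only subtlety is checking that $T_3$ needs only a single word per entry (just the precomputed minimum in \Cref{eq:coding-norm}, not the vectors themselves, whose IDs are the indexing keys), so no extra $1/\eps$ factor sneaks in from storing the encodings.
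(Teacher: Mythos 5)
Your proof is correct and follows the same route as the paper: bound $|\vec{D}^+_{R'}| = O(1/\eps^{12})$ from the Step-2 analysis, square it to get $O(1/\eps^{24})$ entries in $T_3$ per node, and multiply by the $O(n\eps^c/\log n)$ nodes of $\mathcal{T}$. Your closing remark about $T_3$ storing only a single precomputed scalar per pair (with the encoding IDs serving as keys, so no hidden $1/\eps$ factor from vector storage) is a correct and useful clarification that the paper leaves implicit.
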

\begin{proof}
	The analysis of Step 2 in \Cref{lm:step2-space} shows that the number of approximate distance encodings of each region is $O(1/\eps^{12})$. Thus, the space of table $T_{3}$ is $O((1/\eps^{12})^2) = O(1/\eps^{24})$. It follows that the total space of Step 3 is $O(\frac{n\eps^{c}}{\log n} (1/\eps^{24})) = O(\frac{n\eps^{c-24}}{\log n})$.
\end{proof}

In Step 4, we have two different regimes, namely Regime 4(2a) and Regime 4(2b), for the choice of the exact distance oracle $\mathcal{E}_R$ in Step 4(2). Our analysis considers each regime separately.

\begin{lemma}\label{lm:step4-space} The total space of Step 4 is $n(\log(n)1/\eps)^{o(1)}$ for Regime 4(2a) and $n(\log^{2+o(1)}(\log n) + \log^{2+o(1)}(1/\eps))$ for Regime 4(2b).
\end{lemma}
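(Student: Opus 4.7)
The plan is to bound Step 4(1) and Step 4(2) separately, after first establishing a bound on the size of the contracted filled graph $R'$ of a leaf region $R$, and then summing over the $O(n\eps^c/\log n)$ leaf regions of $\mathcal{T}$ (by \Cref{lm:T-size} with $\lambda = \log(n)/\eps^c$).

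\textbf{Bounding $|V(R')|$.} I would first show that for any leaf region $R$, $|V(R')| = O(\log(n)/\eps^{c})$. Indeed, $R$ has at most $5$ holes (by construction in Step 1), each with $O(1/\eps)$ portals, so each distance-preserving minor $K^h$ contributes $O((1/\eps)^4)$ vertices by \Cref{lm:dist-minor-size}. Since $R$ is a leaf, $|M(R)| = O(\lambda) = O(\log(n)/\eps^c)$, and $\overline{R}$ contracts unmarked boundary vertices into portals, so $|V(\overline{R})| = O(\lambda + 1/\eps)$. Combining via \Cref{eq:Rprime}, $|V(R')| = O(\lambda + 1/\eps^4) = O(\log(n)/\eps^{c})$, using $c = 24 > 4$.

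\textbf{Step 4(1) is a lower-order term.} For each leaf region $R$ and each of its $O(\log n)$ ancestors $A$ in $\mathcal{T}$, the tables $T_{41a}$ and $T_{41b}$ each have one entry per pattern in $\vec{P}_R$ or $\widehat{\vec{P}}_{R,A}$, and $|\widehat{\vec{P}}_{R,A}| \le |\vec{P}_R| = O(1/\eps^9)$ by \Cref{eq:pattern-region}. Hence each leaf uses $O((\log n)/\eps^9)$ space in Step 4(1), and summing over $O(n\eps^c/\log n)$ leaves gives $O(n\eps^{c-9})$, which is $O(n)$ for $c=24$; this is absorbed into both claimed bounds.

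\textbf{Step 4(2), two regimes.} For Regime 4(2a), the per-leaf oracle $\mathcal{E}_R$ uses $|V(R')|^{1+o(1)} = (\log(n)/\eps^{c})^{1+o(1)}$ space. Multiplying by the number of leaves and simplifying (using that $c$ is a fixed constant so $(1/\eps)^{c \cdot o(1)} = (1/\eps)^{o(1)}$ and $\log^{1+o(1)}(n)/\log n = \log^{o(1)}(n)$), the total is $n(\log(n)\cdot 1/\eps)^{o(1)}$. For Regime 4(2b), the per-leaf space is $|V(R')|\log^{2+o(1)}(|V(R')|)$; since $\log(\log(n)/\eps^c) = \log\log n + c\log(1/\eps)$, expanding the $\log^{2+o(1)}$ yields a factor $\log^{2+o(1)}(\log n) + \log^{2+o(1)}(1/\eps)$. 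The $|V(R')|$ factor itself is $O(\log(n)/\eps^c)$, and when multiplied by $O(n\eps^c/\log n)$ the $\eps^c$ and $\log n$ cancel, leaving $n(\log^{2+o(1)}(\log n) + \log^{2+o(1)}(1/\eps))$.

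\textbf{Main obstacle.} The only delicate point is the simplification $(\log(n)/\eps^c)^{1+o(1)} \cdot n\eps^c/\log n = n \log^{o(1)}(n) (1/\eps)^{o(1)}$ in Regime 4(2a): one must be careful that the constant $c$ in the exponent is absorbed into $o(1)$ (which holds because $c$ is fixed independently of $\eps,n$), and that the $\eps^c$ and $\log n$ factors from the count of leaves exactly cancel the polynomial part of $|V(R')|^{1+o(1)}$ up to a subpolynomial slack. Nothing else is more than a routine calculation, so I would present these cancellations explicitly and conclude.
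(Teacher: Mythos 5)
Your proof is correct and follows essentially the same approach as the paper: bound $|V(R')| = O(\log(n)/\eps^c)$, show Step 4(1) is $O(n)$, and multiply the per-leaf exact-oracle space by the leaf count $O(n\eps^c/\log n)$ in each regime. The only cosmetic differences are that you derive $|\widehat{\vec{P}}_{R,A}|\leq|\vec{P}_R|$ directly (arguably cleaner than the paper's appeal to \Cref{eq:pattern-region}) and that the paper additionally records the harmless WLOG assumption $1/\eps = n^{O(1)}$ so that $\poly(1/\eps)$-sized values fit in $O(1)$ machine words.
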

\begin{proof}
We consider each substep separately. We assume that $1/\eps = n^{O(1)}$; otherwise, applying the exact oracle in \Cref{thm:LongPettie} gives the desired bounds in \Cref{thm:main}. Thus, a value of size $\poly(1/\eps)$ costs $O(1)$ words of space to store.

\paragraph{Step 4(1)}  We observe by \Cref{eq:pattern-region} that $|\vec{P}_{R}| = O(1/\eps^9)$ and  $|\widehat{\vec{P}}_{R,A}| = O(1/\eps^9)$. Thus the ID of each pattern in $\vec{P}_{R}$ and $\widehat{\vec{P}}_{R,A}$ only costs $O(1)$ words. Storing $\widehat{\vec{P}}_{R,A}$ requires $O(1/\eps^{10})$ words of space since each pattern has size $O(1/\eps)$. In Step 4(1a), the  total space of $T_{41a}$ is $O(|\vec{P}_{R}|) = O(1/\eps^9)$. In Step 4(1b), the total space of $T_{41b}$ is also $O(|\widehat{\vec{P}}_{R,A}|) = O(1/\eps^9)$. Since each leaf region $R$ has $O(\log n)$ ancestors by \Cref{lm:T-size}, the total space of Step 4(1) is $O(\frac{n\eps^{c}}{\log n} (1/\eps^9) \log(n)) = O(n\eps^{c-9}) = O(n)$.

\paragraph{Step 4(2)}  Observe that $|M(R)| = \Theta(\lambda) = \Theta(\log(n)/\eps^{c})$. Since $R$ has at most $5$ holes and each hole we add $O(1/\eps^{4})$ in the construction of $R'$, it follows that:
\begin{equation}\label{eq:Rprime-size}
	|V(R')| = O(\log(n)/\eps^{c} + 1/\eps^4) = O(\log(n)/\eps^{c})~.
\end{equation}

There are two different regimes for the choice of $\mathcal{E}_R$:
\begin{itemize}
	\item[(a)] the space of $\mathcal{E}_R$ is $\log^{1+o(1)}(n)/\eps^{c + o(1)}$. Then the total space of Step 4(2) is: 
	\begin{equation}\label{eq:space-R1}
		O(\frac{n \eps^{c}}{\log(n)})\cdot \frac{ \log^{1+o(1)}(n)}{\eps^{c + o(1)}} = n(\log(n)1/\eps)^{o(1)}.
	\end{equation}
	\item[(b)]   the space of $\mathcal{E}_R$ is $\frac{\log(n)}{\eps^c}(\log^{2+o(1)}(\log(n)) + \log^{2+o(1)}(1/\eps))$. Then the total space of Step 4(2) is: 
	\begin{equation}\label{eq:space-R2}
		\begin{split}
			&O(\frac{n \eps^{c}}{\log(n)})\cdot \frac{\log(n)}{\eps^c}\left(\log^{2+o(1)}(\log(n)) + \log^{2+o(1)}(1/\eps)\right)\\
			&= n(\log^{2+o(1)}(\log n) + \log^{2+o(1)}(1/\eps))				
		\end{split}
	\end{equation}
\end{itemize}
	
\end{proof}

\begin{proof}[Proving the space bound in \Cref{thm:additiveOracleEasy}] Since $c = 24$,  by \Crefrange{lm:step1-space}{lm:step3-space}, the total space of the oracle in Steps 1-3 is :
	\begin{equation}\label{eq:space-1-3}
		O(\frac{n\eps^{c-1}}{\log n}  + \frac{n\eps^{c-13}}{\log n} + \frac{n\eps^{c-24}}{\log n})  = O(n)
	\end{equation}
	which is dominated by the total space of Step 4. Thus, the space bound of the oracle as claimed in \Cref{thm:additiveOracleEasy} follows from \Cref{lm:step4-space}.
\end{proof}

\subsubsection{Preprocessing}\label{subsubsec:preprocessing}

By \Cref{lm:T-size}, $\mathcal{T}$ can be constructed in $O(n\log n)$ time.  The most time-consuming step is to compute $R^+$ for each region $R$ associated with a node of $\mathcal{T}$ defined in Step 2. Recall that $R^+$ is obtained from $R$ by filling in the non-parental holes.  Note that $\mathcal{T}$ only has $O(n/\log(n)\eps^{c}) = O(n/\log(n))$ nodes. However, $R^+$ could have $\Omega(n)$ vertices, resulting in the total size of $\Omega(n^2/\log(n))$. A standard technique~\cite{Thorup04,KST13,WY16,CS19} to reduce the size of $R^+$ is to \emph{approximately} fill the holes of $R$ such that $|V(R^+)\setminus V(R)| = \poly(\log(n),1/\eps)$ and for every $u,v\in M(R)$, $d_R(u,v) \leq d_{R^+}(u,v)\leq d_R(u,v) + \eps D$.  We will show that the total size of all regions $\{R^+\}_{R\in \mathcal{T}}$ is $O(n \poly(\log(n)))$. (Indeed, we can reduce the total size of all regions to $O(n\log n\poly(1/\eps))$ using a more complicated adaptive filling technique of Chan and Skrepetos~\cite{CS19}.) Next, we describe the filling procedure in detail using \emph{dense portals}, following Weimann and Yuster~\cite{WY16}. 

\begin{figure}[!htb]
	\centering
	\includegraphics[width=.7\textwidth]{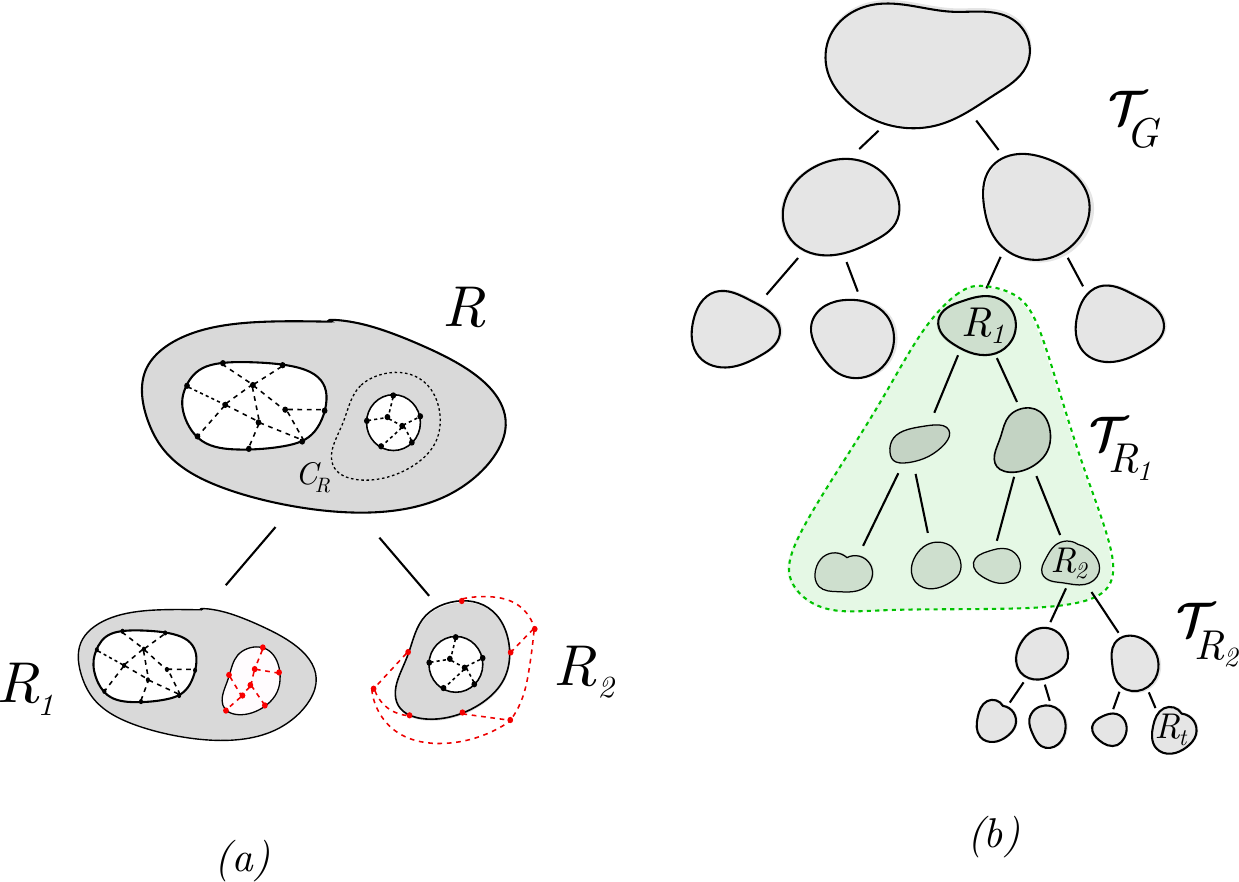}
	\caption{(a) A region $R$ with two holes; two child regions $R_1$ and $R_2$; distance preserving minors $H_1$ (associated with $R_1$) and $H_2$ (associated with $R_2$) compose of red vertices and edges. (b) An illustration for the proof of \Cref{thm:additiveOracle}; the highlighted subtree is the tree resulting from the recursive decomposition of a leaf region $R_1$ of the tree $\mathcal{T}_G$; the highlighted subtree corresponds to a node  at the second level in the hierarchy of oracles $\mathcal{H}$.}
	\label{fig:hierarchy}
\end{figure}

The filling process is top-down. The root of $\mathcal{T}$ is associated with $G$. Let $R$ be a region associated with a node of $\mathcal{T}$, and $R_1,R_2$ be its two child regions. Let $G_R$ be the approximately filled graph of $R$ where every hole was approximately filled. (Think of $G_R$ as an approximation of $G$ where the distances between marked vertices of $R$ are approximately preserved.) $G_R$ is given by induction; at the root, $G_R = G$.

Let $C_R$ be the shortest path separator that separates $R$ into $R_1,R_2$. We place $O(\eps/\log(n))$ portals, called \emph{dense portals}, on $C_R$ such that the distance between any two nearby dense portals is $c_1\eps D/\log(n)$ for some constant $c_1 \geq 1$. Next, we construct two distance preserving minors $H_1,H_2$ for the dense portals of $C_R$ as in Step 4(2) of the oracle construction, one for the subgraph of $G_R$ inside $C_R$ (and include $C_R$), and the other for  the subgraph of $G_R$ outside $C_R$ (and include $C_R$). See \Cref{fig:hierarchy}(a) for an illustration. Assume that $R_1$ is outside $C_R$ and $R_2$ is inside $C_R$. Then  $H_1$ and the subgraphs of $G_R$ outside $C_R$ form the approximately filled region of $R_1$, and similarly,  $H_2$ and the subgraphs of $G_R$ outside $C_R$ form the approximately filled region of $R_2$. Observe that $|V(H_1)| = O(\log^4(n)/\eps^4)$ and $|V(H_2)| = O(\log^4(n)/\eps^4)$ by \Cref{lm:dist-minor-size}. This means the total size of $G_{R_1}$ is $M(R_1) + O(\log^4(n)/\eps^4)$ (as $R_1$ has only 5 holes). The same holds for $R_2$. We called the vertices in $(V(G_{R_1})\cup V(G_{R_2}))\setminus (V(R_1)\cup V(R_2))$ \emph{Steiner vertices}. 

Since  $|V(\mathcal{T})| = O(n\eps^c/\log(n))$ and $c = 24$, the total number of Steiner vertices is $O(n\eps^c/\log(n))\cdot O(\log^4(n)/\eps^4) = O(n\log^3(n))$. Thus the total size of all approximately filled regions is $O(n\log^{3}n)$. For each graph $G_R$, computing $H_1$ and $H_2$ takes $O(\log^2(n)/\eps^2 |V(G_R)|)$ time, as it boils down to computing all-pairs shortest paths between dense portals of a hole that can be done $O(|V(G_R)|)$ time per shortest path~\cite{HKRS97}. (A more efficient way to compute  $H_1$ and $H_2$ in $O(n\log(n) + n\log^4(n)\eps^{c-4})$ is to use the multiple-source shortest path algorithm of Klein~\cite{Klein05B}; see Theorem 6 in~\cite{CS19}.)  The total time to compute all approximately filled regions is:
\begin{equation*}
	O((\sum_{R}|V(G_R)|)\log^2(n)/\eps^2 ) = O(n\log^{3}n \eps^{c} \log^2(n)/\eps^2) = O(n\log^{5}(n))~,
\end{equation*}
since $c = 24$.  

For each region $R$, and its approximately filled graph $G_R$, we can extract the approximate graph $R^+$ by removing all the Steiner vertices in the parental hole of $R$.

Next we compute the patterns of $R^+$ w.r.t. the sequence of portals $\sigma$ on the parental boundary of $R$. This can be done in $O(V(R^+)1/\eps)$ time by computing a shortest tree from each vertex in $\sigma$; there are only $O(1/\eps)$ vertices in the sequence $\sigma$. Given the patterns, all tables $T_{2}, T_{41a}, T_{41b}$ can be computed in $O(\poly(1/\eps))$ time per node. Thus, the total running time is $O(n\poly(1/\eps))$ since the number of nodes of $\mathcal{T}$ is $O(n\eps^c/\log(n)) = O(n)$.

Finally, by \Cref{thm:LongPettie}, the exact distance oracle for each leaf region $\mathcal{E}_R$ can be computed in time $|V(R')|^{3/2+o(1)} = \poly(\log n, 1/\eps)$ since $|V(R')| = O(\log(n)/\eps^{c})$. Thus, the total running time to compute all the exact distance oracles is $O(n\poly(\log n, 1/\eps))$

In summary, the total construction time of the oracle in \Cref{thm:additiveOracleEasy} is $O(n\poly(\log n,1/\eps))$.

We now show that the additive distortion due to approximate filling is $\eps D$.  Let $u,v$ be any two marked vertices in $R_1$ (a child region of $R$). Since the distance between any two nearby dense portals is at most  $c_1\eps D/\log(n)$, by the triangle inequality, $ d_{G_{R}}(u,v)\leq d_{G_{R_1}}(u,v)\leq d_{G_{R}}(u,v) + 2c_1\eps D/\log(n)$.  Since the depth of $\mathcal{T}$ is $O(\log n)$, by induction, it holds that  $d_{G_{R_1}}(u,v)\leq d_{G}(u,v) + O(\log n)2c_1\eps D/\log(n) = d_G(u,v) + \eps D$ for an appropriate choice of constant $c_1$. 
 
 Since the approximate filling incurs an additive distortion $\eps D$, the distance returned by the oracle is at most: 
\begin{equation*}
	(d_G(u,v) + \eps D) + \eps D  =  d_G(u,v) + 2\eps D
\end{equation*}
By scaling, we can get back additive distortion $+\eps D$.

\subsection{Reducing Space and Query Time: Proof of \Cref{thm:additiveOracle}}\label{subsec:additiveStrong}

We employ the bootstrapping idea of  Kawarabayashi, Sommer, and Thorup to replace $\log^{o(1)}(n)$ factor in the space and query time of the oracle in \Cref{thm:additiveOracleEasy} with a $\log^*(n)$ factor. With more careful analysis and using an LCA data structure to navigate the hierarchy  in the construction below, we save a $\log^*(n)$ factor in the query time and make the $\log^*(n)$ factor in the space \emph{additive} instead of multiplicative.  We restate \Cref{thm:additiveOracle} below for convenience.

\AdditiveOracle*
\begin{proof}  Starting from $G$, we apply all steps from 1-4 of the construction in \Cref{subsec:easyOracle} to $G$, creating a recursive decomposition tree $\mathcal{T}_G$, except that in Step 4(2), we do not construct an exact distance oracle $\mathcal{E}_{R_1}$ for each leaf region $R_1$. Instead, we recurse on $R_1$, creating a second-level recursive decomposition tree $\mathcal{T}_{R_1}$. That is, we apply all steps from  1-4 to $R_1$, except Step 4(2). See \Cref{fig:hierarchy}(b) for an illustration of the construction. Note that $|M(R_1)| = \Theta(\log(n)/\eps^c)$ for $c=24$. Let $\mathcal{T}_{R_1}$ be the recursion tree induced by the recursive decomposition of $R_1$; leaves of $\mathcal{T}_{R_1}$ correspond to regions, say $R_2$, of $R_1$ that have  $|M(R_2)| = \Theta(\log\log(n)/\eps^{c})$.  We then continue to recurse on $R_2$. Generally, at step $j$ of the recursion, the number of marked vertices in the region associated with each leaf $R_j$ of the recursive decomposition tree, denoted by $\mathcal{T}_{R_{j-1}}$, is $O(\frac{\log^{(j)}n}{\eps^c})$ where $\log^{(j)}(n) = \log(\log(\cdots \log(n)\cdots))$; the logarithm is applied $j$ times.  We stop the recursion when a leaf region, say $R_t$, at some step $t$ of the recursion, has $|M(R_t)| = \Theta(1/\eps^{c})$.	We apply \Cref{thm:LongPettie} to construct an exact distance oracle $\mathcal{E}_{R_t}$ for the contracted filled graph $R'_t$ of $R_t$ with:
	\begin{itemize}
		\item \textbf{Regime 4(2a).~}  $|V(R'_t)|^{1+o(1)} = |V(R'_t)|(1/\eps)^{o(1)}$ space  and  $O(\log^{2+o(1)}(|V(R'_t)|)) = \log^{2+o(1)}(1/\eps)$ query time or
		\item \textbf{Regime 4(2b).~} $|V(R'_t)|\log^{2+o(1)}(|V(R'_t)|) = |V(R'_t)|\log^{2+o(1)}(1/\eps)$ space and $|V(R'_t)|^{o(1)} = (1/\eps)^{o(1)}$ query time. 
	\end{itemize}
	Note that $|V(R'_t)| = \poly(1/\eps)$ since $|M(R_t)| = \Theta(1/\eps^c)$.
	
	This construction gives us a \emph{hierarchy} $\mathcal{H}$ of oracles: each recursion step corresponds to a level in $\mathcal{H}$. Each internal node $\tau$ of the hierarchy at level $j$ corresponds to the oracle, say $\mathcal{O}_{R_{j-1}}$, for a leaf region $R_{j-1}$ at level $j-1$ (when $j = 1$, we denote $R_0 = G$). The oracle  $\mathcal{O}_{R_{j-1}}$ allows us to query distances between marked vertices that are not in the same leaf of the recursive decomposition tree $\mathcal{T}_{R_{j-1}}$ in $O(1)$ time, following the analysis in \Cref{subsubsec:time-stretch}. Each leaf of $\mathcal{H}$ corresponds to an exact distance oracle for a region $R_t$.  
	
	Clearly, the recursion depth, which is also the depth of the hierarchy $\mathcal{H}$, is $t = O(\log^{*}n)$ since each time we recurse, the size of the region is reduced from $O(\frac{\log^{(j-1)}(n)}{\eps^c})$ to $O(\frac{\log^{(j)}n}{\eps^c})$ for some $j \in \{2,3,\ldots, t\}$; in the first level ($j = 1$), the size is reduced from $n$ to $O(\log(n)/\eps^{c})$. 
	
	By the analysis in \Cref{subsubsec:space-analysis}, in particular \Cref{eq:space-1-3}, the total space of each \emph{non-leaf level} of $\mathcal{H}$ is $O(n)$. Thus, the total space of $\mathcal{H}$ associated with non-leaf nodes  is $O(n\log^*n)$.  On the other hand, by the same analysis in \Cref{eq:space-R1,eq:space-R2}, the total space of the oracles at leaves of $\mathcal{H}$ is:
		\begin{itemize}
		\item  $\sum_{R_t \text{ is a leaf of }\mathcal{H}}|V(R'_t)|(1/\eps)^{o(1)} = n\eps^{o(1)}$ space in Regime 4(2a)  or
		\item $\sum_{R_t \text{ is a leaf of }\mathcal{H}}  |V(R'_t)|\log^{2+o(1)}(1/\eps) = n \log^{2+o(1)}(1/\eps) $ in Regime 4(2b).
	\end{itemize}
	Thus, the total space of the oracle is $O(n(\eps^{o(1)} + \log^*n))$ in Regime 4(2a) and is  $O(n(\log^{2+o(1)}(1/\eps) + \log^*n))$ in Regime 4(2b), as claimed.
	
	To answer a query quickly, we augment $\mathcal{H}$ with the following: for each vertex $v\in G$, we store a pointer to a leaf node of $\mathcal{H}$ whose corresponding region $R_t$ contains $v$ as a marked vertex. Furthermore, we construct an LCA data structure for $\mathcal{H}$. Note that $\mathcal{H}$ has  $O(n)$ nodes as it is a binary tree with at most $n$ leaves, the total space augmented to $\mathcal{H}$ is $O(n)$.
	
	Now given two vertices $u$ and $v$, let $R_t(u)$ and $R_t(v)$ be two leaf regions of $\mathcal{H}$ containing $u$ and $v$.  If $R_t(u) \not= R_t(v)$, we query the lowest common ancestor, denoted by $R_{uv}$, of $R_t(u)$ and $R_t(v)$ in $O(1)$ time. Then, the approximate distance query can be done in $O(1)$ by querying $\mathcal{O}_{R_{uv}}$.  If $R_t(u) = R_t(v) = R$, we query the exact distance oracle $\mathcal{E}_{R}$ to obtain an approximate distance between $u$ and $v$ in time $\log^{2+o(1)}(1/\eps)$ in Regime 4(2a) and in time $(1/\eps)^{o(1)}$ in Regime 4(2b). Following the stretch analysis in \Cref{subsubsec:time-stretch}, the additive stretch is $+O(\eps)D$. 
	
	For the construction time, by \Cref{thm:additiveOracleEasy}, each level of $\mathcal{H}$ can be constructed in time $n\poly(\log n,\eps)$ time. Since the depth of $\mathcal{H}$ is $O(\log^*n)$, the running time to construct $\mathcal{H}$ is $n\poly(\log n,\eps) \cdot \log^* n = n\poly(\log n,\eps)$, as desired.
\end{proof}

\section{Distance Oracles with Multiplicative Stretch: Proof of \Cref{thm:main}}\label{sec:mulitiplicative}

The construction relies on \emph{sparse covers} as defined below. For a graph $G$, we denote by $\diam(G)$ the diameter of $G$. For a vertex $v\in V(G)$ and a parameter $r > 0$, we denote by $B_G(v,r) = \{u\in V(G): d_G(u,v)\leq r\}$ the ball of radius $r$ centered at $v$.

\begin{definition}[Sparse Cover] \label{def:sparse-cover}   A $(\beta,s,\Delta)$-sparse cover of an edge-weighted graph $G = (V,E,w)$ is a collection of  \emph{induced subgraphs} $\mathcal{C} = \{C_1,\ldots, C_k\}$, called \emph{clusters} such that:
	\begin{itemize}[noitemsep]
		\item[(1)] $\diam(C_i) \leq \Delta$ for every $i\in [k]$.
		\item[(2)] For every $v\in V$, there exists $i\in [k]$ such that $B_G(v,\Delta/\beta)\subseteq V(C_i)$.
		\item[(3)] Every vertex $v$ is contained in at most $s$ clusters in $\mathcal{C}$.
	\end{itemize}
 \end{definition}
If for any given $\Delta > 0$, $G$ has a  $(\beta, s,\Delta)$-sparse cover, we say that $G$ admits a  \emph{$(\beta,s)$-sparse covering scheme}.

The notion of sparse covers was introduced by Awerbuch and Peleg~\cite{AP90}.  Busch, LaFortune, and Tirthapura~\cite{BLT07}  showed that planar graphs admit an  $(O(1),O(1))$-sparse covering scheme.  Abraham, Gavoille, Malkhi, and Wieder~\cite{AGMW10} extended the result of Busch, LaFortune, and Tirthapura~\cite{BLT07} to minor-free graphs. Le and Wulff-Nilsen~\cite{LW21} showed that a sparse cover of planar graphs can be constructed in linear time.

\begin{lemma}[Lemma 1 in the full version of~\cite{LW21}]\label{lm:sparse-cover} Given a planar graph $G= (V,E,w)$ with $n$ vertices and any parameter $\Delta > 0$, then one can construct an  $(O(1),O(1),\Delta)$-sparse cover of $G$ in $O(n)$ time.
\end{lemma}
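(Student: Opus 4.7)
My plan is to adapt the Klein--Plotkin--Rao style recursive strip decomposition to a sparse \emph{cover} via shifted partitions, implementing each step with a linear-time shortest-path primitive for planar graphs.

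First, compute a single-source shortest-path tree from an arbitrary root $r$ of $G$ using a linear-time planar SSSP algorithm (e.g.\ Henzinger--Klein--Rao--Subramanian); this gives the radial depth $d_G(r,v)$ for every $v$ in $O(n)$ time. Fix a sufficiently large absolute constant $\beta$ and set $\delta = \Delta/(2\beta)$. For each offset $i \in \{0,1,\ldots,2\beta-1\}$, form a \emph{shifted partition} $\mathcal{P}_i$ by cutting $G$ at the level sets of radial depths $\{i\delta + j\Delta : j \geq 0\}$. Each piece of $\mathcal{P}_i$, which I will call a \emph{strip}, consists of vertices whose depth lies in a window of length $\Delta$. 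A standard shifting argument shows that for every vertex $v$, there exists at least one offset $i$ for which the ball $B_G(v,\delta)$ avoids all cut levels, and hence lies inside a single strip of $\mathcal{P}_i$.

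A strip is bounded in its radial direction but can extend arbitrarily far orthogonally, so in general its diameter is not $O(\Delta)$. To fix this, recurse within each strip: pick an arbitrary root inside the strip, compute its shortest-path tree restricted to the strip, and apply the same shifted partitioning. The Klein--Plotkin--Rao argument for $K_5$-minor-free (hence planar) graphs guarantees that $O(1)$ levels of this recursion suffice to make every resulting leaf piece have diameter $O(\Delta)$. Let $\mathcal{C}$ be the collection of all leaf pieces across all combinations of shifts. Diameter $O(\Delta)$ gives condition (1) of \Cref{def:sparse-cover}; for any $v$, the ball $B_G(v,\Delta/\beta)$ is trapped inside some strip at every recursion level provided the shifts are chosen suitably, and therefore inside some leaf piece, giving condition (2); and since there are $(2\beta)^{O(1)}=O(1)$ shift combinations and each vertex lies in exactly one leaf piece per combination, condition (3) follows.

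The main obstacle, and the entire point of \Cref{lm:sparse-cover} over the BLT construction, is achieving truly $O(n)$ time rather than $O(n\,\mathrm{polylog}\,n)$. Since the recursion has $O(1)$ depth, it suffices that each level runs in linear time in the size of the current subgraph. At each level, the strips partition the vertex set of the current piece, so running a linear-time planar SSSP on each strip in parallel costs $O(n)$ total; forming the shifted cuts, assigning vertices to pieces, and emitting output clusters can each be done by a single traversal of the shortest-path tree per level, amortizing to $O(n)$ per level. I expect the subtle part to be implementing the recursion on \emph{edge-weighted} planar subgraphs without re-triangulating or re-embedding from scratch, since one must ensure that restricting to a strip preserves planarity and the ability to invoke the linear-time SSSP subroutine; this is exactly where care is required in order not to pay an extra $\log$ factor, and is the reason the construction is stated as a lemma extracted from the full version of~\cite{LW21}.
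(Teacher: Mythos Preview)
This lemma is not proved in the paper; it is quoted verbatim as Lemma~1 from the full version of~\cite{LW21} and used as a black box. There is therefore no proof in the present paper to compare your proposal against.

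For what it is worth, your sketch is in the right spirit: the underlying BLT construction is indeed a KPR-style recursive strip decomposition, and the contribution of~\cite{LW21} is the linear-time implementation via the Henzinger--Klein--Rao--Subramanian SSSP primitive, exactly as you say. One technical point to watch if you ever flesh this out: at recursion levels beyond the first, the shifted cuts are taken with respect to distances \emph{inside the current strip}, not in $G$, so the claim that the $G$-ball $B_G(v,\Delta/\beta)$ survives into a single sub-strip requires relating strip distances back to $G$-distances; this is where the minor-freeness hypothesis does real work in the KPR argument and is not automatic from shifting alone. But since the paper imports the lemma wholesale, none of this is expected here.
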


\begin{lemma}\label{lm:oracle-interval} Let $\eps \in (0,1), r > 0$ be positive parameters and $G = (V,E,w)$ be an $n$-vertex planar graph. We can construct in $O(n\poly(\log(n),1/\eps))$ time an oracle $\mathcal{O}_G$ such that: 
	\begin{equation}
		\begin{aligned}
			&d_G(u,v)\leq \mathcal{O}_G(u,v)  \quad && \mbox{for all }u,v\in V\\	
			&\mathcal{O}_G(u,v)\leq (1+\eps)d_G(u,v)  &&\mbox{for $u,v\in V$ s.t. }d_G(u,v)\in [r, 2r]			
		\end{aligned}
	\end{equation} 
Here $\mathcal{O}_G(u,v)$ is the distance returned by $\mathcal{O}_G$. Furthermore, 	$\mathcal{O}_G$  has
	\begin{itemize}
		\item[(1)]  $O(n((1/\eps)^{o(1)} + \log^*(n)))$ space  and  $\log^{2+o(1)}(1/\eps)$ query time or 
		\item[(2)] $O(n(\log^{2+o(1)}(1/\eps) + \log^*(n)))$ space  and  $(1/\eps)^{o(1)}$ query time.
	\end{itemize}
\end{lemma}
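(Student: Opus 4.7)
The plan is to combine the additive oracle of \Cref{thm:additiveOracle} with the sparse cover from \Cref{lm:sparse-cover}, using the sparse cover to localize each ``interesting'' pair into a small cluster whose diameter is $O(r)$, then converting the additive error into a multiplicative error via the lower bound $d_G(u,v) \ge r$.

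Concretely, first I would invoke \Cref{lm:sparse-cover} with diameter parameter $\Delta = 2\beta r$, where $\beta = O(1)$ is the stretch constant of the sparse cover for planar graphs, obtaining clusters $\mathcal{C} = \{C_1,\dots,C_k\}$ with $\operatorname{diam}(C_i)\le \Delta$, with every $v$ lying in at most $s=O(1)$ clusters, and with the property that for every $v$ some cluster $C_{i(v)}$ contains $B_G(v,\Delta/\beta)=B_G(v,2r)$. For each $C_i$ (which is an induced subgraph of $G$, hence planar), I apply \Cref{thm:additiveOracle} with error parameter $\eps' = \eps/(2\beta)$ and diameter $D_i=\Delta$, building an oracle $\mathcal{O}_i$ of additive stretch $+\eps'\Delta = +\eps r$. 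Using either regime of \Cref{thm:additiveOracle}, $\mathcal{O}_i$ has space $O(|V(C_i)|((1/\eps)^{o(1)}+\log^*|V(C_i)|))$ (resp.\ $O(|V(C_i)|(\log^{2+o(1)}(1/\eps)+\log^*|V(C_i)|))$), with query times $\log^{2+o(1)}(1/\eps)$ or $(1/\eps)^{o(1)}$.

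To support queries, for each vertex $v$ I store a constant-size table listing the pairs $(\text{ID}(C_i),\text{localID}(v,C_i))$ for each cluster $C_i\ni v$; since $v$ lies in at most $s=O(1)$ clusters, this costs $O(1)$ words per vertex. Given a query $(u,v)$, I iterate over the $O(1)$ clusters containing $u$, check in $O(1)$ time via $v$'s table whether $v$ also belongs to the cluster, and if so call the cluster's oracle. The returned value is the minimum obtained across these at most $s$ calls (or $+\infty$ if no common cluster exists). Summing $\sum_i|V(C_i)|\le sn = O(n)$, the total space becomes $O(n((1/\eps)^{o(1)}+\log^*n))$ or $O(n(\log^{2+o(1)}(1/\eps)+\log^*n))$, and the query time is dominated by a single oracle call.

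For correctness, since each $\mathcal{O}_i$ is an oracle on the subgraph $C_i$ with additive slack, the returned value always satisfies $\mathcal{O}_G(u,v)\ge d_{C_i}(u,v)\ge d_G(u,v)$, giving the lower bound unconditionally. Now suppose $d_G(u,v)\in[r,2r]$. Let $C$ be the cluster guaranteed by property (2) of the sparse cover that contains $B_G(v,2r)$; then $u\in V(C)$, and crucially every vertex on a shortest $u$–$v$ path in $G$ lies within distance $2r$ of $v$ (because the path has length at most $2r$), hence inside $V(C)$. Since $C$ is the \emph{induced} subgraph on $V(C)$, the whole shortest path lives in $C$, so $d_C(u,v)=d_G(u,v)$. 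The oracle $\mathcal{O}_C$ therefore returns a value at most $d_C(u,v)+\eps'\Delta = d_G(u,v)+\eps r \le (1+\eps)\,d_G(u,v)$, and the minimum over all candidate clusters is no larger. The main care point — and the one I would flag as the only nontrivial step — is verifying that a shortest path stays inside the induced cluster: this is what makes the sparse cover's $\Delta/\beta$ guarantee compose with additive-stretch oracles at zero loss, and is the reason for choosing $\Delta = 2\beta r$ rather than merely $\Omega(r)$. Construction time is dominated by $\sum_i|V(C_i)|\cdot\poly(\log|V(C_i)|,1/\eps) = O(n\,\poly(\log n,1/\eps))$, as claimed.
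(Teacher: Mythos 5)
Your proposal matches the paper's proof: the same $(O(1),O(1),\beta\cdot 2r)$-sparse cover from \Cref{lm:sparse-cover}, one additive oracle per cluster via \Cref{thm:additiveOracle}, $O(1)$ pointers per vertex, and the query takes a minimum over the $O(1)$ common clusters. Your explicit observation that every vertex on a shortest $u$--$v$ path lies in $B_G(v,2r)$ (hence $d_C(u,v)=d_G(u,v)$ in the induced cluster) makes rigorous what the paper states more tersely, and your lower-bound direction $\mathcal{O}_C(u,v)\ge d_C(u,v)\ge d_G(u,v)$ is the correct one.
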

\begin{proof}
	We construct a $(\beta,s,\beta\cdot (2r))$-sparse cover  $\mathcal{C}$ for $G$ with $\beta = O(1)$ and $s = O(1)$ using \Cref{lm:sparse-cover}.  Since $s = O(1)$, by property (3) of \Cref{def:sparse-cover}, we have:
	\begin{equation}\label{eq:cover-size}
	\sum_{C\in \mathcal{C}} |V(C)| \leq s |V| = O(n)
	\end{equation}
	
	For each cluster $C\in \mathcal{C}$, we apply \Cref{thm:additiveOracle} to construct an additive distance oracle $\mathcal{O}_C$ with additive stretch $+ \eps \diam(C) $. The oracle for $G$ consists of the oracles for all clusters in $\mathcal{C}$.	 In addition, for each vertex $u$, we will store $O(1)$ pointers to each cluster $C\in \mathcal{C}$ that contains $u$. If we let $\mathcal{C}_{uv}\subseteq \mathcal{C}$ be the set of $O(1)$ clusters containing both $u$ and $v$, then the approximate distance between $u$ and $v$ is:
	\begin{equation}\label{eq:dist-cover}
	\mathcal{O}_G(u,v) = \min_{C\in \mathcal{C}_{uv}}\mathcal{O}_C(u,v)
	\end{equation}
	where $\mathcal{O}_C(u,v)$ is the distance returned by the oracle $\mathcal{O}_C$. (If $\mathcal{C}_{u,v} = \emptyset$, we simply set $\mathcal{O}_G(u,v) = +\infty$.) 
	
	Clearly, $\mathcal{O}_G(u,v)\geq d_G(u,v)$ for any $u,v\in V$ since $d_C(u,v)\leq d_G(u,v)$ for any subgraph $C$ of $G$. 
	
	Next,  we consider the case where $d_G(u,v)\in [r,2r]$.  By property (2) in \Cref{def:sparse-cover}, there is a cluster $X \in \mathcal{C}$ such that $B_G(v, (\beta 2r)/\beta) = B_G(v,2r) \subseteq V(X)$. Since $X$ is an induced subgraph of $G$, it holds that $d_X(u,v) = d_G(u,v)$. Furthermore, since the additive stretch of $\mathcal{O}_X$ is :
	\begin{equation*}
	\eps \diam(X) \leq \eps (2\beta r) = O(\eps)r   = O(\eps)d_G(u,v)~,
	\end{equation*}
	it follows that:
	\begin{equation*}
	\mathcal{O}_X(u,v) \leq  d_X(u,v) + O(\eps)d_G(u,v) = (1+O(\eps))d_G(u,v)~.
	\end{equation*}
	This and \Cref{eq:dist-cover} imply that $\mathcal{O}(u,v)  \leq  (1 + O(\eps))d_G(u,v)$. By scaling $\eps$, we get that $\mathcal{O}(u,v)  \leq  (1 + \eps)d_G(u,v)$, as claimed.
	
	We now analyze the space and query time of $\mathcal{O}$. 
	
	If  we use Regime (1) in \Cref{thm:additiveOracle} to construct  $\mathcal{O}_C$, then the query time is $\log^{2+o(1)}(1/\eps)$ and the total space is:
	\begin{equation*}
	\sum_{C\in \mathcal{C}} O(|V(C)|)((1/\eps)^{o(1)} + \log^*(|V(C)|)) = O(n((1/\eps)^{o(1)} + \log^*(n)))
	\end{equation*}
	by \Cref{eq:cover-size}.

	If we use Regime (2) in \Cref{thm:additiveOracle} to construct $\mathcal{O}_C$, then the query time is $(1/\eps)^{o(1)}$ and the total space is:
	\begin{equation*}
	\sum_{C\in \mathcal{C}} O(|V(C)|)(\log^{2+o(1)}(1/\eps)+ \log^*(|V(C)|)) = O(n(\log^{2+o(1)}(1/\eps) + \log^*(n)))
	\end{equation*}
	by \Cref{eq:cover-size}.
	
	For the construction time, by \Cref{lm:sparse-cover}, $\mathcal{C}$ can be constructed in $O(n)$ time. By \Cref{thm:additiveOracle}, $\mathcal{O}_C$ can be constructed in $|V(C)|\poly(\log(|V(C)|),\eps ) = |V(C)|\poly(\log(n),\eps)$ time. Thus, by \Cref{eq:cover-size}, the total running time to construct all oracles $\mathcal{O}_C$ is $O(n\poly(\log(n),\eps))$ as claimed.
\end{proof}

We are now ready to prove \Cref{thm:main} that we restate below for convenience.

\Main*
\begin{proof} By scaling edge weights, we assume that the minimum distance is $1$. For each $i = 0,1,2,\ldots$, we denote $r_i = 2^i$. Let $G_i$ be obtained from $G$ by contracting every edge of weight at most $(r_i\eps)/n$. Observe that, for every pair $(u,v)$ such that $d_G(u,v) \in [r_i, 2r_i]$, we have:
	\begin{equation}\label{eq:dG-vs-dGi}
		d_G(u,v) - \eps r_i\leq d_{G_i}(u,v)\leq d_G(u,v).
	\end{equation}
	This is because we contract at most $n$ edges of weight at most $r_i/(\eps n)$ each and, hence, the distance loss due to the contraction is at most $n \cdot (r_i\eps)/n \leq \eps r_i $. Furthermore, by construction, each edge $e\in G$ belongs to at most $O(\log n)$ graphs $G_i$; it follows that:
	\begin{equation}\label{eq:Gi-size}
		\sum_{i\geq 0} |E(G_i)| = O(n\log n)
	\end{equation} 
	For each subgraph $G_i$, we apply \Cref{lm:oracle-interval} to construct a distance oracle $\mathcal{O}_{G_i}$. 
	
 	Next, we construct a $2$-approximate distance oracle $\mathcal{O}_2$ with $O(n\log n)$ space and $O(1)$ query time; such an oracle can be constructed in $O(n\log^{3}(n))$ time by applying the construction of Thorup~\cite{Thorup04} and Klein~\cite{Klein02} with $\eps = 1$.
	
	Our final oracle, denoted by $\mathcal{O}$, consists of $\mathcal{O}_2$ and all oracles $\{\mathcal{O}_{G_i}\}_{i\geq 0}$. 
	
	To query $\mathcal{O}$ given two vertices $u$ and $v$, first we query $\mathcal{O}_2$ to get a 2-approximation of $d_G(u,v)$, denoted by $\mathcal{O_2}(u,v)$. Then we compute a set of 3  indices $I_{uv} = \{i_0-2,i_0-1, i_0\}$ with $i_0 = \lfloor  \log_2(\mathcal{O}_2(u,v))\rfloor$. Finally, for each index $j \in I_{uv}$, we query the oracle $\mathcal{O}_{G_j}$ and return:
	\begin{equation}\label{eq:returned-dist}
		\mathcal{O}(u,v) = \min_{j \in I_{uv}}\{\mathcal{O}_{G_j}(u,v) + \eps r_j\}
	\end{equation}
	
	We now bound the stretch of $\mathcal{O}$.  Let $i_{uv}$ be such that $d_G(u,v)\in [r_{i_{uv}}, 2r_{i_{uv}})$.  This means that if we query the oracle $\mathcal{O}_{G_{i_{uv}}}$, by \Cref{lm:oracle-interval}, the returned distance $\mathcal{O}_{G_{i_{uv}}}(u,v) $ satisfies:
	\begin{equation*}
		d_{G_{i_{uv}}}(u,v)\leq \mathcal{O}_{G_{i_{uv}}}(u,v) \leq (1+\eps)d_{G_{i_{uv}}}(u,v)
	\end{equation*}
	Thus, from \Cref{eq:dG-vs-dGi} and the fact that $d_G(u,v)\geq r_{i_{uv}}$, we have:
	\begin{equation*}
		\begin{split}
			&d_G(u,v)\leq d_{G_{i_{uv}}}(u,v) +  \eps r_{uv} \leq \mathcal{O}_{G_{i_{uv}}}(u,v) + \eps r_{uv} \\
			&\mathcal{O}_{G_{i_{uv}}}(u,v) + \eps r_{uv} \leq (1+\eps)d_{G_{i_{uv}}}(u,v) + \eps r_{uv} \leq (1+2\eps)d_{G}(u,v),
		\end{split}
	\end{equation*}
	implying that the (multiplicative) stretch of $\mathcal{O}$ is $(1+2\eps)$; by scaling $\eps$, we get back stretch $(1+\eps)$.
	
	We now bound the space and query time of $\mathcal{O}$; we consider two regimes in \Cref{lm:oracle-interval} that we use to construct $\mathcal{O}_i$. 
	\begin{enumerate}
		\item \textbf{Regime (1)~.} Since the query time of $\mathcal{O}_2$ is $O(1)$, $I_{uv}$ can be computed in $O(1)$ time. Since $|I_{uv}|= 3$ and the query time of each $\mathcal{O}_{G_i}$ is $\log^{2+o(1)}(1/\eps)$, the total query time is $\log^{2+o(1)}(1/\eps)$. The space of $\mathcal{O}_2 = O(n\log n)$ and the total space of all $\{\mathcal{O}_{G_i}\}_{i\geq 0}$, by \Cref{lm:oracle-interval}, is:
		\begin{equation*}
			 \sum_{i\geq 0} O(|V(G_i)|((1/\eps)^{o(1)} + \log^*(|V(G_i)|))) = O(n\log(n) ((1/\eps)^{o(1)} + \log^*n))
		\end{equation*}
		by \Cref{eq:Gi-size}. This implies the claimed space bound.

		\item \textbf{Regime (2)~.} In this regime, the query time of each $\mathcal{O}_{G_i}$ is $(1/\eps)^{o(1)}$ which is also the total query time. The total space of all $\{\mathcal{O}_{G_i}\}_{i\geq 0}$, by \Cref{lm:oracle-interval} is:
		\begin{equation*}
			\sum_{i\geq 0} O(|V(G_i)|(\log^{2+o(1)}(1/\eps) + \log^*(|V(G_i)|))) = O(n\log(n) (\log^{2+o(1)}(1/\eps)  + \log^*n))
		\end{equation*}
		by \Cref{eq:Gi-size}, as desired.
	\end{enumerate}

	For the construction time, recall that the construction time of $\mathcal{O}_2$  is $O(n\log^3(n))$. The construction time of each $\mathcal{O}_{G_i}$ is $|V(G_i)|\poly(\log(|V(G_i)|),1/\eps)  = |V(G_i)| \poly(\log(n),1/\eps)$. By \Cref{eq:Gi-size}, the total construction time of $\mathcal{O}$ is $n\poly(\log(n),1/\eps)$.
\end{proof}

\paragraph{Acknowledgement.~} This work is supported by the National Science Foundation under Grant No. CCF-2121952. We thank Christian Wulff-Nilsen for many helpful conversations.
	\bibliographystyle{alphaurlinit}
	\bibliography{spanner}
	
	\pagebreak
	\appendix

\end{document}